\providecommand{\tabularnewline}{\\}
\theoremstyle{plain}
\newtheorem{thm}{\protect\theoremname}
\theoremstyle{definition}
\newtheorem{defn}[thm]{\protect\definitionname}
\theoremstyle{remark}
\newtheorem{rem}[thm]{\protect\remarkname}
\theoremstyle{plain}
\newtheorem{prop}[thm]{\protect\propositionname}
\theoremstyle{plain}
\newtheorem{cor}[thm]{\protect\corollaryname}
\theoremstyle{plain}
\newtheorem{lem}[thm]{\protect\lemmaname}
\theoremstyle{plain}
\newtheorem{lyxalgorithm}[thm]{\protect\algorithmname}
\DeclareMathOperator{\Tr}{Tr}
\DeclareMathOperator{\id}{id}
\DeclareMathOperator{\imagem}{im}
\DeclareMathOperator{\adaptivedisc}{ad}
\DeclareMathOperator{\supp}{supp}
\numberwithin{equation}{section}
\providecommand{\algorithmname}{Algorithm}
\providecommand{\corollaryname}{Corollary}
\providecommand{\definitionname}{Definition}
\providecommand{\lemmaname}{Lemma}
\providecommand{\propositionname}{Proposition}
\providecommand{\remarkname}{Remark}
\providecommand{\theoremname}{Theorem}
\begin{document}
\title{Query complexities of quantum channel discrimination and estimation:
A unified approach}
\author{Zixin Huang\thanks{School of Science, STEM College, RMIT University, Melbourne, VIC 3000 Australia}
\and Johannes Jakob Meyer\thanks{Dahlem Center for Complex Quantum Systems, Freie Universit\"at Berlin, 14195 Berlin, Germany}
\and Theshani Nuradha\thanks{Department of Mathematics and
Illinois Quantum Information Science and Technology (IQUIST) Center,
University of Illinois Urbana-Champaign, Urbana, IL 61801, USA} \and Mark M. Wilde\thanks{School of Electrical and Computer Engineering,
Cornell University, Ithaca, New York 14850, USA}}
\maketitle
\begin{abstract}
The goal of quantum channel discrimination and estimation is to determine
the identity of an unknown channel 
from a discrete or continuous set, respectively.
The query complexity of these tasks is equal to the minimum
number of times one must call an unknown channel 
to identify it 
within a desired threshold on the
error probability. In this paper, we establish lower bounds on the
query complexities of channel discrimination and estimation, in both
the parallel and adaptive access models. 
We do so by establishing
new or applying known upper bounds on the squared Bures distance and
symmetric logarithmic derivative Fisher information of channels. 
Phrasing our statements and proofs in terms of isometric extensions of quantum channels allows us to give conceptually simple proofs for both novel and known bounds. We also provide alternative proofs for several established results in an effort to present a consistent and unified framework for quantum channel discrimination and estimation, which we believe will be helpful in addressing future questions in the field. 
\end{abstract}
\tableofcontents{}

\section{Introduction}

\subsection{Background}

One of the main aims of quantum information theory~\cite{Wilde2017,Hayashi2017,Watrous2018,Holevo2019,Khatri2024}
is to understand the distinction between the classical and quantum
theories of information. Non-classical effects like superposition
and entanglement lead to possibilities exceeding what can be achieved
in classical information theory alone, with protocols like teleportation
\cite{Bennett1993}, super-dense coding~\cite{Bennett1992}, and quantum
key distribution~\cite{Bennett1984,Ekert1991} being key examples.

Beyond these celebrated protocols, one can also find non-classical
features in the tasks of quantum channel discrimination~\cite{Sacchi2005,Chiribella2008}
and estimation~\cite{Fujiwara2001,Fujiwara2008}. These tasks involve
an unknown channel being selected from a predetermined set,
and an agent can query the unknown channel multiple times in order
to guess or estimate its identity. There are at least two access models
for querying the unknown channel, which include the parallel and adaptive
access models, as depicted in Figure~\ref{fig:parallel-adaptive-ch-disc-est}.
In channel discrimination, the unknown channel is selected from a
finite set, whereas in channel estimation, the unknown channel is
selected from a continuous set. As a particularly striking distinction
between classical and quantum capabilities in these settings, even
if two unitary channels are not perfectly distinguishable with a single
query, they always become perfectly distinguishable with a finite
number of queries~\cite{Acin2001}. Additionally, if a unitary channel
is selected from a smooth family, one can estimate it
with a precision that goes beyond what one can achieve in the classical
setting, dubbed the Heisenberg limit~\cite{Giovannetti2004}.

The plethora of rich quantum phenomena that can occur in channel discrimination
and estimation has motivated many researchers to pursue and understand
this setting for over two decades, so that channel discrimination
\cite{Acin2001,Duan2007,Chiribella2008,Duan2009,Piani2009,Hayashi2009,Matthews2010,Harrow2010,Cooney2016,Wang2019,Wilde2020,Katariya2021a,Salek2022}
and channel estimation~\cite{Fujiwara2001,Sarovar2006,Ji2008,Fujiwara2008,Escher2011,Hayashi2011,DemkowiczDobrzanski2012,Kolodynski2013,Demkowicz2014,Yuan2017,Katariya2021,Zhou2021,Kurdzialek2023}
now constitute their own subfields within quantum information science
research. Additionally, it is understood that there are strong links
between the two domains~\cite{Yuan2017,Katariya2021,Meyer2025}, given
that, roughly speaking, channel estimation can be viewed as discriminating
nearby channels.

\begin{figure}
\begin{centering}
\includegraphics[width=0.65\textwidth]{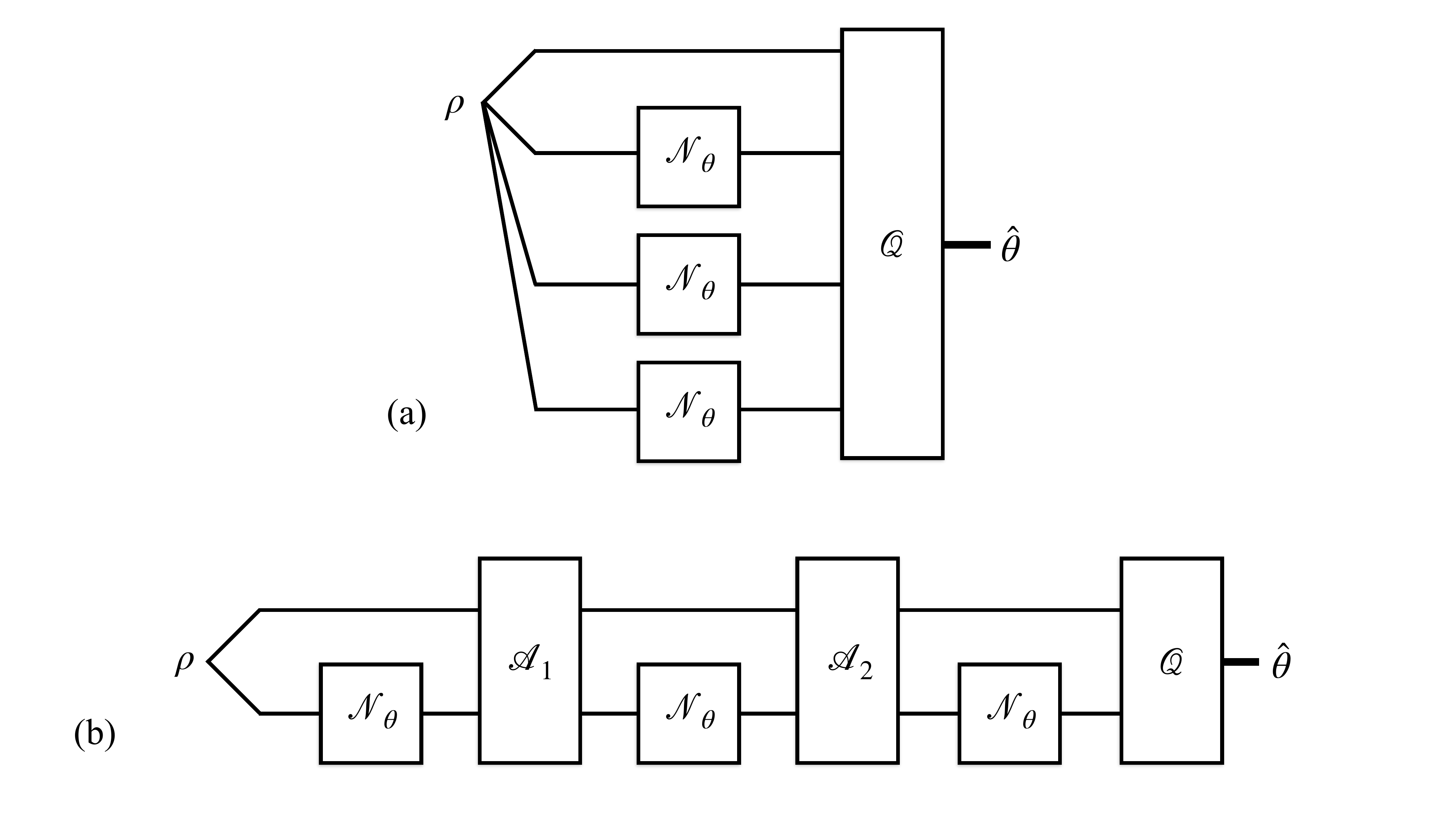}
\par\end{centering}
\caption{Depiction of (a) parallel and (b) adaptive strategies for quantum
channel discrimination and estimation. In both cases, the unknown channel is queried $n$ times, where $n=3$ in the figure. In discrimination, $\theta \in \{ 1,2\}$ is from a discrete set, and the goal is for the guess $\hat{\theta}$ to equal $\theta$. In estimation, $\theta \in \Theta \subseteq \mathbb{R}$ is from a continuous set, and the goal is for the guess $\hat{\theta}$ to approximate $\theta$ within some tolerance. 
In each protocol, $\mathcal{Q}$ is a measurement performed in order
to determine the value of $\theta$. In the case of channel discrimination,
$\mathcal{Q}$ is a measurement with two outcomes, while in the case
of channel estimation, $\mathcal{Q}$ is a measurement with outcomes
in $\Theta$. 
}
\label{fig:parallel-adaptive-ch-disc-est}
\end{figure}

\subsection{Summary of results}

In this paper, our main original contributions consist of lower bounds
on the error probabilities and query complexities of channel discrimination
and estimation. These lower bounds establish fundamental limitations
for these tasks, i.e., impossibility statements, and thus, they can
be understood as physical limitations placed on any protocol for channel
discrimination or estimation, akin to the second law of thermodynamics,
the uncertainty principle, or channel capacity theorems. More precisely,
the query complexities of quantum channel discrimination and estimation
are equal to the number of queries of an unknown channel that are
required to determine or estimate its identity, respectively, up to
a desired error probability. One can think of query complexity, roughly,
as the amount of time one has to wait in order to obtain a desired
error probability, and as such, this notion links information theory
and complexity theory in a nontrivial way, with query complexity often
being used to demonstrate optimality of various algorithms~\cite{Kawachi2019,Huang2022}.

Although our results here build on many prior results in the domains
of channel discrimination and estimation~\cite{Fujiwara2008,DemkowiczDobrzanski2012,Kolodynski2013,Demkowicz2014,Yuan2017,Zhou2021,Kurdzialek2023},
the lower bounds on query complexities are original to our paper.
We should note that the notion of query complexity of channel estimation
was introduced recently
in~\cite[Section~XI]{Meyer2025} (see also~\cite{Haah2023} for work on query complexity of unitary channel estimation); however,
the main emphasis of that paper was on establishing a framework for
quantum estimation theory in the finite-sample regime, while leaving
basic questions regarding channel estimation largely open. Thus, to
the best of our knowledge, our lower bounds on the query complexity
of channel estimation are the first of their kind in the literature.
Furthermore, a number of papers have appeared by now on query complexity
of quantum channel discrimination~\cite{Chiribella2013,Kawachi2019,Ito2021,Huang2022,Rossi2022,Li2025,Nuradha2025},
but none of them have adopted the particular approach taken here in
establishing lower bounds.

On the way to establishing these results, we often revisit well known
findings in the literature on channel discrimination and estimation.
The general approach that we adopt here is to be detailed and comprehensive,
even if this amounts to revisiting existing statements in the literature.
We believe that this approach has intrinsic value, leading to a consistent
and transparent framework and presentation style that should be widely
accessible to an audience familiar with the foundations of quantum
information theory. We should also note that the technical approach
we adopt here is  different from the main approach usually adopted in papers on channel
discrimination and estimation: indeed, we conduct all of our mathematical
developments for these tasks in terms of isometric extensions of channels.
In comparison, all prior papers closely related to ours use the Kraus
representation of quantum channels for their mathematical developments
\cite{Fujiwara2008,DemkowiczDobrzanski2012,Kolodynski2013,Demkowicz2014,Yuan2017,Zhou2021,Kurdzialek2023}.
We think an advantage of this approach is that various proofs become
easier to follow and have common points, relying on basic properties
of isometries and the spectral norm. Additionally, it seems more natural
to work with an isometric extension as the key object representing
a channel, rather than with Kraus operators, similar to how one manipulates
vectors and matrices in linear algebra, rather than manipulating the
individual components of the vectors and matrices.

To summarize our results succinctly, we first either recall or establish
upper bounds on the squared Bures distance 
in both the parallel
and adaptive settings of channel discrimination
(see
Theorems~\ref{thm:parallel-bures-dist-channels} and~\ref{thm:seq-bures-distance-channels}),
and we do the same for the symmetric logarithmic derivative (SLD)
Fisher information for channel estimation (see Theorems~\ref{thm:upper-bound-SLD-Fisher-parallel}
and~\ref{thm:upper-bound-SLD-Fisher-adaptive}). After doing so, we
then establish lower bounds on the error probabilities of channel
discrimination and estimation in terms of the squared Bures distance
and SLD Fisher information (see Corollaries~\ref{cor:error-prob-bounds-ch-disc},
\ref{cor:minimax-error-ch-est}, and~\ref{cor:minimax-error-query-comp-SLD-Fisher}),
following a line of thinking adopted in~\cite{Cheng2025}. By applying
the definitions of query complexities for channel estimation and discrimination,
we then convert these lower bounds on error probabilities to lower
bounds on the query complexities for these tasks (see Corollaries
\ref{cor:query-comp-lower-bnd-ch-disc},~\ref{cor:query-comp-ch-est},
and~\ref{cor:minimax-error-query-comp-SLD-Fisher}). Finally, we provide
various numerical methods for efficiently computing lower bounds on
the query complexities of channel discrimination and estimation, in
both the parallel and adaptive access models.

Similar to the developments of~\cite{Yuan2017}, our paper also presents
a unified approach to establishing lower bounds on the query complexities
of channel discrimination and estimation, making use of a key fact
hinted at in~\cite[Section~XI]{Meyer2025} and made more concrete
in Section~\ref{subsec:Connecting-channel-est-disc}: lower bounds
on the query complexities of channel discrimination imply lower bounds
on the query complexities of channel estimation. Indeed, channel estimation
can be viewed as attempting to discriminate two nearby channels, and
so one would expect the ability to estimate should be limited by the
ability to discriminate.

\medskip{}

\textit{Note on independent work}---While finalizing our manuscript,
we noticed that the concurrent and independent paper~\cite{Sieniawski2025}
appeared. Similar to the theme of our paper and that of~\cite{Yuan2017},
Ref.~\cite{Sieniawski2025} presents a unified approach to quantum
channel discrimination and estimation. More specifically, the bound
presented in Eqs.~(18)--(20) therein are similar in spirit to our
bound from Corollary~\ref{cor:error-prob-bounds-ch-disc}.

\subsection{Paper organization}

The rest of our paper is organized as follows:
\begin{itemize}
\item Section~\ref{sec:Preliminaries} reviews basic concepts and establishes
some notation used in the rest of the paper.
\item Section~\ref{sec:Quantum-channel-disc-est} reviews the parallel and
adaptive settings of channel discrimination and estimation, defining
error probabilities and query complexities in these settings.
\item Section~\ref{sec:Fidelity-and-Bures} reviews basic aspects of fidelity
and Bures distance of states and channels. Although the fidelity and
Bures distance were defined some time ago now~\cite{Bures1969,Uhlmann1976}
and have been studied abundantly in the literature, the particular
way in which we present these quantities is useful for proving the
bounds on channel discrimination and estimation, as presented in later
sections.
\item Section~\ref{sec:SLD-Fisher-information-states-chs} reviews the SLD
Fisher information of states and channels. In particular, Theorems
\ref{thm:SLD-Fisher-states} and~\ref{thm:SLD-Fisher-channels} by
and large date back to~\cite[Theorems~1 and 4]{Fujiwara2008}. However,
some of the alternative expressions for SLD Fisher information of
states, as presented in Theorem~\ref{thm:SLD-Fisher-states}, appear
to be novel, and our proof of Theorem~\ref{thm:SLD-Fisher-states}
is distinct from the proof of~\cite[Theorem~1]{Fujiwara2008}. A similar
statement applies to Theorem~\ref{thm:SLD-Fisher-channels}.
\item Section~\ref{sec:Upper-bounds-Bures} presents upper bounds on the
squared Bures distance of channels in both the parallel and adaptive
settings of channel discrimination. The upper bound for the parallel
setting (Theorem~\ref{thm:parallel-bures-dist-channels}) was already
established in~\cite[Eq.~(23)]{Yuan2017}, but the upper bound for
the adaptive setting (Theorem~\ref{thm:seq-bures-distance-channels})
appears to be novel. Regardless, our proofs for these theorems are
distinct from that given for~\cite[Eq.~(23)]{Yuan2017} and are consistent
with the aforementioned theme: they are established in terms of isometric
extensions of quantum channels and rely on basic properties of isometries
and the spectral norm.
\item Section~\ref{sec:Upper-bounds-SLD-Fisher} reviews upper bounds on
the SLD Fisher information of channels in the parallel and adaptive
settings of channel estimation. The upper bound in the parallel setting
(Theorem~\ref{thm:upper-bound-SLD-Fisher-parallel}) was already established
in~\cite[Section~4.1]{Fujiwara2008}, and the upper bound in the adaptive
setting (Theorem~\ref{thm:upper-bound-SLD-Fisher-adaptive}) was established
somewhat recently in~\cite[Eq.~(11)]{Kurdzialek2023}, improving upon
an earlier bound from~\cite[Eq.~(9)]{Demkowicz2014}. Interestingly,
our proofs for Theorems~\ref{thm:upper-bound-SLD-Fisher-parallel}
and~\ref{thm:upper-bound-SLD-Fisher-adaptive} mirror those given
for Theorems~\ref{thm:parallel-bures-dist-channels} and~\ref{thm:seq-bures-distance-channels},
respectively, making use of basic properties of isometries and spectral
norms, again consistent with our overall theme. Furthermore, our proofs
for Theorems~\ref{thm:parallel-bures-dist-channels} and~\ref{thm:seq-bures-distance-channels}
can be understood as replacing derivatives with finite differences,
when compared to the proofs for Theorems~\ref{thm:upper-bound-SLD-Fisher-parallel}
and~\ref{thm:upper-bound-SLD-Fisher-adaptive}, consistent with our
other theme of having a unified approach to channel discrimination
and estimation.
\item Section~\ref{sec:Applications-to-query-comp} presents our main results:
lower bounds on the error probabilities and query complexities of
channel discrimination and estimation. Ultimately, with the framework
in place at this point, the results of this section follow somewhat
directly from definitions, reasoning similar to that in~\cite{Cheng2025,Nuradha2025},
and the bounds from Theorems~\ref{thm:parallel-bures-dist-channels},
\ref{thm:seq-bures-distance-channels},~\ref{thm:upper-bound-SLD-Fisher-parallel},
and~\ref{thm:upper-bound-SLD-Fisher-adaptive}.
\item Section~\ref{sec:Optimization-of-channel-disc-est} is the final technical
section of our paper, presenting methods for optimizing bounds on
the error probabilities of channel discrimination and estimation by
means of semi-definite optimization, or in some cases, semi-definite
optimization combined with a grid search. These methods are appealing
because they lead to globally optimal numerical solutions of the bounds.
\item We finally conclude in Section~\ref{sec:Conclusion} by providing
a summary of our results and suggestions for future research.
\end{itemize}

\section{Preliminaries}

\label{sec:Preliminaries}In this section, we establish notation and
concepts used throughout the rest of our paper. To begin with, we
use the following notation to denote various sets of interest that
arise throughout:
\begin{center}
\begin{tabular}{|c|c|c|}
\hline 
Symbol & Meaning & Definition\tabularnewline
\hline 
$\mathbb{L}$ & Set of linear operators & \tabularnewline
\hline 
$\mathbb{U}$ & Set of unitary operators & $\mathbb{U}\coloneqq\left\{ U\in\mathbb{L}:U^{\dag}U=UU^{\dag}=I\right\} $\tabularnewline
\hline 
$\mathbb{H}$ & Set of Hermitian operators & $\mathbb{H}\coloneqq\left\{ A\in\mathbb{L}:A=A^{\dag}\right\} $\tabularnewline
\hline 
$\mathbb{D}$ & Set of density operators & $\mathbb{D}\coloneqq\left\{ \rho\in\mathbb{H}:\rho\geq0,\ \Tr[\rho]=1\right\} $\tabularnewline
\hline 
$\mathbb{P}$ & Set of state vectors & $\mathbb{P}\coloneqq\left\{ |\psi\rangle:\left\Vert |\psi\rangle\right\Vert =1\right\} $\tabularnewline
\hline 
$\mathbb{B}$ & Set of contractions & $\mathbb{B}\coloneqq\left\{ W \in \mathbb{L} :\left\Vert W\right\Vert \leq 1\right\} $\tabularnewline
\hline 
\end{tabular}
\par\end{center}

\noindent In using this notation, we leave the underlying Hilbert
space and its dimension implicit, with the idea being that it should
be clear from the context. We also denote an operator $P$ as being
positive semi-definite by means of the notation $P\geq0$, and we
denote it as being positive definite by means of the notation $P>0$.
In defining the set $\mathbb{P}$ of (pure) state vectors, let us
note that $\left\Vert \cdot\right\Vert $ denotes the standard Euclidean
norm for vectors, and we have left the underlying Hilbert space implicit;
however, throughout the paper, we always have $|\psi\rangle\in\mathbb{C}^{d}$
for some $d\in\mathbb{N}$.

An isometry $V$ is an operator satisfying $V^{\dag}V=I$. For an
operator $A$, we denote the spectral norm by $\left\Vert A\right\Vert \coloneqq\sup_{|\psi\rangle\in\mathbb{P}}\left\Vert A|\psi\rangle\right\Vert $.
It obeys the following properties for operators $A$ and $B$ and
an isometry $V$:
\begin{align}
\left\Vert A\right\Vert  & =\left\Vert A^{\dag}\right\Vert =\left\Vert A^{\dag}A\right\Vert ^{\frac{1}{2}},\\
\left\Vert AB\right\Vert  & \leq\left\Vert A\right\Vert \left\Vert B\right\Vert ,\\
\left\Vert A\otimes B\right\Vert  & =\left\Vert A\right\Vert \left\Vert B\right\Vert ,\\
\left\Vert A+B\right\Vert  & \leq\left\Vert A\right\Vert +\left\Vert B\right\Vert ,\\
\left\Vert V\right\Vert  & =1.
\end{align}

We also make extensive use of the Schatten norms, defined for $p\geq1$
as
\begin{equation}
\left\Vert A\right\Vert _{p}\coloneqq\left(\Tr\!\left[\left|A\right|^{p}\right]\right)^{\frac{1}{p}},
\end{equation}
where $\left|A\right|\coloneqq\sqrt{A^{\dag}A}$. Special cases include
the trace norm $\left\Vert A\right\Vert _{1}=\Tr\!\left[\left|A\right|\right]$,
the Hilbert--Schmidt norm $\left\Vert A\right\Vert _{2}=\sqrt{\Tr\!\left[A^{\dag}A\right]}$,
and the spectral norm $\left\Vert A\right\Vert =\left\Vert A\right\Vert _{\infty}$.

When dealing with the SLD Fisher information, we take partial derivatives,
which we denote with the shorthand $\partial_{\theta}\equiv\frac{\partial}{\partial\theta}$.

Finally, we make extensive use of the maximally entangled vector
\begin{equation}
|\Gamma\rangle\coloneqq\sum_{i}|i\rangle\otimes|i\rangle,\label{eq:max-ent-vec-def}
\end{equation}
where $\left\{ |i\rangle\right\} _{i}$ is an orthonormal basis, and
its associated identities:
\begin{align}
\left(A\otimes I\right)|\Gamma\rangle & =\left(I\otimes A^{T}\right)|\Gamma\rangle,\label{eq:transpose-trick}\\
\langle\Gamma|\left(A\otimes I\right)|\Gamma\rangle & =\Tr\!\left[A\right],\label{eq:max-ent-reduce-to-trace}\\
\left\Vert A\right\Vert _{2} & =\left\Vert \left(A\otimes I\right)|\Gamma\rangle\right\Vert ,\label{eq:hilbert-schmidt-euclidean-norm}
\end{align}
which hold for a square operator $A$.

\section{Quantum channel discrimination and estimation}

\label{sec:Quantum-channel-disc-est}

\subsection{Quantum channel discrimination}

In the setting of symmetric binary quantum channel discrimination,
a channel $\mathcal{N}_{1}$ is selected with probability $p\in\left(0,1\right)$
and a channel $\mathcal{N}_{2}$ is selected with probability $q\equiv1-p$.
The discriminator is allowed to query the unknown channel $n\in\mathbb{N}$
times in an attempt to decide which channel was selected. There are
at least two ways for the discriminator to access the channel, called
the parallel and the adaptive access models, as depicted in Figure
\ref{fig:parallel-adaptive-ch-disc-est}.

\subsubsection{Parallel and adaptive settings of channel discrimination}

In the parallel setting of channel discrimination, the discriminator
prepares a state $\rho_{RA_{1}\cdots A_{n}}$, sends the systems $A_{1}\cdots A_{n}$
into $\mathcal{N}_{i}^{\otimes n}$, and performs a measurement on
the reference system $R$ and the channel output systems $B_{1}\cdots B_{n}$.
The minimum error probability is then given by~\cite[Theorem~5.9]{Khatri2024}
\begin{equation}
p_{e,\|}(p,\mathcal{N}_{1},q,\mathcal{N}_{2},n)\coloneqq\frac{1}{2}\left(1-\left\Vert p\mathcal{N}_{1}^{\otimes n}-q\mathcal{N}_{2}^{\otimes n}\right\Vert _{\diamond}\right),\label{eq:error-prob-parallel-ch-disc}
\end{equation}
where the diamond norm of a Hermiticity preserving map $\mathcal{M}$
can be defined (for our purposes here) as follows:
\begin{equation}
\left\Vert \mathcal{M}\right\Vert _{\diamond}\coloneqq\sup_{\psi_{RA}\in\mathbb{P}}\left\Vert \left(\id_{R}\otimes\mathcal{M}\right)\left(\psi_{RA}\right)\right\Vert _{1},
\end{equation}
with the reference system $R$ isomorphic to the channel input system
$A$. 

In the adaptive setting of channel discrimination, the discriminator
prepares an initial state $\rho_{RA}$, feeds system $A$ into the
first query of the unknown channel $\mathcal{N}_{i}$, then acts on
the reference system $R$ and the channel output system with a channel
$\mathcal{A}_{1}$, and iterates this process. Overall, a general
adaptive $n$-query protocol consists of a tuple $\left(\mathcal{A}_{i}\right)_{i=1}^{n-1}$
of channels. Applying the protocol to $n$ queries of the channel
$\mathcal{N}_{i}$ leads to the following channel $\mathcal{P}_{i}^{(n)}$
for $i\in\left\{ 1,2\right\} $:
\begin{equation}
\mathcal{P}_{i}^{(n)}\coloneqq\left(\id\otimes\mathcal{N}_{i}\right)\circ\mathcal{A}_{n-1}\circ\cdots\circ\mathcal{A}_{2}\circ\left(\id\otimes\mathcal{N}_{i}\right)\circ\mathcal{A}_{1}\circ\left(\id\otimes\mathcal{N}_{i}\right).\label{eq:adaptive-protocol-def}
\end{equation}
The minimum error probability in this case is given by
\begin{equation}
p_{e,\adaptivedisc}(p,\mathcal{N}_{1},q,\mathcal{N}_{2},n)\coloneqq\inf_{\left(\mathcal{A}_{i}\right)_{i=1}^{n-1}}\frac{1}{2}\left(1-\left\Vert p\mathcal{P}_{1}^{(n)}-q\mathcal{P}_{2}^{(n)}\right\Vert _{\diamond}\right).\label{eq:error-prob-adaptive-ch-disc}
\end{equation}

Adopting the shorthand
\begin{align}
p_{e,\|} & \equiv p_{e,\|}(p,\mathcal{N}_{1},q,\mathcal{N}_{2},n),\\
p_{e,\adaptivedisc} & \equiv p_{e,\adaptivedisc}(p,\mathcal{N}_{1},q,\mathcal{N}_{2},n),
\end{align}
let us note that the following inequalities hold:
\begin{equation}
p_{e,\adaptivedisc}\leq p_{e,\|}\leq\min\!\left\{ p,q\right\} \leq\frac{1}{2}.\label{eq:basic-ineq-ch-disc}
\end{equation}
The first inequality follows because a parallel channel discrimination
strategy is a special case of an adaptive channel discrimination strategy.
The second inequality is a consequence of~\cite[Theorem~1]{Audenaert2007},
as remarked in~\cite[Eqs.~(D1)--(D3)]{Cheng2025}, and the fact that
$\min\!\left\{ p,q\right\} =\min_{s\in\left[0,1\right]}p^{s}q^{1-s}$. Alternatively, one can understand this inequality as resulting from discarding the channel outputs and simply guessing the channel that has the higher prior probability $p$ or $q$. Discarding the channel in either~\eqref{eq:error-prob-parallel-ch-disc} or~\eqref{eq:error-prob-adaptive-ch-disc} leads to the following expression for the minimum error probability $\frac{1}{2}\left(1-\left\vert p-q\right\vert\right) $, which is equal to $\min\{p,q\}$. 
As such, the following inequalities hold for $p\in\left(0,1\right)$:
\begin{equation}
0\leq\frac{p_{e,\adaptivedisc}\left(1-p_{e,\adaptivedisc}\right)}{pq}\leq\frac{p_{e,\|}\left(1-p_{e,\|}\right)}{pq}\leq1,\label{eq:alt-channel-disc-metric}
\end{equation}
because the function $x\mapsto x\left(1-x\right)$ is monotone increasing
on the interval $\left[0,\frac{1}{2}\right]$. We can also understand
the quantities in~\eqref{eq:alt-channel-disc-metric} as being alternative
measures of the performance of a channel discrimination protocol,
being equal to zero if the channels are perfectly distinguishable
and equal to their maximum value of one if the channels are not distinguishable
at all, so that the best strategy in this latter case is simply to
guess the channel with the higher prior probability $p$ or $q$. 

\subsubsection{Query complexities of channel discrimination}

We can ask about the minimum number of queries needed to achieve an
error probability not exceeding a desired threshold. The resulting
quantity is known as query complexity and is of chief interest in
our paper.
\begin{defn}
\label{def:query-complexity-ch-disc-def}For $\varepsilon\in\left[0,1\right]$,
we define two different notions of query complexity, based on the
parallel and adaptive settings of channel discrimination:
\begin{align}
n_{\|}^{\star}(p,\mathcal{N}_{1},q,\mathcal{N}_{2},\varepsilon) & \coloneqq\inf\left\{ n\in\mathbb{N}:p_{e,\|}(p,\mathcal{N}_{1},q,\mathcal{N}_{2},n)\leq\varepsilon\right\} ,\label{eq:query-complexity-parallel-def}\\
n_{\adaptivedisc}^{\star}(p,\mathcal{N}_{1},q,\mathcal{N}_{2},\varepsilon) & \coloneqq\inf\left\{ n\in\mathbb{N}:p_{e,\adaptivedisc}(p,\mathcal{N}_{1},q,\mathcal{N}_{2},n)\leq\varepsilon\right\} .\label{eq:query-complexity-adaptive-def}
\end{align}
The quantities above are understood to take the value $+\infty$ if the set on the right-hand side is empty.
\end{defn}

Intuitively, query complexities give a sense of how long one has to
wait in order to achieve a desired error probability in the task of
channel discrimination. However, this characterization is somewhat
rough, because the definition of query complexity assumes the ability
to prepare arbitrary states and perform arbitrary measurements in
the parallel setting and it assumes the same ability, in addition
to the ability to perform arbitrary channels in the tuple $\left(\mathcal{A}_{i}\right)_{i=1}^{n-1}$,
in the adaptive setting, thus ignoring the computational complexity
of implementing these operations.

Proposition 7 of~\cite{Nuradha2025} identifies conditions under which
the query complexity of channel discrimination is trivial, i.e., either
equal to one, in which case a desired threshold constraint on the
error probability can be met with just a single query, or $+\infty$,
in which case it is impossible to distinguish the channels by using
a finite number of queries. In this context, see also~\cite[Remark~2]{Cheng2025}.
Here we recall conditions under which the query complexity is equal
to one, while noting from~\eqref{eq:basic-ineq-ch-disc} that two
of the conditions from~\cite{Nuradha2025} and~\cite[Remark~2]{Cheng2025}
can actually be merged into a single, simpler condition:
\begin{rem}[Trivial cases]
\label{rem:trivial-cases-ch-disc-query-comp}If $\varepsilon\geq\min\!\left\{ p,q\right\} $
or if there exists a pure state $\psi_{RA}$ such that
\begin{equation}
\supp\!\left(\left(\id\otimes\mathcal{N}_{1}\right)\left(\psi_{RA}\right)\right)\cap\supp\!\left(\left(\id\otimes\mathcal{N}_{2}\right)\left(\psi_{RA}\right)\right)=\emptyset,
\end{equation}
then
\begin{equation}
n_{\adaptivedisc}^{\star}(p,\mathcal{N}_{1},q,\mathcal{N}_{2},\varepsilon)=n_{\|}^{\star}(p,\mathcal{N}_{1},q,\mathcal{N}_{2},\varepsilon)=1.
\end{equation}
\end{rem}

One of the main goals of our paper is to establish lower bounds on
the query complexities in~\eqref{eq:query-complexity-parallel-def}
and~\eqref{eq:query-complexity-adaptive-def}.

\subsection{Quantum channel estimation}

In the setting of quantum channel estimation, the goal is to estimate
a parameter value $\theta\in\Theta$ by querying an unknown channel
$\mathcal{N}_{\theta}$ selected from a parameterized family $\left(\mathcal{N}_{\theta}\right)_{\theta\in\Theta}$
of channels, where $\Theta\in\mathbb{R}$ is an open set. The estimator
is allowed to query the unknown channel $n$ times in order to estimate
the value of $\theta$. As with channel discrimination, there are
at least two different access models, which include the parallel and
adaptive access models, as depicted in Figure~\ref{fig:parallel-adaptive-ch-disc-est}.
In what follows, we detail these settings and recall definitions of
the minimax error probabilities and query complexities of channel
estimation, following the framework of~\cite[Section~XI]{Meyer2025}.
Another main goal of our paper is to establish lower bounds on the
query complexities of channel estimation, in both the parallel and
adaptive settings.

\subsubsection{Parallel setting}

\label{subsec:Parallel-setting-estimation}Let us begin by considering
the parallel setting of channel estimation. An $n$-query parallel
estimation protocol begins with the estimator preparing a state $\rho_{RA_{1}\cdots A_{n}}$,
feeding the systems $A_{1}\cdots A_{n}$ of this state through the
unknown channel $\mathcal{N}_{\theta}^{\otimes n}$, and performing
a measurement 
$\smash{\mathcal{Q}^{\left(n\right)}\coloneqq(Q_{\hat{\theta}}^{\left(n\right)})_{\hat{\theta}\in\Theta}}$ 
on the reference system $R$ and the channel output systems $B_{1}\cdots B_{n}$. We assume that $\mathcal{Q}^{\left(n\right)}$ forms a positive operator-valued measure (POVM) on $\Theta$ with respect to the Lebesgue measure.
The probability of outputting the estimate $\hat{\theta}$, given
that $\theta$ is the true parameter value, is as follows:
\begin{equation}
\Tr\!\left[Q_{\hat{\theta}}^{\left(n\right)}\left(\id_{R}\otimes\mathcal{N}_{\theta}^{\otimes n}\right)\left(\rho_{RA_{1}\cdots A_{n}}\right)\right],
\end{equation}
and the goal is for $\hat{\theta}$ to be as close to $\theta$ as
possible.

Given a tolerance $\delta>0$, the success probability, for identifying
an unknown parameter $\theta\in\Theta$ in the family $\mathcal{F}\equiv\left(\mathcal{N}_{\theta}\right)_{\theta\in\Theta}$
by means of a parallel estimation protocol $\mathcal{P}^{\left(n\right)}\equiv\left(\rho_{RA_{1}\cdots A_{n}},\mathcal{Q}^{\left(n\right)}\right)$,
is defined as
\begin{equation}
p_{s,\|}(\delta,\mathcal{F},\mathcal{P}^{\left(n\right)},\theta)\coloneqq\int d\hat{\theta}\:\Tr\!\left[Q_{\hat{\theta}}^{\left(n\right)}\left(\id_{R}\otimes\mathcal{N}_{\theta}^{\otimes n}\right)\left(\rho_{RA_{1}\cdots A_{n}}\right)\right]w_{\delta}(\hat{\theta}-\theta),
\end{equation}
where $w_{\delta}(\theta')$ is a window function, defined as
\begin{equation}
w_{\delta}(\theta')\coloneqq\begin{cases}
1 & :\left|\theta'\right|\leq\delta\\
0 & :\left|\theta'\right|>\delta
\end{cases}.\label{eq:window-function-def}
\end{equation}
Intuitively, the success probability $p_{s,\|}(\delta,\mathcal{F},\mathcal{P}^{\left(n\right)},\theta)$
quantifies the probability that the measurement $\mathcal{Q}$ outputs
an estimate $\hat{\theta}$ that is $\delta$-close to $\theta$.
The error probability is then defined as
\begin{align}
p_{e,\|}(\delta,\mathcal{F},\mathcal{P}^{\left(n\right)},\theta) & \coloneqq1-p_{s,\|}(\delta,\mathcal{F},\mathcal{P}^{\left(n\right)},\theta)\\
 & =\int_{\hat{\theta}:\left|\hat{\theta}-\theta\right|>\delta}d\hat{\theta}\:\Tr\!\left[Q_{\hat{\theta}}^{\left(n\right)}\left(\id_{R}\otimes\mathcal{N}_{\theta}^{\otimes n}\right)\left(\rho_{RA_{1}\cdots A_{n}}\right)\right].
\end{align}

The worst-case error probability, for identifying an unknown parameter
$\theta\in\Theta$ in the family $\mathcal{F}$ by using a parallel
protocol $\mathcal{P}^{\left(n\right)}$, is defined as
\begin{align}
p_{e,\|}(\delta,\mathcal{F},\mathcal{P}^{\left(n\right)}) & \coloneqq\sup_{\theta\in\Theta}p_{e,\|}(\delta,\mathcal{F},\mathcal{P}^{\left(n\right)},\theta).
\end{align}
Minimizing the worst-case error probability over every possible parallel
protocol $\mathcal{P}^{\left(n\right)}$ leads to the following key
figure of merit:
\begin{defn}[Minimax error probability for parallel channel estimation]
\label{def:minimax-parallel-est}Given a tolerance $\delta>0$, a
parameterized family $\mathcal{F}\equiv\left(\mathcal{N}_{\theta}\right)_{\theta\in\Theta}$
of quantum channels, and $n\in\mathbb{N}$, the minimax error probability
is defined as
\begin{align}
p_{e,\|}(\delta,\mathcal{F},n) & \coloneqq\inf_{\mathcal{P}^{\left(n\right)}}p_{e,\|}(\delta,\mathcal{F},\mathcal{P}^{\left(n\right)})\\
 & =\inf_{\substack{\rho_{RA_{1}\cdots A_{n}},\\
\left(Q_{\hat{\theta}}^{\left(n\right)}\right)_{\hat{\theta}\in\Theta}
}
}\sup_{\theta\in\Theta}\left\{ \int_{\hat{\theta}:\left|\hat{\theta}-\theta\right|>\delta}d\hat{\theta}\:\Tr\!\left[Q_{\hat{\theta}}^{\left(n\right)}\left(\id_{R}\otimes\mathcal{N}_{\theta}^{\otimes n}\right)\left(\rho_{RA_{1}\cdots A_{n}}\right)\right]\right\} .
\end{align}
\end{defn}

\begin{defn}[Query complexity of parallel channel estimation]
\label{def:query-comp-parallel-est}For all $\varepsilon\in\left[0,1\right]$,
$\delta>0$, and every parameterized family $\mathcal{F}\equiv\left(\mathcal{N}_{\theta}\right)_{\theta\in\Theta}$
of quantum channels, the minimax query complexity of parallel channel
estimation is defined as
\begin{equation}
n_{\|}^{\star}(\varepsilon,\delta,\mathcal{F})\coloneqq\inf\left\{ n\in\mathbb{N}:p_{e,\|}(\delta,\mathcal{F},n)\leq\varepsilon\right\} .
\end{equation}
The quantity above is understood to equal $+\infty$ if the set on the right-hand side is empty.
\end{defn}

As in the case of channel discrimination, the query complexity of
channel estimation intuitively captures how long one has to wait,
or how many queries one has to make, in order to reach a desired error
probability in the task of channel estimation.

\subsubsection{Adaptive setting}

\label{subsec:Adaptive-setting-estimation}We now establish similar
definitions for the adaptive setting of channel estimation. An $n$-query
adaptive channel estimation protocol begins with the estimator preparing
an initial state $\rho_{RA}$, feeding system $A$ into the first
query of the unknown channel $\mathcal{N}_{\theta}$, then acting
on the reference system $R$ and the channel output system with a
channel $\mathcal{A}_{1}$, and iterating this process. Overall, a
general adaptive $n$-query protocol consists of a tuple $\smash{\left(\mathcal{A}_{i}\right)_{i=1}^{n-1}}$
of channels. Applying the protocol to $n$ queries of the unknown
channel $\mathcal{N}_{\theta}$ leads to the following channel $\mathcal{A}_{\theta}^{(n)}$
for $\theta\in\Theta$:
\begin{equation}
\mathcal{A}_{\theta}^{(n)}\coloneqq\left(\id\otimes\mathcal{N}_{\theta}\right)\circ\mathcal{A}_{n-1}\circ\cdots\circ\mathcal{A}_{2}\circ\left(\id\otimes\mathcal{N}_{\theta}\right)\circ\mathcal{A}_{1}\circ\left(\id\otimes\mathcal{N}_{\theta}\right).\label{eq:adaptive-protocol-def-1}
\end{equation}
At the end, a measurement $\smash{\mathcal{Q}^{\left(n\right)}\coloneqq(Q_{\hat{\theta}}^{\left(n\right)})_{\hat{\theta}\in\Theta}}$ 
is performed on the final state. The probability of outputting the
estimate $\hat{\theta}$, given that $\theta$ is the true parameter
value, is as follows:
\begin{equation}
\Tr\!\left[Q_{\hat{\theta}}^{\left(n\right)}\mathcal{A}_{\theta}^{(n)}\!\left(\rho_{RA}\right)\right],
\end{equation}
and the goal is for $\hat{\theta}$ to be as close to $\theta$ as
possible.

Given a tolerance $\delta>0$, the error probability, for identifying
an unknown parameter $\theta\in\Theta$ in the family $\mathcal{F}\equiv\left(\mathcal{N}_{\theta}\right)_{\theta\in\Theta}$
by means of an adaptive estimation protocol $\mathcal{E}^{\left(n\right)}\equiv\left(\rho_{RA},\left(\mathcal{A}_{i}\right)_{i=1}^{n-1},\mathcal{Q}^{\left(n\right)}\right)$,
is defined as
\begin{align}
p_{e,\adaptivedisc}(\delta,\mathcal{F},\mathcal{E}^{\left(n\right)},\theta) & \coloneqq1-\int d\hat{\theta}\:\Tr\!\left[Q_{\hat{\theta}}^{\left(n\right)}\mathcal{A}_{\theta}^{(n)}\!\left(\rho_{RA}\right)\right]w_{\delta}(\hat{\theta}-\theta)\\
 & =\int_{\hat{\theta}:\left|\hat{\theta}-\theta\right|>\delta}d\hat{\theta}\:\Tr\!\left[Q_{\hat{\theta}}^{\left(n\right)}\mathcal{A}_{\theta}^{(n)}\!\left(\rho_{RA}\right)\right],
\end{align}
where $w_{\delta}(\theta')$ is the window function defined in~\eqref{eq:window-function-def}. 

The worst-case error probability, for identifying an unknown parameter
$\theta\in\Theta$ in the family $\mathcal{F}$ by using an adaptive
protocol $\mathcal{E}^{\left(n\right)}$, is defined as
\begin{align}
p_{e,\adaptivedisc}(\delta,\mathcal{F},\mathcal{E}^{\left(n\right)}) & \coloneqq\sup_{\theta\in\Theta}p_{e,\adaptivedisc}(\delta,\mathcal{F},\mathcal{E}^{\left(n\right)},\theta).
\end{align}
Minimizing the worst-case error probability over every possible adaptive
protocol $\mathcal{E}^{\left(n\right)}$ leads to the following key
figure of merit:
\begin{defn}[Minimax error probability for adaptive channel estimation]
\label{def:minimax-adaptive-est}Given a tolerance $\delta>0$, a
parameterized family $\mathcal{F}\equiv\left(\mathcal{N}_{\theta}\right)_{\theta\in\Theta}$
of quantum channels, and $n\in\mathbb{N}$, the minimax error probability
is defined as
\begin{align}
p_{e,\adaptivedisc}(\delta,\mathcal{F},n) & \coloneqq\inf_{\mathcal{E}^{\left(n\right)}}p_{e,\adaptivedisc}(\delta,\mathcal{F},\mathcal{E}^{\left(n\right)})\\
 & =\inf_{\substack{\rho_{RA},\left(\mathcal{A}_{i}\right)_{i=1}^{n-1},\\
\left(Q_{\hat{\theta}}^{\left(n\right)}\right)_{\hat{\theta}\in\Theta}
}
}\sup_{\theta\in\Theta}\left\{ \int_{\hat{\theta}:\left|\hat{\theta}-\theta\right|>\delta}d\hat{\theta}\:\Tr\!\left[Q_{\hat{\theta}}^{\left(n\right)}\mathcal{A}_{\theta}^{(n)}\!\left(\rho_{RA}\right)\right]\right\} .
\end{align}
\end{defn}

\begin{defn}[Query complexity of adaptive channel estimation]
\label{def:query-comp-adaptive-est}For all $\varepsilon\in\left[0,1\right]$,
$\delta>0$, and every parameterized family $\mathcal{F}\equiv\left(\mathcal{N}_{\theta}\right)_{\theta\in\Theta}$
of quantum channels, the minimax query complexity of adaptive channel
estimation is defined as
\begin{equation}
n_{\adaptivedisc}^{\star}(\varepsilon,\delta,\mathcal{F})\coloneqq\inf\left\{ n\in\mathbb{N}:p_{e,\adaptivedisc}(\delta,\mathcal{F},n)\leq\varepsilon\right\} .
\end{equation}
The quantity above is understood to equal $+\infty$ if the set on the right-hand side is empty.
\end{defn}

\subsection{Connecting channel estimation and discrimination}

\label{subsec:Connecting-channel-est-disc}Ref.~\cite[Corollary~7]{Meyer2025}
established a non-trivial link between channel estimation and discrimination,
particularly relevant for the minimax error formulation discussed
in Sections~\ref{subsec:Parallel-setting-estimation} and~\ref{subsec:Adaptive-setting-estimation}.
One of the implications of~\cite[Corollary~7]{Meyer2025}, along with
the same reasoning that leads to~\cite[Corollary~1]{Meyer2025}, is
as follows:
\begin{lem}
\label{lem:est-to-disc-minimax-lower-bnd}
    For all $\delta > 0$ and $n\in \mathbb{N}$ and for every channel family $\mathcal{F}$, the following bounds hold:
\begin{align}
p_{e,\|}(\delta,\mathcal{F},n) & \geq\sup_{\substack{\theta,\theta'\in\Theta:\\
\left|\theta-\theta'\right|>2\delta
}
}p_{e,\|}\!\left(\sfrac12,\mathcal{N}_{\theta},\sfrac12,\mathcal{N}_{\theta'},n\right),\label{eq:parallel-est-minimax-lower-bnd}\\
p_{e,\adaptivedisc}(\delta,\mathcal{F},n) & \geq\sup_{\substack{\theta,\theta'\in\Theta:\\
\left|\theta-\theta'\right|>2\delta
}
}p_{e,\adaptivedisc}\!\left(\sfrac12,\mathcal{N}_{\theta},\sfrac12,\mathcal{N}_{\theta'},n\right).\label{eq:adaptive-est-minimax-lower-bnd}
\end{align}
\end{lem}

\begin{proof}
    We provide a simple proof of~\eqref{eq:adaptive-est-minimax-lower-bnd} for completeness, which follows Le Cam's two-point method \cite{LeCam1973,Yu1997}. The proof of~\eqref{eq:parallel-est-minimax-lower-bnd} is analogous.
     Define
the measurement operator $Q_{\left[\theta-\delta,\theta+\delta\right]}^{\left(n\right)}$
as follows:
\begin{equation}
Q_{\left[\theta-\delta,\theta+\delta\right]}^{\left(n\right)}\coloneqq\int_{\left|\hat{\theta}-\theta\right|\leq\delta}d\hat{\theta}\:Q_{\hat{\theta}}^{\left(n\right)}.
\end{equation}
Fix $\theta_{1},\theta_{2}\in\Theta$
such that $\left|\theta_{1}-\theta_{2}\right|>2\delta$.
Then
\begin{align}
 & p_{e,\adaptivedisc}(\delta,\mathcal{F},n)\nonumber \\
 & =\inf_{\mathcal{E}^{\left(n\right)}}\sup_{\theta\in\Theta}\left\{ \int_{\hat{\theta}:\left|\hat{\theta}-\theta\right|>\delta}d\hat{\theta}\:\Tr\!\left[Q_{\hat{\theta}}^{\left(n\right)}\mathcal{A}_{\theta}^{(n)}\!\left(\rho_{RA}\right)\right]\right\} \\
 & =\inf_{\mathcal{E}^{\left(n\right)}}\sup_{\theta\in\Theta}\left\{ \Tr\!\left[\left(I-Q_{\left[\theta-\delta,\theta+\delta\right]}^{\left(n\right)}\right)\mathcal{A}_{\theta}^{(n)}\!\left(\rho_{RA}\right)\right]\right\} \\
 & \geq\inf_{\mathcal{E}^{\left(n\right)}}\sup_{\substack{\theta_{1},\theta_{2}\in \Theta:\\|\theta_1-\theta_2|>2\delta }}\left\{ \Tr\!\left[\left(I-Q_{\left[\theta_{1}-\delta,\theta_{1}+\delta\right]}^{\left(n\right)}\right)\mathcal{A}_{\theta_{1}}^{(n)}\!\left(\rho_{RA}\right)\right],\Tr\!\left[\left(I-Q_{\left[\theta_{2}-\delta,\theta_{2}+\delta\right]}^{\left(n\right)}\right)\mathcal{A}_{\theta_{2}}^{(n)}\!\left(\rho_{RA}\right)\right]\right\} \\
 & \geq\inf_{\mathcal{E}^{\left(n\right)}}\sup_{\substack{\theta_{1},\theta_{2}\in \Theta:\\|\theta_1-\theta_2|>2\delta }}\left\{ \Tr\!\left[\left(I-Q_{\left[\theta_{1}-\delta,\theta_{1}+\delta\right]}^{\left(n\right)}\right)\mathcal{A}_{\theta_{1}}^{(n)}\!\left(\rho_{RA}\right)\right],\Tr\!\left[Q_{\left[\theta_{1}-\delta,\theta_{1}+\delta\right]}^{\left(n\right)}\mathcal{A}_{\theta_{2}}^{(n)}\!\left(\rho_{RA}\right)\right]\right\} \\
 & \geq\inf_{\mathcal{E}^{\left(n\right)}}\left\{ \frac{1}{2}\Tr\!\left[\left(I-Q_{\left[\theta_{1}-\delta,\theta_{1}+\delta\right]}^{\left(n\right)}\right)\mathcal{A}_{\theta_{1}}^{(n)}\!\left(\rho_{RA}\right)\right]+\frac{1}{2}\Tr\!\left[Q_{\left[\theta_{1}-\delta,\theta_{1}+\delta\right]}^{\left(n\right)}\mathcal{A}_{\theta_{2}}^{(n)}\!\left(\rho_{RA}\right)\right]\right\} \\
 & =p_{e,\adaptivedisc}\!\left(\sfrac12,\mathcal{N}_{\theta_{1}},\sfrac12,\mathcal{N}_{\theta_{2}},n\right).
\end{align}
The second inequality follows because $I-Q_{\left[\theta_{2}-\delta,\theta_{2}+\delta\right]}^{\left(n\right)}\geq Q_{\left[\theta_{1}-\delta,\theta_{1}+\delta\right]}^{\left(n\right)}$, given that $\left|\theta_{1}-\theta_{2}\right|>2\delta$.
Indeed, since $|\theta_1-\theta_2|>2\delta$, the intervals
$
[\theta_1-\delta,\theta_1+\delta]$ and 
$
[\theta_2-\delta,\theta_2+\delta]
$
are disjoint. Because $\big(Q_{\hat\theta}^{(n)}\big)_{\hat\theta\in\Theta}$
forms a POVM, it is countably additive on disjoint measurable sets.
Hence,
\begin{equation}
    Q_{[\theta_1-\delta,\theta_1+\delta]}^{(n)}
+
Q_{[\theta_2-\delta,\theta_2+\delta]}^{(n)}
=
Q_{[\theta_1-\delta,\theta_1+\delta]\cup
[\theta_2-\delta,\theta_2+\delta]}^{(n)}
\le I,
\end{equation}
where $I$ denotes the identity operator on the final output space.
Therefore,
\begin{equation}
    Q_{[\theta_1-\delta,\theta_1+\delta]}^{(n)}
\le
I-
Q_{[\theta_2-\delta,\theta_2+\delta]}^{(n)},
\end{equation}
which proves the claimed operator inequality. The last equality follows because the  optimization ranges over
all adaptive discrimination strategies between
$\mathcal{N}_{\theta_1}$ and $\mathcal{N}_{\theta_2}$ with equal
prior probabilities, given that  every adaptive estimation protocol allows for an arbitrary
final POVM on the output system. Hence, the expression coincides with the
minimum Bayesian error probability for adaptive discrimination
between these two channels.
Since the inequality $p_{e,\adaptivedisc}(\delta,\mathcal{F},n)\geq p_{e,\adaptivedisc}\!\left(\sfrac12,\mathcal{N}_{\theta_{1}},\sfrac12,\mathcal{N}_{\theta_{2}},n\right)$ holds for all $\theta_{1},\theta_{2}\in\Theta$
such that $\left|\theta_{1}-\theta_{2}\right|>2\delta$, we conclude
\eqref{eq:adaptive-est-minimax-lower-bnd}.
\end{proof}

Lemma~\ref{lem:est-to-disc-minimax-lower-bnd} states that the minimax error probabilities for channel estimation, with
tolerance $\delta$, are not smaller than the channel discrimination
error probabilities for two channels with parameter values that are
more than $2\delta$-far from each other. Stated informally, channel
estimation is just as difficult as channel discrimination of nearby
channels. As such, limitations on channel discrimination provide limitations
on channel estimation.

By combining~\eqref{eq:parallel-est-minimax-lower-bnd}--\eqref{eq:adaptive-est-minimax-lower-bnd}
with Definitions~\ref{def:query-complexity-ch-disc-def},~\ref{def:minimax-parallel-est},
and~\ref{def:query-comp-parallel-est}, we arrive at the following
lower bounds on the query complexities of channel estimation in terms
of query complexities of channel discrimination, for all $\varepsilon\in\left[0,1\right]$:
\begin{align}
n_{\|}^{\star}(\varepsilon,\delta,\mathcal{F}) & \geq\sup_{\substack{
\theta,\theta'\in\Theta:\\\left|\theta-\theta'\right|>2\delta
}
}n_{\|}^{\star}\!\left(\sfrac12,\mathcal{N}_{\theta},\sfrac12,\mathcal{N}_{\theta'},\varepsilon\right),\label{eq:parallel-est-query-comp-lower-bnd}\\
n_{\adaptivedisc}^{\star}(\varepsilon,\delta,\mathcal{F}) & \geq\sup_{\substack{
\theta,\theta'\in\Theta:\\\left|\theta-\theta'\right|>2\delta
}
}n_{\adaptivedisc}^{\star}\!\left(\sfrac12,\mathcal{N}_{\theta},\sfrac12,\mathcal{N}_{\theta'},\varepsilon\right).\label{eq:adaptive-est-query-comp-lower-bnd}
\end{align}

Section~\ref{subsec:Lower-bounds-query-compl-ch-estimation} delves
into more detail of our lower bounds on the minimax error probabilities
and query complexities of channel estimation. 

\section{Fidelity and Bures distance of states and channels}\label{sec:Fidelity-and-Bures}

\subsection{Fidelity and Bures distance of states}

In this section, we develop some aspects of fidelity and Bures distance
of quantum states. Although these quantities were introduced some
time ago~\cite{Bures1969,Uhlmann1976} and have been extensively studied
in the literature, we develop some subtle aspects of these quantities
that are typically not considered, yet are necessary in order for
us to obtain the tightest query complexity bounds possible in the
channel discrimination application, while using available techniques.

To summarize the subtle aspects, the fidelity has a well known characterization
in terms of Uhlmann's theorem~\cite{Uhlmann1976}, in which it can
be written as the squared overlap of two canonical purifications with
an optimization over all unitaries acting on the purifying systems
of the purifications. However, we can alternatively optimize this
same quantity instead over all contractions acting on the purifying
system, which is useful in the context of minimax theorems, given
that the set of contractions is convex, while the set of unitaries
is not. Thus, it is useful to have a form of the fidelity that we
can optimize over all contractions. Furthermore, by introducing an
extra qubit for the reference system, it is possible to dilate a contraction
to a unitary acting on a larger system (called Halmos dilation~\cite{Halmos1950}),
which implies that the optimization over contractions acting on the
original reference systems is actually equivalent to optimizing over
unitaries acting on a larger reference system and vice versa. 

After developing these points for the fidelity, we then briefly state
how they apply to the Bures distance of states, given the strong connections
between Bures distance and fidelity.

\subsubsection{Fidelity of states}

Recall that the fidelity of states $\rho$ and $\sigma$ is defined
as follows~\cite{Uhlmann1976}:
\begin{equation}
F(\rho,\sigma)\coloneqq\left\Vert \sqrt{\rho}\sqrt{\sigma}\right\Vert _{1}^{2}.
\end{equation}
Let us define canonical purifications of $\rho$ and $\sigma$ as
follows:
\begin{align}
|\psi^{\rho}\rangle & \coloneqq\left(I\otimes\sqrt{\rho}\right)|\Gamma\rangle,\label{eq:canonical-pure-rho}\\
|\psi^{\sigma}\rangle & \coloneqq\left(I\otimes\sqrt{\sigma}\right)|\Gamma\rangle,\label{eq:canonical-pure-sigma}
\end{align}
where $|\Gamma\rangle$ is defined in~\eqref{eq:max-ent-vec-def}.

In Proposition~\ref{prop:fidelity-different-forms} below, we state
three different formulas for the root fidelity.
\begin{prop}
\label{prop:fidelity-different-forms}For $d$-dimensional states
$\rho$ and $\sigma$ with canonical purifications in~\eqref{eq:canonical-pure-rho}
and~\eqref{eq:canonical-pure-sigma}, respectively, the following
equalities hold:
\begin{align}
\sqrt{F}(\rho,\sigma) & =\sup_{W\in\mathbb{B}}\Re\!\left[\langle\psi^{\rho}|\left(W\otimes I\right)|\psi^{\sigma}\rangle\right]\label{eq:fid-states-1}\\
 & =\sup_{U\in\mathbb{U}}\Re\!\left[\langle\psi^{\rho}|\left(U\otimes I\right)|\psi^{\sigma}\rangle\right]\label{eq:fid-states-2}\\
 & =\sup_{U_{W}\in\mathbb{U}}\Re\!\left[\langle0|\langle\psi^{\rho}|\left(U_{W}\otimes I\right)|0\rangle|\psi^{\sigma}\rangle\right].\label{eq:halmos-dilation-fidelity}
\end{align}
In~\eqref{eq:halmos-dilation-fidelity}, $|0\rangle|\psi^{\sigma}\rangle,|0\rangle|\psi^{\rho}\rangle\in\mathbb{C}^{2}\otimes\mathbb{C}^{d}\otimes\mathbb{C}^{d}$
and $U_{W}$ acts nontrivially on the first two systems of $\mathbb{C}^{2}\otimes\mathbb{C}^{d}\otimes\mathbb{C}^{d}$.
\end{prop}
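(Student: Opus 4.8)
The plan is to prove the chain \eqref{eq:fid-states-1} $=$ \eqref{eq:fid-states-2} $=$ \eqref{eq:halmos-dilation-fidelity} by first identifying the base quantity and then handling each optimization domain. I would start by computing the overlap explicitly: using the transpose trick \eqref{eq:transpose-trick} on the canonical purifications,
\begin{equation}
\langle\psi^{\rho}|\left(W\otimes I\right)|\psi^{\sigma}\rangle = \langle\Gamma|\left(I\otimes\sqrt{\rho}\right)\left(W\otimes I\right)\left(I\otimes\sqrt{\sigma}\right)|\Gamma\rangle = \langle\Gamma|\left(W\otimes\sqrt{\rho}\sqrt{\sigma}\right)|\Gamma\rangle,
\end{equation}
and then \eqref{eq:max-ent-reduce-to-trace} together with \eqref{eq:transpose-trick} reduces this to $\Tr[W^{T}\sqrt{\rho}\sqrt{\sigma}]$ (up to transposing the correct factor). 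So the objective in all three lines is the real part of a trace $\Tr[X\sqrt{\rho}\sqrt{\sigma}]$ for an operator $X$ ranging over an appropriate set.

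For \eqref{eq:fid-states-1}, the key fact is the variational characterization of the trace norm: $\|A\|_{1} = \sup_{\|W\|\le 1}\Re\Tr[WA]$, where the supremum over contractions is attained by the unitary in the polar decomposition of $A$. Applying this with $A = \sqrt{\rho}\sqrt{\sigma}$ immediately gives $\sup_{\|W\|\le 1}\Re[\ldots] = \|\sqrt{\rho}\sqrt{\sigma}\|_{1} = \sqrt{F}(\rho,\sigma)$, establishing \eqref{eq:fid-states-1}. Since the supremum in that variational formula is in fact achieved by a \emph{unitary}, the same value is obtained when restricting to $W\in\mathbb{U}$, which is precisely \eqref{eq:fid-states-2}; I would note that the inequality $\eqref{eq:fid-states-2}\le\eqref{eq:fid-states-1}$ is trivial (unitaries are contractions) and the reverse follows from the polar-decomposition optimizer being unitary. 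A small care point is that when $\sqrt{\rho}\sqrt{\sigma}$ is singular one must pad the polar factor to a genuine unitary, but this does not change the real part of the trace, so the equality persists.

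For \eqref{eq:halmos-dilation-fidelity}, the idea is the Halmos dilation \cite{Halmos1950}: any contraction $W$ on $\mathbb{C}^{d}$ sits as the top-left block of a unitary $U_{W}$ on $\mathbb{C}^{2}\otimes\mathbb{C}^{d}$, namely $U_{W}=\begin{pmatrix} W & -\sqrt{I-WW^{\dag}}\\ \sqrt{I-W^{\dag}W} & W^{\dag}\end{pmatrix}$ in the ancilla basis $\{|0\rangle,|1\rangle\}$, so that $\langle 0|U_{W}|0\rangle = W$. Conversely every such $U_{W}$ has top-left block a contraction. Projecting onto $\langle 0|\cdots|0\rangle$ on the ancilla therefore shows $\langle0|\langle\psi^{\rho}|(U_{W}\otimes I)|0\rangle|\psi^{\sigma}\rangle = \langle\psi^{\rho}|(W\otimes I)|\psi^{\sigma}\rangle$, so optimizing over all ancilla unitaries $U_{W}$ ranges over exactly the same objective values as optimizing over all contractions $W$, giving $\eqref{eq:halmos-dilation-fidelity}=\eqref{eq:fid-states-1}$. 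The main obstacle I anticipate is purely bookkeeping rather than conceptual: being careful about which tensor factor carries the transpose, and confirming that restricting a general ancilla unitary to its $\langle 0|\cdot|0\rangle$ block never exceeds the contraction optimum (this direction needs that the block is always a contraction, which follows from $U_{W}^{\dag}U_{W}=I$). Once the reduction to $\Re\Tr[X\sqrt{\rho}\sqrt{\sigma}]$ is made uniform across the three lines, the three equalities are three standard facts (trace-norm variational formula, unitary attainment, Halmos dilation) stitched together.
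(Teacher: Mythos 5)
Your proposal is correct and follows essentially the same route as the paper's proof: reduce the overlap to $\Re\Tr[W^{T}\sqrt{\rho}\sqrt{\sigma}]$ via the transpose trick and \eqref{eq:max-ent-reduce-to-trace}, invoke the trace-norm variational formula over contractions (resp.\ unitaries), and pass between \eqref{eq:fid-states-1} and \eqref{eq:halmos-dilation-fidelity} via the Halmos dilation together with the observation that the $\langle0|\cdot|0\rangle$ corner block of any unitary is a contraction. The only cosmetic difference is that you derive the unitary case \eqref{eq:fid-states-2} from the contraction case via attainment at the polar unitary, whereas the paper simply cites the separate variational characterization $\left\Vert A\right\Vert _{1}=\sup_{U\in\mathbb{U}}\left|\Tr[AU]\right|$; these are equivalent.
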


\begin{proof}
Appealing to the variational form of the trace norm of an operator
$A$ as
\begin{equation}
\left\Vert A\right\Vert _{1}=\sup_{B\in\mathbb{L}}\left\{ \left|\Tr\!\left[B^{\dag}A\right]\right|:\left\Vert B\right\Vert \leq1\right\} ,
\end{equation}
we can write the root fidelity $\sqrt{F}(\rho,\sigma)$ in a variational
way as follows:
\begin{align}
\left\Vert \sqrt{\rho}\sqrt{\sigma}\right\Vert _{1} & =\sup_{W\in\mathbb{B}}\left|\Tr\!\left[\sqrt{\rho}\sqrt{\sigma}W\right]\right|\\
 & =\sup_{W\in\mathbb{B}}\Re\!\left[\Tr\!\left[\sqrt{\rho}\sqrt{\sigma}W\right]\right]\\
 & =\sup_{W\in\mathbb{B}}\Re\!\left[\langle\Gamma|\left(I\otimes\sqrt{\rho}\sqrt{\sigma}W\right)|\Gamma\rangle\right]\\
 & =\sup_{W\in\mathbb{B}}\Re\!\left[\langle\Gamma|\left(W^{T}\otimes\sqrt{\rho}\sqrt{\sigma}\right)|\Gamma\rangle\right]\\
 & =\sup_{W\in\mathbb{B}}\Re\!\left[\langle\Gamma|\left(I\otimes\sqrt{\rho}\right)\left(W^{T}\otimes I\right)\left(I\otimes\sqrt{\sigma}\right)|\Gamma\rangle\right]\\
 & =\sup_{W\in\mathbb{B}}\Re\!\left[\langle\psi^{\rho}|\left(W^{T}\otimes I\right)|\psi^{\sigma}\rangle\right]\\
 & =\sup_{W\in\mathbb{B}}\Re\!\left[\langle\psi^{\rho}|\left(W\otimes I\right)|\psi^{\sigma}\rangle\right].\label{eq:fidelity-states-contractions}
\end{align}
This proves~\eqref{eq:fid-states-1}. The second equality follows
because, for $z\in\mathbb{C}$, we can write the polar form $z=re^{i\phi}$,
where $r\geq0$ and $\phi\in\left[0,2\pi\right]$. Then $\left|z\right|=r$
and $\Re\!\left[z\right]=r\cos\phi$, and by tuning a global phase
in $W$, we can deduce that the second equality holds. The third equality
follows from~\eqref{eq:max-ent-reduce-to-trace} and the fourth from
\eqref{eq:transpose-trick}. 

By noting that the trace norm of a square operator $A$ has the following
variational characterization as
\begin{equation}
\left\Vert A\right\Vert _{1}=\sup_{U\in\mathbb{U}}\left|\Tr\!\left[AU\right]\right|,
\end{equation}
we can repeat the analysis above to conclude that
\begin{equation}
\left\Vert \sqrt{\rho}\sqrt{\sigma}\right\Vert _{1}=\sup_{U\in\mathbb{U}}\Re\left[\langle\psi^{\rho}|\left(U\otimes I\right)|\psi^{\sigma}\rangle\right],\label{eq:fidelity-states-unitaries}
\end{equation}
thus establishing~\eqref{eq:fid-states-2}.

Additionally, we can embed the contraction $W$ into a unitary matrix
in a standard way (called Halmos dilation~\cite{Halmos1950}) as follows:
\begin{align}
U_{W} & \coloneqq\begin{bmatrix}W & \sqrt{I-WW^{\dag}}\\
\sqrt{I-W^{\dag}W} & -W^{\dag}
\end{bmatrix}\label{eq:halmos-dilation}\\
 & =|0\rangle\!\langle0|\otimes W+|0\rangle\!\langle1|\otimes\sqrt{I-WW^{\dag}}+|1\rangle\!\langle0|\otimes\sqrt{I-W^{\dag}W}-|1\rangle\!\langle1|\otimes W^{\dag},
\end{align}
and observe that
\begin{equation}
\left(\langle0|\otimes I\right)U_{W}\left(|0\rangle\otimes I\right)=W,
\end{equation}
which implies that
\begin{equation}
\langle\psi^{\rho}|\left(W\otimes I\right)|\psi^{\sigma}\rangle=\langle0|\langle\psi^{\rho}|\left(U_{W}\otimes I\right)|0\rangle|\psi^{\sigma}\rangle.\label{eq:unify-contraction-unitary-fidelity}
\end{equation}
As stated above, $|0\rangle|\psi^{\sigma}\rangle,|0\rangle|\psi^{\rho}\rangle\in\mathbb{C}^{2}\otimes\mathbb{C}^{d}\otimes\mathbb{C}^{d}$
and $U_{W}$ acts nontrivially on the first two systems of $\mathbb{C}^{2}\otimes\mathbb{C}^{d}\otimes\mathbb{C}^{d}$.
Using~\eqref{eq:unify-contraction-unitary-fidelity}, we can unify
the two formulas for fidelity in~\eqref{eq:fidelity-states-contractions}
and~\eqref{eq:fidelity-states-unitaries} as follows:
\begin{align}
\left\Vert \sqrt{\rho}\sqrt{\sigma}\right\Vert _{1} & =\sup_{U_{W}\in\mathbb{U}}\Re\!\left[\langle0|\langle\psi^{\rho}|\left(U_{W}\otimes I\right)|0\rangle|\psi^{\sigma}\rangle\right]\label{eq:unify-fidelity-opt-unitary-contractions}\\
 & =\sup_{W\in\mathbb{B}}\Re\!\left[\langle\psi^{\rho}|\left(W\otimes I\right)|\psi^{\sigma}\rangle\right].\label{eq:unify-fidelity-opt-unitary-contractions-2nd}
\end{align}
In~\eqref{eq:unify-fidelity-opt-unitary-contractions}, we have used
that the optimization can be taken over every unitary $U_{W}$, and
not just unitaries of the particular Halmos form in~\eqref{eq:halmos-dilation},
because $\left(\langle0|\otimes I\right)U_{W}\left(|0\rangle\otimes I\right)$
is a contraction for every unitary $U_{W}$.
\end{proof}

\subsubsection{Bures distance of states}

Recall that the Bures distance of states $\rho$ and $\sigma$ is
defined as~\cite{Bures1969}
\begin{align}
d_{B}(\rho,\sigma) & \coloneqq\inf_{U\in\mathbb{U}}\left\Vert \sqrt{\rho}-U\sqrt{\sigma}\right\Vert _{2}\\
 & =\inf_{U\in\mathbb{U}}\left\Vert |\psi^{\rho}\rangle-\left(U\otimes I\right)|\psi^{\sigma}\rangle\right\Vert , \label{eq:Bures_U_Purification}
\end{align}
where $|\psi^{\rho}\rangle$ and $|\psi^{\sigma}\rangle$ are defined
in~\eqref{eq:canonical-pure-rho} and~\eqref{eq:canonical-pure-sigma},
respectively, and the second equality follows from~\eqref{eq:hilbert-schmidt-euclidean-norm}.
\begin{prop}
For $d$-dimensional states $\rho$ and $\sigma$ with canonical purifications
in~\eqref{eq:canonical-pure-rho} and~\eqref{eq:canonical-pure-sigma},
respectively, the following equalities hold:
\begin{align}
d_{B}(\rho,\sigma) & =\sqrt{2\left(1-\sqrt{F}(\rho,\sigma)\right)}\label{eq:bures-state-alt-1}\\
 & =\inf_{W\in\mathbb{B}}\sqrt{2\left(1-\Re\!\left[\langle\psi^{\rho}|\left(W\otimes I\right)|\psi^{\sigma}\rangle\right]\right)}\\
 & =\inf_{U_{W}\in\mathbb{U}}\sqrt{2\left(1-\Re\!\left[\langle0|\langle\psi^{\rho}|\left(U_{W}\otimes I\right)|0\rangle|\psi^{\sigma}\rangle\right]\right)}\label{eq:bures-state-alt-2}\\
 & =\inf_{U_{W}\in\mathbb{U}}\left\Vert |0\rangle|\psi^{\rho}\rangle-\left(U_{W}\otimes I\right)|0\rangle|\psi^{\sigma}\rangle\right\Vert .\label{eq:bures-state-alt-3}
\end{align}
In~\eqref{eq:bures-state-alt-2} and~\eqref{eq:bures-state-alt-3},
$|0\rangle|\psi^{\sigma}\rangle,|0\rangle|\psi^{\rho}\rangle\in\mathbb{C}^{2}\otimes\mathbb{C}^{d}\otimes\mathbb{C}^{d}$
and $U_{W}$ acts nontrivially on the first two systems of $\mathbb{C}^{2}\otimes\mathbb{C}^{d}\otimes\mathbb{C}^{d}$.
\end{prop}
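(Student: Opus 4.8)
The plan is to derive all four identities directly from the purification form of the Bures distance in \eqref{eq:Bures_U_Purification}, combined with the three characterizations of the root fidelity already established in Proposition~\ref{prop:fidelity-different-forms}. The one computational ingredient needed throughout is the expansion of the squared Euclidean distance between two unit vectors. Since $\||\psi^{\sigma}\rangle\|^{2}=\langle\Gamma|(I\otimes\sigma)|\Gamma\rangle=\Tr[\sigma]=1$ by \eqref{eq:max-ent-reduce-to-trace}, and likewise $\||\psi^{\rho}\rangle\|=1$, and since $U\otimes I$ is an isometry and hence norm-preserving, for every $U\in\mathbb{U}$ we have
\[
\bigl\||\psi^{\rho}\rangle-(U\otimes I)|\psi^{\sigma}\rangle\bigr\|^{2}=2\left(1-\Re\!\left[\langle\psi^{\rho}|(U\otimes I)|\psi^{\sigma}\rangle\right]\right).
\]

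For the first identity \eqref{eq:bures-state-alt-1}, I would take the infimum over $U\in\mathbb{U}$ of both sides of the display above. Because $x\mapsto\sqrt{2(1-x)}$ is monotonically decreasing, the infimum of the left-hand side equals $\sqrt{2\left(1-\sup_{U}\Re\!\left[\langle\psi^{\rho}|(U\otimes I)|\psi^{\sigma}\rangle\right]\right)}$, and the inner supremum is exactly $\sqrt{F}(\rho,\sigma)$ by \eqref{eq:fid-states-2}. This yields $d_{B}(\rho,\sigma)=\sqrt{2(1-\sqrt{F}(\rho,\sigma))}$.

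The remaining identities follow by feeding the other two fidelity formulas of Proposition~\ref{prop:fidelity-different-forms} into \eqref{eq:bures-state-alt-1} and once more exploiting monotonicity of $x\mapsto\sqrt{2(1-x)}$ to convert a supremum inside a root into an infimum outside it. Substituting the contraction form \eqref{eq:fid-states-1} gives the second identity, and substituting the Halmos-dilation form \eqref{eq:halmos-dilation-fidelity} gives \eqref{eq:bures-state-alt-2}. Finally, for \eqref{eq:bures-state-alt-3}, I would repeat the norm expansion at the level of the dilated vectors: since $|0\rangle|\psi^{\rho}\rangle$ and $|0\rangle|\psi^{\sigma}\rangle$ are again unit vectors and $U_{W}\otimes I$ is an isometry, one has $\||0\rangle|\psi^{\rho}\rangle-(U_{W}\otimes I)|0\rangle|\psi^{\sigma}\rangle\|^{2}=2(1-\Re[\langle0|\langle\psi^{\rho}|(U_{W}\otimes I)|0\rangle|\psi^{\sigma}\rangle])$, and taking the infimum over $U_{W}$ matches \eqref{eq:bures-state-alt-2}.

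There is no genuine obstacle here; the only point requiring a moment's care is the repeated interchange of the square root with the supremum or infimum, which is justified solely by the monotone decrease of $x\mapsto\sqrt{2(1-x)}$ on the relevant range, together with the observation that all vectors involved are unit vectors, so that the cross term is the only contribution depending on $U$ (respectively $W$ or $U_{W}$). Everything else is a direct transcription of Proposition~\ref{prop:fidelity-different-forms}.
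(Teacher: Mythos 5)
Your proposal is correct and follows essentially the same route as the paper's own proof: expand the squared Euclidean distance between the (unit-norm) purifications to isolate the cross term, use monotonicity of $x\mapsto\sqrt{2\left(1-x\right)}$ to exchange the infimum over $U$ with the square root, and then invoke the three fidelity characterizations from Proposition \ref{prop:fidelity-different-forms} to obtain the remaining expressions, with \eqref{eq:bures-state-alt-3} recovered by repeating the norm expansion at the level of the dilated vectors. The only (harmless) addition is your explicit verification that the canonical purifications are unit vectors, which the paper leaves implicit.
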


\begin{proof}
Consider from~\eqref{eq:Bures_U_Purification} that
\begin{align}
  d_B(\rho, \sigma)  &=\inf_{U\in\mathbb{U}}\left\Vert |\psi^{\rho}\rangle-\left(U\otimes I\right)|\psi^{\sigma}\rangle\right\Vert \\
 & =\inf_{U\in\mathbb{U}}\sqrt{\left(\langle\psi^{\rho}|-\langle\psi^{\sigma}|\left(U^{\dag}\otimes I\right)\right)\left(|\psi^{\rho}\rangle-\left(U\otimes I\right)|\psi^{\sigma}\rangle\right)}\\
 & =\inf_{U\in\mathbb{U}}\sqrt{\langle\psi^{\rho}|\psi^{\rho}\rangle+\langle\psi^{\sigma}|\left(U^{\dag}\otimes I\right)\left(U\otimes I\right)|\psi^{\sigma}\rangle-2\Re\!\left[\langle\psi^{\rho}|\left(U\otimes I\right)|\psi^{\sigma}\rangle\right]}\\
 & =\inf_{U\in\mathbb{U}}\sqrt{2\left(1-\Re\!\left[\langle\psi^{\rho}|\left(U\otimes I\right)|\psi^{\sigma}\rangle\right]\right)}\\
 & =\sqrt{2\left(1-\sup_{U\in\mathbb{U}}\Re\!\left[\langle\psi^{\rho}|\left(U\otimes I\right)|\psi^{\sigma}\rangle\right]\right)}\\
 & =\sqrt{2\left(1-\sqrt{F}(\rho,\sigma)\right)},
\end{align}
where the last equality follows from Proposition~\ref{prop:fidelity-different-forms}.
By appealing to~\eqref{eq:unify-fidelity-opt-unitary-contractions}
and~\eqref{eq:unify-fidelity-opt-unitary-contractions-2nd}, we can
then write the other expressions in~\eqref{eq:bures-state-alt-1}--\eqref{eq:bures-state-alt-3},
with~\eqref{eq:bures-state-alt-3} following by applying the same
reasoning as above.
\end{proof}

\subsection{Bures distance of channels}

Now let us consider the case of quantum channels $\mathcal{N}_{1}$
and $\mathcal{N}_{2}$ that have $d_{A}$-dimensional input spaces
and $d_{B}$-dimensional output spaces, where $d_{A},d_{B}\in\mathbb{N}$.
The Bures distance of channels $\mathcal{N}_{1}$ and $\mathcal{N}_{2}$
is defined as~\cite{Gilchrist2005}
\begin{equation}
d_{B}(\mathcal{N}_{1},\mathcal{N}_{2})\coloneqq\sup_{\rho_{RA}\in\mathbb{D}}d_{B}\!\left(\left(\id\otimes\mathcal{N}_{1}\right)\left(\rho_{RA}\right),\left(\id\otimes\mathcal{N}_{2}\right)\left(\rho_{RA}\right)\right),
\end{equation}
where the supremum is over every bipartite state $\rho_{RA}$ and
also implicitly over the dimension of the reference system $R$ (taken
to be unbounded in the definition). However, see~\eqref{eq:bures-distance-optimize-pure-states}
for a simplification of the optimization task. 

Canonical Kraus representations for the channels arise from taking
spectral decompositions of the Choi operators of these channels~\cite[Theorem~4.4.1]{Wilde2017},
which are defined for $i\in\left\{ 1,2\right\} $ as follows:
\begin{align}
\Gamma^{\mathcal{N}_{i}} & \coloneqq\left(\id\otimes\mathcal{N}_{i}\right)\left(\Gamma\right),\\
\Gamma & \coloneqq|\Gamma\rangle\!\langle\Gamma|.
\end{align}
Let us denote the canonical Kraus representations by $\left(K_{i}\right)_{i}$
and $\left(L_{j}\right)_{j}$, respectively, and we can take the number
$d_{E}$ of them, without loss of generality, to be the same and no
more than $d_{A}d_{B}$, so that we simply set $d_{E}=d_{A}d_{B}$.
From these, we can construct canonical isometric extensions of the
channels $\mathcal{N}_{1}$ and $\mathcal{N}_{2}$ as follows, respectively:
\begin{align}
V_{1} & \coloneqq\sum_{i}|i\rangle\otimes K_{i},\label{eq:isometric-ext-ch-1}\\
V_{2} & \coloneqq\sum_{j}|j\rangle\otimes L_{j},\label{eq:isometric-ext-ch-2}
\end{align}
where the orthonormal bases for the environments in each case is the
same and the environments have the same dimension. Thus, $V_{i}\colon\mathbb{C}^{d_{A}}\mapsto\mathbb{C}^{d_{E}}\otimes\mathbb{C}^{d_{B}}$
for $i\in\left\{ 1,2\right\} $, where $d_{E}=d_{A}d_{B}$, and $V_{1}$
and $V_{2}$ are isometries because they satisfy $V_{1}^{\dag}V_{1}=\sum_{i}K_{i}^{\dag}K_{i}=I$
and $V_{2}^{\dag}V_{2}=\sum_{j}L_{j}^{\dag}L_{j}=I$.

The following theorem is strongly related to the developments in~\cite{Kretschmann2008},
in particular Theorem 1 therein.
\begin{thm}
\label{thm:Bures-distance-channels}The Bures distance of channels
$\mathcal{N}_{1}$ and $\mathcal{N}_{2}$ can be written as follows:
\begin{align}
d_{B}(\mathcal{N}_{1},\mathcal{N}_{2}) & =\inf_{U_{W}\in\mathbb{U}}\left\Vert \widetilde{V}_{1}-\left(U_{W}\otimes I_{B}\right)\widetilde{V}_{2}\right\Vert ,\\
 & =\sqrt{2}\inf_{W\in\mathbb{B}}\left\Vert I-\Re\!\left[V_{1}^{\dag}\left(W_{E}\otimes I_{B}\right)V_{2}\right]\right\Vert ^{\frac{1}{2}},\label{eq:bures-distance-chs-reduced}
\end{align}
where $\widetilde{V}_{1}$ and $\widetilde{V}_{2}$ are the following
isometric extensions of the channels $\mathcal{N}_{1}$ and $\mathcal{N}_{2}$,
respectively,
\begin{align}
\widetilde{V}_{1} & \coloneqq|0\rangle\otimes V_{1},\label{eq:qubit-augmented-iso-ext-1}\\
\widetilde{V}_{2} & \coloneqq|0\rangle\otimes V_{2},\label{eq:qubit-augmented-iso-ext-2}
\end{align}
$V_{1}$ and $V_{2}$ are the isometric extensions in~\eqref{eq:isometric-ext-ch-1}
and~\eqref{eq:isometric-ext-ch-2}, respectively. Observing that $\widetilde{V}_{i}\colon\mathbb{C}^{d_{A}}\mapsto\mathbb{C}^{2}\otimes\mathbb{C}^{d_{E}}\otimes\mathbb{C}^{d_{B}}$,
let us note that the unitary $U_{W}$ acts nontrivially on the first
two systems of $\mathbb{C}^{2}\otimes\mathbb{C}^{d_{E}}\otimes\mathbb{C}^{d_{B}}$.
\end{thm}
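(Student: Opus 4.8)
The plan is to reduce the channel Bures distance to a state-level quantity evaluated on purifications built from the isometric extensions $V_1$ and $V_2$, convert the resulting saddle-point problem into a spectral-norm expression by a minimax argument, and finally translate the contraction form into the qubit-augmented unitary form via Halmos dilation. First I would use the simplification recorded in \eqref{eq:bures-distance-optimize-pure-states} to replace the supremum over bipartite states $\rho_{RA}$ by one over pure states $\psi_{RA}$; mixed inputs never help because the Bures distance is monotone under the partial trace discarding a purifying part of the reference. For a fixed pure input $|\psi\rangle_{RA}$, the vectors $(I_R\otimes V_i)|\psi\rangle_{RA}$ are purifications of the outputs $(\id\otimes\mathcal{N}_i)(\psi_{RA})$ with the environment $E$ (of dimension $d_E=d_Ad_B$, large enough for Uhlmann's theorem) as purifying system. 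Applying Uhlmann's theorem in the contraction form of Proposition~\ref{prop:fidelity-different-forms} to these purifications, combined with \eqref{eq:bures-state-alt-1}, yields
\[
d_B(\mathcal{N}_1,\mathcal{N}_2)^2=2\sup_{\psi_{RA}\in\mathbb{P}}\ \inf_{W\in\mathbb{L}:\|W\|\leq1}\left(1-\Re\!\left[\langle\psi|\left(I_R\otimes M_W\right)|\psi\rangle\right]\right),
\]
where $M_W\coloneqq V_1^\dag(W_E\otimes I_B)V_2$ acts on the input system $A$.

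The crux is the minimax exchange, and the obstacle is that $\sqrt{\cdot}$ is concave in \emph{both} arguments, so Sion's theorem cannot be applied to the Bures expression directly. Instead I would strip off the monotone operations and work with the bilinear core $g(\psi,W)\coloneqq\Re[\langle\psi|(I_R\otimes M_W)|\psi\rangle]=\Tr[\psi_{RA}(I_R\otimes\Re[M_W])]$, which is affine in $\psi$ and affine in $W$ separately over the convex compact sets of density operators and of contractions. Since $\sup_\psi\inf_W(1-g)=1-\inf_\psi\sup_W g$, Sion's minimax theorem (giving $\inf_\psi\sup_W g=\sup_W\inf_\psi g$) produces
\[
d_B(\mathcal{N}_1,\mathcal{N}_2)^2=2\inf_{W:\|W\|\leq1}\left(1-\inf_{\psi}\Tr[\psi(I_R\otimes\Re[M_W])]\right)=2\inf_{W:\|W\|\leq1}\left(1-\lambda_{\min}(\Re[M_W])\right).
\]
The inner minimization is attained at a pure state, confirming that the relaxation from pure to mixed inputs was harmless; and because $\|M_W\|\leq\|V_1^\dag\|\,\|W\otimes I_B\|\,\|V_2\|\leq1$ we have $I-\Re[M_W]\geq0$, so $1-\lambda_{\min}(\Re[M_W])=\lambda_{\max}(I-\Re[M_W])=\|I-\Re[M_W]\|$. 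This establishes the second formula \eqref{eq:bures-distance-chs-reduced}.

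Finally, I would relate this to the qubit-augmented form. Since $\widetilde{V}_i=|0\rangle\otimes V_i$ satisfy $\widetilde{V}_i^\dag\widetilde{V}_i=I$ and $U_W$ is unitary, expanding the square collapses the cross terms to $\|\widetilde{V}_1-(U_W\otimes I_B)\widetilde{V}_2\|^2=\|2(I-\Re[\widetilde{V}_1^\dag(U_W\otimes I_B)\widetilde{V}_2])\|$, and the Halmos identity $(\langle0|\otimes I_E)U_W(|0\rangle\otimes I_E)=W_E$ from \eqref{eq:halmos-dilation} shows $\widetilde{V}_1^\dag(U_W\otimes I_B)\widetilde{V}_2=V_1^\dag(W_E\otimes I_B)V_2=M_W$. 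As in Proposition~\ref{prop:fidelity-different-forms}, ranging over all unitaries $U_W$ on $\mathbb{C}^2\otimes\mathbb{C}^{d_E}$ is equivalent to ranging over all contractions $W_E$ on $\mathbb{C}^{d_E}$, so taking infima matches the two expressions. The step I expect to demand the most care is the minimax exchange together with the pure-to-mixed relaxation, which is precisely where the convexity of the contraction set developed in Proposition~\ref{prop:fidelity-different-forms} becomes indispensable.
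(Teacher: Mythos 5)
Your proposal is correct and follows essentially the same route as the paper's proof: reduction to pure inputs, the contraction form of Uhlmann's theorem applied to the purifications $V_{i}|\psi\rangle_{RA}$, a minimax exchange on the bilinear overlap $\Re\!\left[\Tr\!\left[M_{W}\rho\right]\right]$ over the convex compact sets of states and contractions, and the Halmos dilation to pass between contractions $W$ and unitaries $U_{W}$. The only cosmetic difference is that you evaluate the inner optimization directly as $\lambda_{\max}(I-\Re[M_{W}])=\left\Vert I-\Re[M_{W}]\right\Vert$ to reach \eqref{eq:bures-distance-chs-reduced} first, whereas the paper converts back to a supremum over state vectors to read off the spectral norm $\left\Vert \widetilde{V}_{1}-\left(U_{W}\otimes I_{B}\right)\widetilde{V}_{2}\right\Vert$ first.
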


\begin{proof}
As a consequence of the data-processing inequality for the Bures distance
$d_{B}$ of states, purification, and the Schmidt decomposition theorem,
the following equality holds:
\begin{equation}
d_{B}(\mathcal{N}_{1},\mathcal{N}_{2})=\sup_{|\psi\rangle_{RA}\in\mathbb{P}}d_{B}\!\left(\left(\id\otimes\mathcal{N}_{1}\right)\left(\psi_{RA}\right),\left(\id\otimes\mathcal{N}_{2}\right)\left(\psi_{RA}\right)\right),\label{eq:bures-distance-optimize-pure-states}
\end{equation}
where $\psi_{RA}\equiv|\psi\rangle\!\langle\psi|_{RA}$ and the reference
system $R$ is isomorphic to the channel input system $A$ (see~\cite[Proposition~7.82]{Khatri2024}
for further details). Noting that every pure bipartite state vector
$|\psi\rangle_{RA}$ can be written as follows:
\begin{equation}
|\psi\rangle_{RA}=\left(U_{R}\otimes\sqrt{\rho}\right)|\Gamma\rangle
\end{equation}
where $U_{R}$ is a unitary and $\rho$ is a density operator, we
can take canonical purifications of $\left(\id\otimes\mathcal{N}_{1}\right)\left(\psi_{RA}\right)$
and $\left(\id\otimes\mathcal{N}_{2}\right)\left(\psi_{RA}\right)$
to be as follows, respectively:
\begin{align}
V_{1}|\psi\rangle_{RA} & =\left(U_{R}\otimes V_{1}\sqrt{\rho}\right)|\Gamma\rangle=\sum_{i}|i\rangle_{E}\otimes\left(U_{R}\otimes K_{i}\sqrt{\rho}\right)|\Gamma\rangle,\\
V_{2}|\psi\rangle_{RA} & =\left(U_{R}\otimes V_{2}\sqrt{\rho}\right)|\Gamma\rangle=\sum_{j}|j\rangle_{E}\otimes\left(U_{R}\otimes L_{j}\sqrt{\rho}\right)|\Gamma\rangle.
\end{align}
By appealing to~\eqref{eq:bures-state-alt-1}--\eqref{eq:bures-state-alt-3},
we can write the Bures distance of channels as follows: 
\begin{align}
d_{B}(\mathcal{N}_{1},\mathcal{N}_{2}) & =\sup_{\left|\psi\right\rangle_{RA}\in\mathbb{P}}\sqrt{2\left(1-\sqrt{F}\!\left(\left(\id\otimes\mathcal{N}_{1}\right)\left(\psi_{RA}\right),\left(\id\otimes\mathcal{N}_{2}\right)\left(\psi_{RA}\right)\right)\right)}\\
 & =\sup_{\left|\psi\right\rangle_{RA}\in\mathbb{P}}\inf_{W\in\mathbb{B}}\sqrt{2\left(1-\Re\!\left[\langle\psi|_{RA}V_{1}^{\dag}\left(W_{E}\otimes I_{RB}\right)V_{2}|\psi\rangle_{RA}\right]\right)},\label{eq:root-fid-chs-bures-distance}
\end{align}
where we have denoted the channel output system by $B$ and the environment
system by $E$. Now consider that
\begin{align}
 & \langle\psi|_{RA}V_{1}^{\dag}\left(W_{E}\otimes I_{RB}\right)V_{2}|\psi\rangle_{RA}\nonumber \\
 & =\langle\psi|_{RA}\left(I_{R}\otimes V_{1}^{\dag}\left(W_{E}\otimes I_{B}\right)V_{2}\right)|\psi\rangle_{RA}\label{eq:reduction-fidelity-chs-1}\\
 & =\langle\Gamma|\left(U_{R}^{\dag}\otimes\sqrt{\rho}\right)\left(I_{R}\otimes V_{1}^{\dag}\left(W_{E}\otimes I_{B}\right)V_{2}\right)\left(U_{R}\otimes\sqrt{\rho}\right)|\Gamma\rangle\\
 & =\langle\Gamma|\left(U_{R}^{\dag}U_{R}\otimes\sqrt{\rho}\right)V_{1}^{\dag}\left(W_{E}\otimes I_{B}\right)V_{2}\sqrt{\rho}|\Gamma\rangle\\
 & =\langle\Gamma|\left(I_{R}\otimes\sqrt{\rho} \ V_{1}^{\dag}\left(W_{E}\otimes I_{B}\right)V_{2}\sqrt{\rho}\right)|\Gamma\rangle\\
 & =\Tr\!\left[\sqrt{\rho} \ V_{1}^{\dag}\left(W_{E}\otimes I_{B}\right)V_{2}\sqrt{\rho} \right]\\
 & =\Tr\!\left[V_{1}^{\dag}\left(W_{E}\otimes I_{B}\right)V_{2}\rho\right].\label{eq:reduction-fidelity-chs-last}
\end{align}
Then
\begin{align}
 & \sup_{|\psi\rangle_{RA}\in\mathbb{P}}\inf_{W\in\mathbb{B}}\sqrt{2\left(1-\Re\!\left[\langle\psi|_{RA}V_{1}^{\dag}\left(W_{E}\otimes I_{RB}\right)V_{2}|\psi\rangle_{RA}\right]\right)}\nonumber \\
 & =\sup_{\rho\in\mathbb{D},U\in\mathbb{U}}\inf_{W\in\mathbb{B}}\sqrt{2\left(1-\Re\!\left[\Tr\!\left[V_{1}^{\dag}\left(W_{E}\otimes I_{B}\right)V_{2}\rho\right]\right]\right)}\\
 & =\sup_{\rho\in\mathbb{D}}\inf_{W\in\mathbb{B}}\sqrt{2\left(1-\Re\!\left[\Tr\!\left[V_{1}^{\dag}\left(W_{E}\otimes I_{B}\right)V_{2}\rho\right]\right]\right)}\\
 & =\sqrt{2\left(1-\inf_{\rho\in\mathbb{D}}\sup_{W\in\mathbb{B}}\Re\!\left[\Tr\!\left[V_{1}^{\dag}\left(W_{E}\otimes I_{B}\right)V_{2}\rho\right]\right]\right)}\\
 & =\sqrt{2\left(1-\sup_{W\in\mathbb{B}}\inf_{\rho\in\mathbb{D}}\Re\!\left[\Tr\!\left[V_{1}^{\dag}\left(W_{E}\otimes I_{B}\right)V_{2}\rho\right]\right]\right)}\label{eq:minimax-step-bures-channels}\\
 & =\inf_{W\in\mathbb{B}}\sup_{\rho\in\mathbb{D}}\sqrt{2\left(1-\Re\!\left[\Tr\!\left[V_{1}^{\dag}\left(W_{E}\otimes I_{B}\right)V_{2}\rho\right]\right]\right)}\\
 & =\inf_{W\in\mathbb{B}}\sup_{|\psi\rangle_{RA}\in\mathbb{P}}\sqrt{2\left(1-\Re\!\left[\langle\psi|_{RA}V_{1}^{\dag}\left(W_{E}\otimes I_{RB}\right)V_{2}|\psi\rangle_{RA}\right]\right)} \label{eq:rho_to_psi}\\
 & =\inf_{U_{W}\in\mathbb{U}}\sup_{|\psi\rangle_{RA}\in\mathbb{P}}\sqrt{2\left(1-\Re\!\left[\langle0|\langle\psi|_{RA}V_{1}^{\dag}\left(U_{W}\otimes I_{RB}\right)|0\rangle V_{2}|\psi\rangle_{RA}\right]\right)}\\
 & =\inf_{U_{W}\in\mathbb{U}}\sup_{|\psi\rangle_{RA}\in\mathbb{P}}\left\Vert |0\rangle V_{1}|\psi\rangle_{RA}-\left(U_{W}\otimes I_{RB}\right)|0\rangle V_{2}|\psi\rangle_{RA}\right\Vert \\
 & =\inf_{U_{W}\in\mathbb{U}}\sup_{|\psi\rangle_{RA}\in\mathbb{P}}\left\Vert \widetilde{V}_{1}|\psi\rangle_{RA}-\left(U_{W}\otimes I_{RB}\right)\widetilde{V}_{2}|\psi\rangle_{RA}\right\Vert \label{eq:define-tilde-V}\\
 & =\inf_{U_{W}\in\mathbb{U}}\left\Vert I_{R}\otimes\widetilde{V}_{1}-\left(U_{W}\otimes I_{RB}\right)\left(I_{R}\otimes\widetilde{V}_{2}\right)\right\Vert \\
 & =\inf_{U_{W}\in\mathbb{U}}\left\Vert I_{R}\otimes\left(\widetilde{V}_{1}-\left(U_{W}\otimes I_{B}\right)\widetilde{V}_{2}\right)\right\Vert \\
 & =\inf_{U_{W}\in\mathbb{U}}\left\Vert \widetilde{V}_{1}-\left(U_{W}\otimes I_{B}\right)\widetilde{V}_{2}\right\Vert ,\label{eq:final-eq-bures-channels-proof}
\end{align}
The equality in~\eqref{eq:minimax-step-bures-channels} follows from
the minimax theorem: indeed, the set of density operators is convex
and compact, the set of contractions is convex and compact, the function
\begin{equation}
\rho\mapsto\Re\!\left[\Tr\!\left[V_{1}^{\dag}\left(W_{E}\otimes I_{B}\right)V_{2}\rho\right]\right]
\end{equation}
is linear in $\rho$ (and thus convex in $\rho$), and the function
\begin{equation}
W\mapsto\Re\!\left[\Tr\!\left[V_{1}^{\dag}\left(W_{E}\otimes I_{B}\right)V_{2}\rho\right]\right]
\end{equation}
is real linear in $W$ (and thus concave in $W$). The equality in~\eqref{eq:rho_to_psi} follows from~\eqref{eq:reduction-fidelity-chs-last}. Additionally, the
equality in~\eqref{eq:define-tilde-V} follows from defining $\widetilde{V}_{i}\colon\mathbb{C}^{d_{A}}\mapsto\mathbb{C}^{2}\otimes\mathbb{C}^{d_{E}}\otimes\mathbb{C}^{d_{B}}$
as the following isometry, for $i\in\left\{ 1,2\right\} $:
\begin{equation}
\widetilde{V}_{i}|\varphi\rangle_{A}\coloneqq|0\rangle\otimes V_{i}|\varphi\rangle_{A}.
\end{equation}
That is, these isometries can be written as in~\eqref{eq:qubit-augmented-iso-ext-1}--\eqref{eq:qubit-augmented-iso-ext-2}.
The final equality in~\eqref{eq:final-eq-bures-channels-proof} follows
because the spectral norm is multiplicative on tensor products (i.e.,
$\left\Vert A\otimes B\right\Vert =\left\Vert A\right\Vert \left\Vert B\right\Vert $)
and $\left\Vert I\right\Vert =1$.
\end{proof}
\begin{rem}
\label{rem:isometric-overlap-to-kraus}Relevant for our applications
in channel discrimination, we can alternatively write the overlap
term in Theorem~\ref{thm:Bures-distance-channels} in terms of the
Kraus operators $\left(K_{i}\right)_{i}$ and $\left(L_{j}\right)_{j}$
of channels $\mathcal{N}_{1}$ and $\mathcal{N}_{2}$, respectively,
as follows:
\begin{align}
V_{1}^{\dag}\left(W_{E}\otimes I_{B}\right)V_{2} & =\left(\sum_{i}\langle i|\otimes K_{i}^{\dag}\right)\left(W_{E}\otimes I_{B}\right)\left(\sum_{j}|j\rangle\otimes L_{j}\right)\\
 & =\sum_{i,j}\langle i|W|j\rangle\otimes K_{i}^{\dag}L_{j}\\
 & =\sum_{i,j}W_{i,j}K_{i}^{\dag}L_{j},
\end{align}
where we defined $W_{i,j}\coloneqq\langle i|W|j\rangle$. This allows
for connecting to the formulations in~\cite{Yuan2017}.
\end{rem}

\section{SLD Fisher information of states and channels}\label{sec:SLD-Fisher-information-states-chs}

\subsection{SLD Fisher information of states}

In this section, we provide alternative formulas for the symmetric
logarithmic derivative (SLD) Fisher information of a smooth family
$\left(\rho_{\theta}\right)_{\theta\in\Theta}$ of positive definite
states. These formulations are related to those given in~\cite[Theorem~1]{Fujiwara2008},
but we expand on the formulation given there and provide an alternative
proof. To begin with, let us recall that the SLD Fisher information
of a smooth family $\left(\rho_{\theta}\right)_{\theta\in\Theta}$
of positive definite states can be written as follows~\cite[Eq.~(88)]{Sidhu2020}:
\begin{equation}
I_{F}\!\left(\theta;\left(\rho_{\theta}\right)_{\theta}\right)=\sum_{\ell,m}\frac{2}{\lambda_{\ell}+\lambda_{m}}\Tr\!\left[\Pi_{\ell}\left(\partial_{\theta}\rho_{\theta}\right)\Pi_{m}\left(\partial_{\theta}\rho_{\theta}\right)\right],
\end{equation}
where $\partial_{\theta}\equiv\frac{\partial}{\partial\theta}$,
\begin{equation}
\rho_{\theta}=\sum_{k}\lambda_{k}\Pi_{k}\label{eq:spectral-decomp-rho-theta}
\end{equation}
is the spectral decomposition of $\rho_{\theta}$, with the dependence
of each eigenvalue $\lambda_{k}$ and each eigenprojection $\Pi_{k}$
on $\theta$ suppressed in the notation. It is well known that the
following equalities hold, relating Bures distance to SLD Fisher information
\cite{Huebner1992}:
\begin{align}
d_{B}^{2}\!\left(\rho_{\theta},\rho_{\theta+\delta}\right) & =\frac{\delta^{2}}{4}I_{F}\!\left(\theta;\left(\rho_{\theta}\right)_{\theta}\right)+o(\delta^{2})\label{eq:SLD-Fisher-to-Bures}\\
I_{F}\!\left(\theta;\left(\rho_{\theta}\right)_{\theta}\right) & =2\left.\frac{\partial^{2}}{\partial\delta^{2}}d_{B}^{2}\!\left(\rho_{\theta},\rho_{\theta+\delta}\right)\right|_{\delta=0}\\
 & =4\lim_{\delta\to0}\frac{d_{B}^{2}\!\left(\rho_{\theta},\rho_{\theta+\delta}\right)}{\delta^{2}}.
\end{align}

\begin{thm}
\label{thm:SLD-Fisher-states}For a smooth family $\left(\rho_{\theta}\right)_{\theta\in\Theta}$
of positive definite states, the following equality holds: 
\begin{align}
\frac{1}{4}I_{F}\!\left(\theta;\left(\rho_{\theta}\right)_{\theta}\right) & =\inf_{\left(U_{\theta}\right)_{\theta\in\Theta}}\left\Vert \partial_{\theta}\left(\left(U_{\theta}\otimes I\right)|\psi_{\theta}\rangle\right)\right\Vert ^{2}\label{eq:unitary-herm-op-SLD-Fisher-states}\\
 & =\inf_{\left(H_{\theta}\right)_{\theta\in\Theta}}\left\Vert \left(\partial_{\theta}\sqrt{\rho_{\theta}}\right)-i\sqrt{\rho_{\theta}}H_{\theta}\right\Vert _{2}^{2}\label{eq:herm-opt-SLD-fisher-states}\\
 & =\Tr\!\left[\left(\partial_{\theta}\sqrt{\rho_{\theta}}\right)^{2}\right]-\Tr\!\left[\left(H_{\theta}^{\star}\right)^{2}\rho_{\theta}\right],
\end{align}
where $|\psi_{\theta}\rangle\coloneqq\left(I\otimes\sqrt{\rho_{\theta}}\right)|\Gamma\rangle$
is the canonical purification of $\rho_{\theta}$ and the optimizations
in~\eqref{eq:unitary-herm-op-SLD-Fisher-states} and~\eqref{eq:herm-opt-SLD-fisher-states}
are over every smooth family $\left(U_{\theta}\right)_{\theta\in\Theta}$
of unitary operators and every smooth family $\left(H_{\theta}\right)_{\theta\in\Theta}$
of Hermitian operators, respectively. The optimal choice $H_{\theta}^{\star}$
in~\eqref{eq:herm-opt-SLD-fisher-states} satisfies
\begin{equation}
\rho_{\theta}H_{\theta}^{\star}+H_{\theta}^{\star}\rho_{\theta}=i\left[\partial_{\theta}\sqrt{\rho_{\theta}},\sqrt{\rho_{\theta}}\right]
\end{equation}
and is thus given by
\begin{align}
H_{\theta}^{\star} & \coloneqq i\int_{0}^{\infty}ds\ e^{-s\rho_{\theta}}\left[\partial_{\theta}\sqrt{\rho_{\theta}},\sqrt{\rho_{\theta}}\right]e^{-s\rho_{\theta}}.\\
 & =-i\sum_{\ell,m}\frac{\lambda_{\ell}^{\frac{1}{2}}-\lambda_{m}^{\frac{1}{2}}}{\left(\lambda_{\ell}+\lambda_{m}\right)\left(\lambda_{\ell}^{\frac{1}{2}}+\lambda_{m}^{\frac{1}{2}}\right)}\,\Pi_{\ell}\left(\partial_{\theta}\rho_{\theta}\right)\Pi_{m},
\end{align}
where we take the spectral decomposition of $\rho_{\theta}$ as in
\eqref{eq:spectral-decomp-rho-theta}.
\end{thm}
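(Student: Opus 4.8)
The plan is to establish the chain of equalities from left to right in three moves: (i) reduce the optimization over smooth unitary families in \eqref{eq:unitary-herm-op-SLD-Fisher-states} to the Hermitian optimization in \eqref{eq:herm-opt-SLD-fisher-states}; (ii) solve the latter quadratic optimization in closed form to obtain the optimality equation, the integral and spectral expressions for $H_\theta^\star$, and the value $\Tr[(\partial_\theta\sqrt{\rho_\theta})^2]-\Tr[(H_\theta^\star)^2\rho_\theta]$; and (iii) verify that this value equals $\tfrac14 I_F$ by matching it against the stated spectral formula for the SLD Fisher information.

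For step (i), I would write $(U_\theta\otimes I)|\psi_\theta\rangle=(U_\theta\otimes\sqrt{\rho_\theta})|\Gamma\rangle$ and differentiate. Differentiating $U_\theta^\dagger U_\theta=I$ shows $U_\theta^\dagger\partial_\theta U_\theta$ is anti-Hermitian, so I can write $\partial_\theta U_\theta=-iU_\theta G_\theta$ with $G_\theta\coloneqq iU_\theta^\dagger\partial_\theta U_\theta\in\mathbb{H}$. The factor $U_\theta\otimes I$ then pulls out as a norm-preserving unitary, leaving $\|(-iG_\theta\otimes\sqrt{\rho_\theta}+I\otimes\partial_\theta\sqrt{\rho_\theta})|\Gamma\rangle\|^2$; applying the transpose trick \eqref{eq:transpose-trick} and the Hilbert--Schmidt identity \eqref{eq:hilbert-schmidt-euclidean-norm} collapses this to $\|\partial_\theta\sqrt{\rho_\theta}-i\sqrt{\rho_\theta}G_\theta^T\|_2^2$. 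The conceptual point, and what I expect to be the main obstacle to state cleanly, is that the objective at a fixed $\theta$ depends on the family $(U_\theta)_\theta$ only through the single Hermitian operator $G_\theta$, and conversely any prescribed value of $G_\theta$ is realized by some smooth unitary family (for instance $U_{\theta'}=U_\theta\exp(-i(\theta'-\theta)G_\theta)$). Hence the infimum over families is really a pointwise infimum over $G_\theta\in\mathbb{H}$; since $G\mapsto G^T$ is a bijection of $\mathbb{H}$ onto itself (as $(G^T)^\dagger=(G^\dagger)^T=G^T$), setting $H_\theta=G_\theta^T$ yields equality of \eqref{eq:unitary-herm-op-SLD-Fisher-states} and \eqref{eq:herm-opt-SLD-fisher-states}.

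For step (ii), abbreviating $S\coloneqq\sqrt{\rho_\theta}$ and $\dot S\coloneqq\partial_\theta\sqrt{\rho_\theta}$ (both Hermitian), I would expand $f(H)\coloneqq\|\dot S-iSH\|_2^2=\Tr[\dot S^2]-i\Tr[[\dot S,S]H]+\Tr[\rho_\theta H^2]$, a real quadratic in the Hermitian variable $H$. Setting its first variation to zero against all Hermitian directions gives the normal equation $\rho_\theta H^\star+H^\star\rho_\theta=i[\partial_\theta\sqrt{\rho_\theta},\sqrt{\rho_\theta}]$, exactly as stated. Positive definiteness of $\rho_\theta$ then guarantees the integral solution $H^\star=i\int_0^\infty e^{-s\rho_\theta}[\dot S,S]e^{-s\rho_\theta}\,ds$, which one checks by differentiating $e^{-s\rho_\theta}[\dot S,S]e^{-s\rho_\theta}$ in $s$ and using $e^{-s\rho_\theta}\to0$; passing to the eigenbasis \eqref{eq:spectral-decomp-rho-theta} and using $\partial_\theta\rho_\theta=\dot S S+S\dot S$ (so that $\Pi_\ell\dot S\Pi_m=(\lambda_\ell^{1/2}+\lambda_m^{1/2})^{-1}\Pi_\ell(\partial_\theta\rho_\theta)\Pi_m$) yields the stated spectral form. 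Substituting the normal equation back into $f$ gives the cancellation $-i\Tr[[\dot S,S]H^\star]=-2\Tr[\rho_\theta(H^\star)^2]$, so $f(H^\star)=\Tr[(\partial_\theta\sqrt{\rho_\theta})^2]-\Tr[(H_\theta^\star)^2\rho_\theta]$.

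For step (iii), I would compute $f(H^\star)$ entirely in the eigenbasis. Writing $R_{\ell m}\coloneqq\Pi_\ell(\partial_\theta\rho_\theta)\Pi_m$, one finds $\Tr[\dot S^2]=\sum_{\ell,m}(\lambda_\ell^{1/2}+\lambda_m^{1/2})^{-2}\|R_{\ell m}\|_2^2$ and, after symmetrizing $\lambda_\ell\mapsto\tfrac12(\lambda_\ell+\lambda_m)$ (legitimate since $\|R_{\ell m}\|_2=\|R_{m\ell}\|_2$), $\Tr[\rho_\theta(H^\star)^2]=\sum_{\ell,m}\tfrac{(\lambda_\ell^{1/2}-\lambda_m^{1/2})^2}{2(\lambda_\ell+\lambda_m)(\lambda_\ell^{1/2}+\lambda_m^{1/2})^2}\|R_{\ell m}\|_2^2$. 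The elementary identity $1-\tfrac{(\lambda_\ell^{1/2}-\lambda_m^{1/2})^2}{2(\lambda_\ell+\lambda_m)}=\tfrac{(\lambda_\ell^{1/2}+\lambda_m^{1/2})^2}{2(\lambda_\ell+\lambda_m)}$ then collapses the difference to $\sum_{\ell,m}\tfrac{1}{2(\lambda_\ell+\lambda_m)}\|R_{\ell m}\|_2^2$, which is precisely $\tfrac14 I_F$ after recognizing $\|R_{\ell m}\|_2^2=\Tr[\Pi_\ell(\partial_\theta\rho_\theta)\Pi_m(\partial_\theta\rho_\theta)]$. As a consistency check, the equality of the leftmost quantity $\tfrac14 I_F$ with \eqref{eq:unitary-herm-op-SLD-Fisher-states} is also anticipated by the Bures relation \eqref{eq:SLD-Fisher-to-Bures} together with the purification form of the Bures distance, since $(U_\theta\otimes I)|\psi_\theta\rangle$ runs over smooth families of purifications of $\rho_\theta$; I would mention this but base the rigorous argument on the explicit computation above, so as to avoid having to show that the optimal purification family is attained.
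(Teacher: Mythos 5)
Your proposal is correct and follows essentially the same route as the paper's proof: reduce the unitary-family optimization to a pointwise Hermitian optimization via $H_\theta = iU_\theta^\dag\partial_\theta U_\theta$ and the transpose trick, solve the resulting quadratic by the Sylvester--Lyapunov normal equation $\rho_\theta H_\theta^\star + H_\theta^\star\rho_\theta = i[\partial_\theta\sqrt{\rho_\theta},\sqrt{\rho_\theta}]$, and match the optimal value to $\tfrac14 I_F$ in the eigenbasis using the identity $1-\tfrac{(x^{1/2}-y^{1/2})^2}{2(x+y)}=\tfrac{(x^{1/2}+y^{1/2})^2}{2(x+y)}$. The only (harmless) deviations are cosmetic: you justify the Hermitian reparametrization by differentiating $U_\theta^\dag U_\theta=I$ and exhibiting the explicit family $U_{\theta'}=U_\theta e^{-i(\theta'-\theta)G_\theta}$ instead of invoking Duhamel's formula and a Dyson series, and you evaluate $\Tr[\rho_\theta(H_\theta^\star)^2]$ directly with a symmetrization rather than via $\tfrac{i}{2}\Tr[[\partial_\theta\sqrt{\rho_\theta},\sqrt{\rho_\theta}]H_\theta^\star]$.
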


\begin{proof}
Consider that
\begin{align}
\inf_{\left(U_{\theta}\right)_{\theta\in\Theta}}\left\Vert \partial_{\theta}\left(\left(U_{\theta}\otimes I\right)|\psi_{\theta}\rangle\right)\right\Vert ^{2} & =\inf_{\left(U_{\theta}\right)_{\theta\in\Theta}}\left\Vert \left(\left(\partial_{\theta}U_{\theta}\right)\otimes I\right)|\psi_{\theta}\rangle+\left(U_{\theta}\otimes I\right)|\partial_{\theta}\psi_{\theta}\rangle\right\Vert ^{2}\\
 & =\inf_{\left(U_{\theta}\right)_{\theta\in\Theta}}\left\Vert U_{\theta}^{\dag}\left[\left(\left(\partial_{\theta}U_{\theta}\right)\otimes I\right)|\psi_{\theta}\rangle+\left(U_{\theta}\otimes I\right)|\partial_{\theta}\psi_{\theta}\rangle\right]\right\Vert ^{2}\\
 & =\inf_{\left(U_{\theta}\right)_{\theta\in\Theta}}\left\Vert \left(U_{\theta}^{\dag}\left(\partial_{\theta}U_{\theta}\right)\otimes I\right)|\psi_{\theta}\rangle+|\partial_{\theta}\psi_{\theta}\rangle\right\Vert ^{2}\\
 & =\inf_{\left(U_{\theta}\right)_{\theta\in\Theta}}\left\Vert \left|\partial_{\theta}\psi_{\theta}\right\rangle +\left(U_{\theta}^{\dag}\left(\partial_{\theta}U_{\theta}\right)\otimes I\right)|\psi_{\theta}\rangle\right\Vert ^{2}.
\end{align}
We claim that $\left(H_{\theta}\right)_{\theta\in\Theta}$, where
$iU_{\theta}^{\dag}\left(\partial_{\theta}U_{\theta}\right)\equiv H_{\theta}\in\mathbb{H}$,
is a smooth family of Hermitian operators. Indeed, by setting $U_{\theta}=e^{-iG_{\theta}}$,
we conclude from Duhamel's formula (see, e.g.,~\cite[Proposition~47]{Wilde2025})
that
\begin{align}
i\left(U_{\theta}\right)^{\dag}\left(\partial_{\theta}U_{\theta}\right) & =ie^{iG_{\theta}}\left(\partial_{\theta}e^{-iG_{\theta}}\right)\label{eq:hermitian-op-from-unitary-1}\\
 & =ie^{iG_{\theta}}\int_{0}^{1}dt\ e^{-iG_{\theta}t}\left(\partial_{\theta}\left(-iG_{\theta}\right)\right)e^{-iG_{\theta}\left(1-t\right)}\\
 & =\int_{0}^{1}dt\ e^{iG_{\theta}\left(1-t\right)}\left(\partial_{\theta}G_{\theta}\right)e^{-iG_{\theta}\left(1-t\right)},
\end{align}
which is evidently Hermitian. To solve for $U_{\theta}$ from $H_{\theta}$,
we can solve the following matrix differential equation for $U_{\theta}$:
\begin{equation}
H_{\theta}=i\left(U_{\theta}\right)^{\dag}\left(\partial_{\theta}U_{\theta}\right)\qquad\Longleftrightarrow\qquad-iU_{\theta}H_{\theta}=\partial_{\theta}U_{\theta},
\end{equation}
which is equivalent to the time-dependent Schr\"odinger equation
and well known to have a solution given by the following Dyson series
\cite{Reed1975}:
\begin{equation}
U_{\theta}=I+\sum_{k=1}^{\infty}\left(-i\right)^{k}\int_{0}^{t}dt_{1}\int_{0}^{t_{1}}dt_{2}\cdots\int_{0}^{t_{k-1}}dt_{k}\,H(t_{1})H(t_{2})\cdots H(t_{k}).\label{eq:hermitian-op-from-unitary-2}
\end{equation}
This Dyson series is often abbreviated as the time-ordered exponential
$\mathcal{T}\exp\!\left(-i\int_{0}^{t}d\tau\,H(\tau)\right)$. Continuing,
we make the substitution $U_{\theta}^{\dag}\left(\partial_{\theta}U_{\theta}\right)\to-iH_{\theta}$
and find that
\begin{align}
 & \inf_{\left(U_{\theta}\right)_{\theta\in\Theta}}\left\Vert \left|\partial_{\theta}\psi_{\theta}\right\rangle +\left(U_{\theta}^{\dag}\left(\partial_{\theta}U_{\theta}\right)\otimes I\right)|\psi_{\theta}\rangle\right\Vert ^{2}\nonumber \\
 & =\inf_{\left(H_{\theta}\right)_{\theta\in\Theta}}\left\Vert \left|\partial_{\theta}\psi_{\theta}\right\rangle -i\left(H_{\theta}\otimes I\right)|\psi_{\theta}\rangle\right\Vert ^{2}\\
 & =\inf_{\left(H_{\theta}\right)_{\theta\in\Theta}}\left\Vert \left(I\otimes\left(\partial_{\theta}\sqrt{\rho_{\theta}}\right)\right)|\Gamma\rangle-i\left(H_{\theta}\otimes\sqrt{\rho_{\theta}}\right)|\Gamma\rangle\right\Vert ^{2}\\
 & =\inf_{\left(H_{\theta}\right)_{\theta\in\Theta}}\left\Vert \left(I\otimes\left(\partial_{\theta}\sqrt{\rho_{\theta}}\right)\right)|\Gamma\rangle-i\left(I\otimes\sqrt{\rho_{\theta}}H_{\theta}^{T}\right)|\Gamma\rangle\right\Vert ^{2}\\
 & =\inf_{\left(H_{\theta}\right)_{\theta\in\Theta}}\left\Vert \left(I\otimes\left(\partial_{\theta}\sqrt{\rho_{\theta}}\right)-i\left(I\otimes\sqrt{\rho_{\theta}}H_{\theta}\right)\right)|\Gamma\rangle\right\Vert ^{2}\\
 & =\inf_{\left(H_{\theta}\right)_{\theta\in\Theta}}\left\Vert \left(\partial_{\theta}\sqrt{\rho_{\theta}}\right)-i\sqrt{\rho_{\theta}}H_{\theta}\right\Vert _{2}^{2}.
\end{align}
At this point, observe that the function $H_{\theta}\mapsto\left\Vert \left(\partial_{\theta}\sqrt{\rho_{\theta}}\right)-i\sqrt{\rho_{\theta}}H_{\theta}\right\Vert _{2}^{2}$
is convex in $H_{\theta}$, because it is a composition of a convex
function (the Schatten $2$-norm) and a convex and non-decreasing
function (the square function $x\mapsto x^{2}$). Since our goal is
to minimize this expression, convexity in $H_{\theta}$ implies that
a local minimum is a global minimum. Let us expand the objective function
as follows:
\begin{align}
\left\Vert \left(\partial_{\theta}\sqrt{\rho_{\theta}}\right)-i\sqrt{\rho_{\theta}}H_{\theta}\right\Vert _{2}^{2} & =\Tr\!\left[\left(\left(\partial_{\theta}\sqrt{\rho_{\theta}}\right)-i\sqrt{\rho_{\theta}}H_{\theta}\right)^{\dag}\left(\left(\partial_{\theta}\sqrt{\rho_{\theta}}\right)-i\sqrt{\rho_{\theta}}H_{\theta}\right)\right]\\
 & =\Tr\!\left[\left(\left(\partial_{\theta}\sqrt{\rho_{\theta}}\right)+iH_{\theta}\sqrt{\rho_{\theta}}\right)\left(\left(\partial_{\theta}\sqrt{\rho_{\theta}}\right)-i\sqrt{\rho_{\theta}}H_{\theta}\right)\right]\\
 & =\Tr\!\left[\left(\partial_{\theta}\sqrt{\rho_{\theta}}\right)^{2}\right]-i\Tr\!\left[\left(\partial_{\theta}\sqrt{\rho_{\theta}}\right)\sqrt{\rho_{\theta}}H_{\theta}\right]\nonumber \\
 & \qquad+i\Tr\!\left[H_{\theta}\sqrt{\rho_{\theta}}\left(\partial_{\theta}\sqrt{\rho_{\theta}}\right)\right]+\Tr\!\left[H_{\theta}^{2}\rho_{\theta}\right]\\
 & =\Tr\!\left[\left(\partial_{\theta}\sqrt{\rho_{\theta}}\right)^{2}\right]+\Tr\!\left[H_{\theta}^{2}\rho_{\theta}\right]-i\Tr\!\left[\left[\left(\partial_{\theta}\sqrt{\rho_{\theta}}\right),\sqrt{\rho_{\theta}}\right]H_{\theta}\right].\label{eq:obj-func-SLD-Fisher-states}
\end{align}
We can then take the matrix derivative of~\eqref{eq:obj-func-SLD-Fisher-states}
with respect to $H_{\theta}$ and find that
\begin{multline}
\frac{\partial}{\partial H_{\theta}}\left(\Tr\!\left[\left(\partial_{\theta}\sqrt{\rho_{\theta}}\right)^{2}\right]+\Tr\!\left[H_{\theta}^{2}\rho_{\theta}\right]-i\Tr\!\left[\left[\left(\partial_{\theta}\sqrt{\rho_{\theta}}\right),\sqrt{\rho_{\theta}}\right]H_{\theta}\right]\right)\\
=\rho_{\theta}H_{\theta}+H_{\theta}\rho_{\theta}-i\left[\partial_{\theta}\sqrt{\rho_{\theta}},\sqrt{\rho_{\theta}}\right],
\end{multline}
where we used the facts that $\frac{\partial}{\partial X}\Tr\!\left[AX^{2}\right]=AX+XA$
and $\frac{\partial}{\partial X}\Tr\!\left[AX\right]=A$. Setting
the matrix derivative equal to zero and rewriting implies that the
optimal $H_{\theta}^{\star}$ satisfies the following equation:
\begin{equation}
\rho_{\theta}H_{\theta}^{\star}+H_{\theta}^{\star}\rho_{\theta}=i\left[\partial_{\theta}\sqrt{\rho_{\theta}},\sqrt{\rho_{\theta}}\right].\label{eq:matrix-grad-eq-SLD-Fisher-states}
\end{equation}
By plugging~\eqref{eq:matrix-grad-eq-SLD-Fisher-states} into~\eqref{eq:obj-func-SLD-Fisher-states},
observe that the objective function can be written as
\begin{align}
 & \Tr\!\left[\left(\partial_{\theta}\sqrt{\rho_{\theta}}\right)^{2}\right]+\Tr\!\left[\left(H_{\theta}^{\star}\right)^{2}\rho_{\theta}\right]-i\Tr\!\left[\left[\left(\partial_{\theta}\sqrt{\rho_{\theta}}\right),\sqrt{\rho_{\theta}}\right]H_{\theta}^{\star}\right]\nonumber \\
 & =\Tr\!\left[\left(\partial_{\theta}\sqrt{\rho_{\theta}}\right)^{2}\right]+\Tr\!\left[\left(H_{\theta}^{\star}\right)^{2}\rho_{\theta}\right]-\Tr\!\left[\left(\rho_{\theta}H_{\theta}^{\star}+H_{\theta}^{\star}\rho_{\theta}\right)H_{\theta}^{\star}\right]\\
 & =\Tr\!\left[\left(\partial_{\theta}\sqrt{\rho_{\theta}}\right)^{2}\right]+\Tr\!\left[\left(H_{\theta}^{\star}\right)^{2}\rho_{\theta}\right]-2\Tr\!\left[\left(H_{\theta}^{\star}\right)^{2}\rho_{\theta}\right]\\
 & =\Tr\!\left[\left(\partial_{\theta}\sqrt{\rho_{\theta}}\right)^{2}\right]-\Tr\!\left[\left(H_{\theta}^{\star}\right)^{2}\rho_{\theta}\right].
\end{align}
By employing~\eqref{eq:matrix-grad-eq-SLD-Fisher-states}, this formula
can alternatively be written as
\begin{equation}
\Tr\!\left[\left(\partial_{\theta}\sqrt{\rho_{\theta}}\right)^{2}\right]-\frac{i}{2}\Tr\!\left[\left[\left(\partial_{\theta}\sqrt{\rho_{\theta}}\right),\sqrt{\rho_{\theta}}\right]H_{\theta}^{\star}\right].
\end{equation}

The equation in~\eqref{eq:matrix-grad-eq-SLD-Fisher-states} has the
form of a Sylvester--Lyapunov equation, which is well known to have
the following solution~\cite[Theorem~5]{Lancaster1970}:
\begin{align}
H_{\theta}^{\star} & =i\int_{0}^{\infty}ds\ e^{-s\rho_{\theta}}\left[\partial_{\theta}\sqrt{\rho_{\theta}},\sqrt{\rho_{\theta}}\right]e^{-s\rho_{\theta}}\\
 & =i\sum_{\ell,m}\frac{1}{\lambda_{\ell}+\lambda_{m}}\Pi_{\ell}\left[\partial_{\theta}\sqrt{\rho_{\theta}},\sqrt{\rho_{\theta}}\right]\Pi_{m},
\end{align}
where we employed the integral $\int_{0}^{\infty}ds\ e^{-s\left(x+y\right)}=\frac{1}{x+y}$,
holding for $x,y>0$, as well as the spectral decomposition of $\rho_{\theta}$.
Now recall that (see, e.g.,~\cite[Theorem~42]{Wilde2025})
\begin{align}
\partial_{\theta}\sqrt{\rho_{\theta}} & =\sum_{\ell,m}f_{x^{\frac{1}{2}}}^{[1]}(\lambda_{\ell},\lambda_{m})\Pi_{\ell}\left(\partial_{\theta}\rho_{\theta}\right)\Pi_{m}\\
 & =\sum_{\ell,m}\frac{1}{\lambda_{\ell}^{\frac{1}{2}}+\lambda_{m}^{\frac{1}{2}}}\Pi_{\ell}\left(\partial_{\theta}\rho_{\theta}\right)\Pi_{m},
\end{align}
where we used the first divided difference of the square-root function,
defined for $x,y>0$ as
\begin{equation}
f_{x^{\frac{1}{2}}}^{[1]}(x,y)\coloneqq\begin{cases}
\frac{1}{2x^{\frac{1}{2}}} & :x=y\\
\frac{x^{\frac{1}{2}}-y^{\frac{1}{2}}}{x-y} & :x\neq y
\end{cases}.
\end{equation}
This further implies that
\begin{align}
\left[\partial_{\theta}\sqrt{\rho_{\theta}},\sqrt{\rho_{\theta}}\right] & =\left[\sum_{r,s}\frac{1}{\lambda_{r}^{\frac{1}{2}}+\lambda_{s}^{\frac{1}{2}}}\Pi_{r}\left(\partial_{\theta}\rho_{\theta}\right)\Pi_{s},\sqrt{\rho_{\theta}}\right]\\
 & =\sum_{r,s}\frac{1}{\lambda_{r}^{\frac{1}{2}}+\lambda_{s}^{\frac{1}{2}}}\left[\Pi_{r}\left(\partial_{\theta}\rho_{\theta}\right)\Pi_{s},\sqrt{\rho_{\theta}}\right]\\
 & =\sum_{r,s}\frac{1}{\lambda_{r}^{\frac{1}{2}}+\lambda_{s}^{\frac{1}{2}}}\left(\Pi_{r}\left(\partial_{\theta}\rho_{\theta}\right)\Pi_{s}\sqrt{\rho_{\theta}}-\sqrt{\rho_{\theta}}\Pi_{r}\left(\partial_{\theta}\rho_{\theta}\right)\Pi_{s}\right)\\
 & =\sum_{r,s}\left(\frac{\lambda_{s}^{\frac{1}{2}}-\lambda_{r}^{\frac{1}{2}}}{\lambda_{r}^{\frac{1}{2}}+\lambda_{s}^{\frac{1}{2}}}\right)\Pi_{r}\left(\partial_{\theta}\rho_{\theta}\right)\Pi_{s},
\end{align}
and
\begin{align}
H_{\theta}^{\star} & =i\sum_{\ell,m}\frac{1}{\lambda_{\ell}+\lambda_{m}}\Pi_{\ell}\left[\partial_{\theta}\sqrt{\rho_{\theta}},\sqrt{\rho_{\theta}}\right]\Pi_{m}\\
 & =i\sum_{\ell,m}\frac{1}{\lambda_{\ell}+\lambda_{m}}\Pi_{\ell}\sum_{r,s}\left(\frac{\lambda_{s}^{\frac{1}{2}}-\lambda_{r}^{\frac{1}{2}}}{\lambda_{r}^{\frac{1}{2}}+\lambda_{s}^{\frac{1}{2}}}\right)\Pi_{r}\left(\partial_{\theta}\rho_{\theta}\right)\Pi_{s}\Pi_{m}\\
 & =i\sum_{\ell,m,r,s}\frac{\lambda_{s}^{\frac{1}{2}}-\lambda_{r}^{\frac{1}{2}}}{\left(\lambda_{\ell}+\lambda_{m}\right)\left(\lambda_{r}^{\frac{1}{2}}+\lambda_{s}^{\frac{1}{2}}\right)}\,\Pi_{\ell}\Pi_{r}\left(\partial_{\theta}\rho_{\theta}\right)\Pi_{s}\Pi_{m}\\
 & =i\sum_{\ell,m}\frac{\lambda_{m}^{\frac{1}{2}}-\lambda_{\ell}^{\frac{1}{2}}}{\left(\lambda_{\ell}+\lambda_{m}\right)\left(\lambda_{\ell}^{\frac{1}{2}}+\lambda_{m}^{\frac{1}{2}}\right)}\,\Pi_{\ell}\left(\partial_{\theta}\rho_{\theta}\right)\Pi_{m}\\
 & =-i\sum_{\ell,m}\frac{\left(\lambda_{\ell}^{\frac{1}{2}}-\lambda_{m}^{\frac{1}{2}}\right)}{\left(\lambda_{\ell}+\lambda_{m}\right)\left(\lambda_{\ell}^{\frac{1}{2}}+\lambda_{m}^{\frac{1}{2}}\right)}\,\Pi_{\ell}\left(\partial_{\theta}\rho_{\theta}\right)\Pi_{m}.
\end{align}

Then consider that
\begin{align}
 & \Tr\!\left[\left(\partial_{\theta}\sqrt{\rho_{\theta}}\right)^{2}\right]-\frac{i}{2}\Tr\!\left[\left[\left(\partial_{\theta}\sqrt{\rho_{\theta}}\right),\sqrt{\rho_{\theta}}\right]H_{\theta}^{\star}\right]\nonumber \\
 & =\Tr\!\left[\left(\partial_{\theta}\sqrt{\rho_{\theta}}\right)^{2}\right]-\frac{i}{2}\Tr\!\left[\begin{array}{c}
\left(\sum_{r,s}\left(\frac{\lambda_{s}^{\frac{1}{2}}-\lambda_{r}^{\frac{1}{2}}}{\lambda_{r}^{\frac{1}{2}}+\lambda_{s}^{\frac{1}{2}}}\right)\Pi_{r}\left(\partial_{\theta}\rho_{\theta}\right)\Pi_{s}\right)\times\\
\left(-i\sum_{\ell,m}\frac{\left(\lambda_{\ell}^{\frac{1}{2}}-\lambda_{m}^{\frac{1}{2}}\right)}{\left(\lambda_{\ell}+\lambda_{m}\right)\left(\lambda_{\ell}^{\frac{1}{2}}+\lambda_{m}^{\frac{1}{2}}\right)}\,\Pi_{\ell}\left(\partial_{\theta}\rho_{\theta}\right)\Pi_{m}\right)
\end{array}\right]\\
 & =\Tr\!\left[\left(\partial_{\theta}\sqrt{\rho_{\theta}}\right)^{2}\right]-\frac{1}{2}\Tr\!\left[\begin{array}{c}
\left(\sum_{r,s}\left(\frac{\lambda_{s}^{\frac{1}{2}}-\lambda_{r}^{\frac{1}{2}}}{\lambda_{r}^{\frac{1}{2}}+\lambda_{s}^{\frac{1}{2}}}\right)\Pi_{r}\left(\partial_{\theta}\rho_{\theta}\right)\Pi_{s}\right)\times\\
\left(\sum_{\ell,m}\frac{\left(\lambda_{\ell}^{\frac{1}{2}}-\lambda_{m}^{\frac{1}{2}}\right)}{\left(\lambda_{\ell}+\lambda_{m}\right)\left(\lambda_{\ell}^{\frac{1}{2}}+\lambda_{m}^{\frac{1}{2}}\right)}\,\Pi_{\ell}\left(\partial_{\theta}\rho_{\theta}\right)\Pi_{m}\right)
\end{array}\right].\label{eq:SLD-fisher-purify-proof-1}
\end{align}
Furthermore,
\begin{align}
\Tr\!\left[\left(\partial_{\theta}\sqrt{\rho_{\theta}}\right)^{2}\right] & =\Tr\!\left[\left(\sum_{\ell,m}\frac{1}{\lambda_{\ell}^{\frac{1}{2}}+\lambda_{m}^{\frac{1}{2}}}\Pi_{\ell}\left(\partial_{\theta}\rho_{\theta}\right)\Pi_{m}\right)^{2}\right]\\
 & =\sum_{\ell,m,r,s}\left(\frac{1}{\lambda_{\ell}^{\frac{1}{2}}+\lambda_{m}^{\frac{1}{2}}}\right)\left(\frac{1}{\lambda_{r}^{\frac{1}{2}}+\lambda_{s}^{\frac{1}{2}}}\right)\Tr\!\left[\Pi_{\ell}\left(\partial_{\theta}\rho_{\theta}\right)\Pi_{m}\Pi_{r}\left(\partial_{\theta}\rho_{\theta}\right)\Pi_{s}\right]\\
 & =\sum_{\ell,m}\frac{1}{\left(\lambda_{\ell}^{\frac{1}{2}}+\lambda_{m}^{\frac{1}{2}}\right)^{2}}\Tr\!\left[\Pi_{\ell}\left(\partial_{\theta}\rho_{\theta}\right)\Pi_{m}\left(\partial_{\theta}\rho_{\theta}\right)\right].\label{eq:SLD-fisher-purify-proof-2}
\end{align}
and
\begin{align}
 & -\frac{i}{2}\Tr\!\left[\left[\left(\partial_{\theta}\sqrt{\rho_{\theta}}\right),\sqrt{\rho_{\theta}}\right]H_{\theta}^{\star}\right]\nonumber \\
 & =-\frac{1}{2}\Tr\!\left[\left(\sum_{r,s}\left(\frac{\lambda_{s}^{\frac{1}{2}}-\lambda_{r}^{\frac{1}{2}}}{\lambda_{r}^{\frac{1}{2}}+\lambda_{s}^{\frac{1}{2}}}\right)\Pi_{r}\left(\partial_{\theta}\rho_{\theta}\right)\Pi_{s}\right)\left(\sum_{\ell,m}\frac{\left(\lambda_{\ell}^{\frac{1}{2}}-\lambda_{m}^{\frac{1}{2}}\right)}{\left(\lambda_{\ell}+\lambda_{m}\right)\left(\lambda_{\ell}^{\frac{1}{2}}+\lambda_{m}^{\frac{1}{2}}\right)}\,\Pi_{\ell}\left(\partial_{\theta}\rho_{\theta}\right)\Pi_{m}\right)\right]\\
 & =-\frac{1}{2}\sum_{r,s,\ell,m}\frac{\left(\lambda_{s}^{\frac{1}{2}}-\lambda_{r}^{\frac{1}{2}}\right)\left(\lambda_{\ell}^{\frac{1}{2}}-\lambda_{m}^{\frac{1}{2}}\right)}{\left(\lambda_{r}^{\frac{1}{2}}+\lambda_{s}^{\frac{1}{2}}\right)\left(\lambda_{\ell}+\lambda_{m}\right)\left(\lambda_{\ell}^{\frac{1}{2}}+\lambda_{m}^{\frac{1}{2}}\right)}\Tr\!\left[\Pi_{r}\left(\partial_{\theta}\rho_{\theta}\right)\Pi_{s}\Pi_{\ell}\left(\partial_{\theta}\rho_{\theta}\right)\Pi_{m}\right]\\
 & =-\frac{1}{2}\sum_{\ell,m}\frac{\left(\lambda_{\ell}^{\frac{1}{2}}-\lambda_{m}^{\frac{1}{2}}\right)^{2}}{\left(\lambda_{\ell}^{\frac{1}{2}}+\lambda_{m}^{\frac{1}{2}}\right)^{2}\left(\lambda_{\ell}+\lambda_{m}\right)}\Tr\!\left[\Pi_{\ell}\left(\partial_{\theta}\rho_{\theta}\right)\Pi_{m}\left(\partial_{\theta}\rho_{\theta}\right)\right].\label{eq:SLD-fisher-purify-proof-3}
\end{align}
Finally, we combine~\eqref{eq:SLD-fisher-purify-proof-1},~\eqref{eq:SLD-fisher-purify-proof-2},
and~\eqref{eq:SLD-fisher-purify-proof-3} to conclude that
\begin{align}
 & \Tr\!\left[\left(\partial_{\theta}\sqrt{\rho_{\theta}}\right)^{2}\right]-\frac{i}{2}\Tr\!\left[\left[\left(\partial_{\theta}\sqrt{\rho_{\theta}}\right),\sqrt{\rho_{\theta}}\right]H_{\theta}^{\star}\right]\\
 & =\sum_{\ell,m}\left(\frac{1}{\left(\lambda_{\ell}^{\frac{1}{2}}+\lambda_{m}^{\frac{1}{2}}\right)^{2}}-\frac{\left(\lambda_{\ell}^{\frac{1}{2}}-\lambda_{m}^{\frac{1}{2}}\right)^{2}}{2\left(\lambda_{\ell}^{\frac{1}{2}}+\lambda_{m}^{\frac{1}{2}}\right)^{2}\left(\lambda_{\ell}+\lambda_{m}\right)}\right)\Tr\!\left[\Pi_{\ell}\left(\partial_{\theta}\rho_{\theta}\right)\Pi_{m}\left(\partial_{\theta}\rho_{\theta}\right)\right]\\
 & =\sum_{\ell,m}\frac{1}{\left(\lambda_{\ell}^{\frac{1}{2}}+\lambda_{m}^{\frac{1}{2}}\right)^{2}}\left(1-\frac{\left(\lambda_{\ell}^{\frac{1}{2}}-\lambda_{m}^{\frac{1}{2}}\right)^{2}}{2\left(\lambda_{\ell}+\lambda_{m}\right)}\right)\Tr\!\left[\Pi_{\ell}\left(\partial_{\theta}\rho_{\theta}\right)\Pi_{m}\left(\partial_{\theta}\rho_{\theta}\right)\right].
\end{align}
Now consider that, for $x,y>0$,
\begin{align}
\frac{1}{\left(x^{\frac{1}{2}}+y^{\frac{1}{2}}\right)^{2}}\left(1-\frac{\left(x^{\frac{1}{2}}-y^{\frac{1}{2}}\right)^{2}}{2\left(x+y\right)}\right) & =\frac{1}{\left(x^{\frac{1}{2}}+y^{\frac{1}{2}}\right)^{2}}\left(\frac{2\left(x+y\right)-\left(x^{\frac{1}{2}}-y^{\frac{1}{2}}\right)^{2}}{2\left(x+y\right)}\right)\\
 & =\frac{1}{x+y+2\sqrt{xy}}\left(\frac{2\left(x+y\right)-\left(x+y-2\sqrt{xy}\right)}{2\left(x+y\right)}\right)\\
 & =\frac{1}{x+y+2\sqrt{xy}}\left(\frac{x+y+2\sqrt{xy}}{2\left(x+y\right)}\right)\\
 & =\frac{1}{2\left(x+y\right)},
\end{align}
implying that
\begin{align}
 & \sum_{\ell,m}\left(\frac{1}{\left(\lambda_{\ell}^{\frac{1}{2}}+\lambda_{m}^{\frac{1}{2}}\right)^{2}}\left(1-\frac{\left(\lambda_{\ell}^{\frac{1}{2}}-\lambda_{m}^{\frac{1}{2}}\right)^{2}}{2\left(\lambda_{\ell}+\lambda_{m}\right)}\right)\right)\Tr\!\left[\Pi_{\ell}\left(\partial_{\theta}\rho_{\theta}\right)\Pi_{m}\left(\partial_{\theta}\rho_{\theta}\right)\right]\nonumber \\
 & =\sum_{\ell,m}\frac{1}{2\left(\lambda_{\ell}+\lambda_{m}\right)}\Tr\!\left[\Pi_{\ell}\left(\partial_{\theta}\rho_{\theta}\right)\Pi_{m}\left(\partial_{\theta}\rho_{\theta}\right)\right]\\
 & =\frac{1}{4}\sum_{\ell,m}\frac{2}{\lambda_{\ell}+\lambda_{m}}\Tr\!\left[\Pi_{\ell}\left(\partial_{\theta}\rho_{\theta}\right)\Pi_{m}\left(\partial_{\theta}\rho_{\theta}\right)\right]\\
 & =\frac{1}{4}I_{F}\!\left(\theta;\left(\rho_{\theta}\right)_{\theta}\right),
\end{align}
thus concluding the proof.
\end{proof}

\subsection{SLD Fisher information of channels}

The SLD Fisher information of a smooth family $\left(\mathcal{N}_{\theta}\right)_{\theta\in\Theta}$
of quantum channels is defined as follows~\cite{Fujiwara2001}:
\begin{align}
I_{F}\!\left(\theta;\left(\mathcal{N}_{\theta}\right)_{\theta\in\Theta}\right) & \coloneqq\sup_{\rho_{RA}\in\mathbb{D}}I_{F}\!\left(\theta;\left(\left(\id_{R}\otimes\mathcal{N}_{\theta}\right)(\rho_{RA})\right)_{\theta\in\Theta}\right).
\end{align}

Let us suppose that the channel $\mathcal{N}_{\theta}$ has a Kraus
decomposition as $\mathcal{N}_{\theta}(\rho)=\sum_{i}K_{i}^{\theta}\rho\left(K_{i}^{\theta}\right)^{\dag}$,
where $\left(K_{i}^{\theta}\right)_{\theta\in\Theta}$ is a smooth
family of Kraus operators for each $i$, and let us further suppose
that the number of Kraus operators for $\mathcal{N}_{\theta}$ is
fixed for all $\theta\in\Theta$. Then we take a canonical isometric
extension of the channel $\mathcal{N}_{\theta}$ as follows:
\begin{equation}
V_{\theta}\coloneqq\sum_{i}|i\rangle\otimes K_{i}^{\theta},\label{eq:canonical-isometric-ext-smooth-family}
\end{equation}
where $V_{\theta}\colon\mathbb{C}^{d_{A}}\mapsto\mathbb{C}^{d_{E}}\otimes\mathbb{C}^{d_{B}}$.

The following theorem is strongly related to~\cite[Theorem~4]{Fujiwara2008},
with a key difference being that the formulas for SLD Fisher information
of channels are expressed below in terms of the smooth family $\left(V_{\theta}\right)_{\theta\in\Theta}$
of canonical isometric extensions of the channels in the family $\left(\mathcal{N}_{\theta}\right)_{\theta\in\Theta}$.
\begin{thm}
\label{thm:SLD-Fisher-channels}The SLD Fisher information of a smooth
family $\left(\mathcal{N}_{\theta}\right)_{\theta\in\Theta}$ of quantum
channels can be written as follows:
\begin{align}
\frac{1}{4}I_{F}\!\left(\theta;\left(\mathcal{N}_{\theta}\right)_{\theta\in\Theta}\right) & =\sup_{|\psi\rangle_{RA}\in\mathbb{P}}\inf_{\left(U_{\theta}\right)_{\theta\in\Theta}}\left\Vert \partial_{\theta}\left[\left(U_{\theta}\otimes I_{RB}\right)V_{\theta}|\psi\rangle_{RA}\right]\right\Vert ^{2}\label{eq:SLD-channels-formula-1}\\
 & =\inf_{\left(H_{\theta}\right)_{\theta\in\Theta}}\left\Vert \left(\partial_{\theta}V_{\theta}\right)-i\left(H_{\theta}\otimes I_{B}\right)V_{\theta}\right\Vert ^{2}\label{eq:SLD-Fisher-channels-opt-Herm}\\
 & =\inf_{\left(U_{\theta}\right)_{\theta\in\Theta}}\left\Vert \partial_{\theta}\left[\left(U_{\theta}\otimes I_{B}\right)V_{\theta}\right]\right\Vert ^{2},\label{eq:SLD-channels-final-exp-thm}
\end{align}
where $V_{\theta}\colon\mathbb{C}^{d_{A}}\mapsto\mathbb{C}^{d_{E}}\otimes\mathbb{C}^{d_{B}}$
is a canonical isometric extension of $\mathcal{N}_{\theta}$, as
defined in~\eqref{eq:canonical-isometric-ext-smooth-family}, and
the optimizations in~\eqref{eq:SLD-Fisher-channels-opt-Herm} and
\eqref{eq:SLD-channels-final-exp-thm} are over every smooth family
$\left(H_{\theta}\right)_{\theta\in\Theta}$ of Hermitian operators
and every smooth family $\left(U_{\theta}\right)_{\theta\in\Theta}$
of unitary operators, respectively.
\end{thm}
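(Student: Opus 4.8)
The plan is to establish the chain \eqref{eq:SLD-channels-formula-1}--\eqref{eq:SLD-channels-final-exp-thm} by reducing the channel SLD Fisher information to an optimization over the isometric extensions $V_\theta$, converting the unitary optimization into a Hermitian one, and then interchanging the optimizations with a minimax theorem, in close parallel with the proof of Theorem~\ref{thm:Bures-distance-channels}. First I would reduce the supremum in the definition of $I_F(\theta;(\mathcal{N}_\theta)_\theta)$ from arbitrary bipartite states $\rho_{RA}$ to pure states $|\psi\rangle_{RA}$ with $R\cong A$: purifying an arbitrary input and then discarding the extra reference reproduces the mixed-input output, so by data processing (monotonicity of the state SLD Fisher information under partial trace) the pure-input outputs achieve at least the mixed-input value, giving equality of the suprema, exactly as in \eqref{eq:bures-distance-optimize-pure-states}. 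For a fixed pure input, the state $(I_R\otimes V_\theta)|\psi\rangle_{RA}$ on $REB$ is a purification of $(\id_R\otimes\mathcal{N}_\theta)(\psi_{RA})$ whose purifying system $E$ has dimension $d_E=d_Ad_B$, at least the rank of the output state. Applying \eqref{eq:unitary-herm-op-SLD-Fisher-states} of Theorem~\ref{thm:SLD-Fisher-states} to this purification, with the optimizing unitaries $U_\theta$ acting on $E$, then yields \eqref{eq:SLD-channels-formula-1}.

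To pass from \eqref{eq:SLD-channels-formula-1} to \eqref{eq:SLD-Fisher-channels-opt-Herm}, I would expand the derivative, act with the norm-preserving operator $U_\theta^\dag\otimes I_{RB}$, and set $H_\theta:=iU_\theta^\dag(\partial_\theta U_\theta)$. As in \eqref{eq:hermitian-op-from-unitary-1}--\eqref{eq:hermitian-op-from-unitary-2}, smooth unitary families correspond to smooth Hermitian families via the time-ordered exponential, and, crucially, the resulting objective $\|(I_R\otimes M_\theta)|\psi\rangle_{RA}\|^2$ with $M_\theta:=(\partial_\theta V_\theta)-i(H_\theta\otimes I_B)V_\theta$ depends on the family $(H_\theta)_\theta$ only through its value at the single point $\theta$, so the family infimum collapses to an infimum over one Hermitian operator. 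Writing $\|(I_R\otimes M_\theta)|\psi\rangle_{RA}\|^2=\Tr[M_\theta^\dag M_\theta\,\psi_A]$ exhibits dependence on $|\psi\rangle_{RA}$ only through the reduced state $\psi_A$, and letting $\psi_A$ range over all of $\mathbb{D}$ turns the outer supremum into $\sup_{\rho_A\in\mathbb{D}}$.

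The main step is then the minimax interchange, exactly analogous to \eqref{eq:minimax-step-bures-channels}: the objective $\Tr[M_\theta(H)^\dag M_\theta(H)\rho_A]$ is linear, hence concave, in $\rho_A$ over the convex compact set $\mathbb{D}$, and convex in $H$, being the composition of the squared Hilbert--Schmidt norm with the affine map $H\mapsto M_\theta(H)$, so the minimax theorem permits swapping $\sup_{\rho_A}$ and $\inf_H$. After the swap, $\sup_{\rho_A\in\mathbb{D}}\Tr[M_\theta^\dag M_\theta\,\rho_A]=\|M_\theta^\dag M_\theta\|=\|M_\theta\|^2$, which is precisely \eqref{eq:SLD-Fisher-channels-opt-Herm}. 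Finally, \eqref{eq:SLD-channels-final-exp-thm} follows by reversing the Hermitian--unitary correspondence at the level of the spectral norm rather than a fixed vector: expanding $\partial_\theta[(U_\theta\otimes I_B)V_\theta]$ and acting with the spectral-norm-preserving unitary $U_\theta^\dag\otimes I_B$ returns $\partial_\theta V_\theta-i(H_\theta\otimes I_B)V_\theta$ with $H_\theta=iU_\theta^\dag(\partial_\theta U_\theta)$, so the optimizations in \eqref{eq:SLD-Fisher-channels-opt-Herm} and \eqref{eq:SLD-channels-final-exp-thm} coincide.

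I expect the main obstacle to be justifying the minimax interchange rigorously, in particular verifying convexity in $H$ and confirming that the supremum over pure input states is faithfully captured by the supremum over reduced states $\rho_A\in\mathbb{D}$, together with the bookkeeping needed to ensure that the environment $E$ is a large enough purifying system for Theorem~\ref{thm:SLD-Fisher-states} to apply verbatim.
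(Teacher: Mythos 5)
Your proposal matches the paper's proof essentially step for step: the same reduction to pure inputs via data processing and purification, the same application of Theorem~\ref{thm:SLD-Fisher-states} to the purification $V_{\theta}|\psi\rangle_{RA}$, the same unitary-to-Hermitian conversion via $H_{\theta}=iU_{\theta}^{\dag}\left(\partial_{\theta}U_{\theta}\right)$, the same minimax interchange (linear in $\rho$, convex in $H_{\theta}$), and the same identification $\sup_{\rho}\Tr\!\left[M_{\theta}^{\dag}M_{\theta}\rho\right]=\left\Vert M_{\theta}\right\Vert ^{2}$. The argument is correct and no substantive gap remains.
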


\begin{proof}
Let us begin by noting that
\begin{equation}
I_{F}\!\left(\theta;\left(\mathcal{N}_{\theta}\right)_{\theta\in\Theta}\right)=\sup_{|\psi\rangle_{RA}\in\mathbb{P}}I_{F}\!\left(\theta;\left(\left(\id_{R}\otimes\mathcal{N}_{\theta}\right)(\psi_{RA})\right)_{\theta\in\Theta}\right),\label{eq:SLD-fisher-channels-pure-state-opt}
\end{equation}
which follows from the data-processing inequality for the SLD Fisher
information of states, purification, and the Schmidt decomposition
theorem. Furthermore, in~\eqref{eq:SLD-fisher-channels-pure-state-opt},
the reference system $R$ and the channel input system $A$ are isomorphic.
(See also~\cite[Remark~10]{Katariya2021} in this context.) Consider
that we can write an arbitrary state vector $|\psi\rangle_{RA}$ in
terms of a unitary $U$ and a density operator $\rho$ as
\begin{equation}
|\psi\rangle_{RA}=\left(U_{R}\otimes\sqrt{\rho}\right)|\Gamma\rangle.
\end{equation}
Then
\begin{equation}
\left(U_{\theta}\otimes I_{RB}\right)V_{\theta}|\psi\rangle_{RA}=\sum_{i}U_{\theta}|i\rangle\otimes\left(U_{R}\otimes K_{i}^{\theta}\sqrt{\rho}\right)|\Gamma\rangle
\end{equation}
and, by appealing to Theorem~\ref{thm:SLD-Fisher-states}, we can
write
\begin{align}
\frac{1}{4}I_{F}\!\left(\theta;\left(\mathcal{N}_{\theta}\right)_{\theta\in\Theta}\right)= & \sup_{|\psi\rangle_{RA}\in\mathbb{P}}\inf_{\left(U_{\theta}\right)_{\theta\in\Theta}}\left\Vert \partial_{\theta}\left[\left(U_{\theta}\otimes I_{RB}\right)V_{\theta}|\psi\rangle_{RA}\right]\right\Vert ^{2}.\label{eq:rewrite-SLD-Fisher-channels-purify}
\end{align}
This proves~\eqref{eq:SLD-channels-formula-1}.

Now let us work with the objective function in~\eqref{eq:rewrite-SLD-Fisher-channels-purify}.
Defining
\begin{equation}
|\psi^{\rho}\rangle_{RA}\coloneqq\left(I\otimes\sqrt{\rho}\right)|\Gamma\rangle,
\end{equation}
consider that
\begin{align}
 & \left\Vert \partial_{\theta}\left[\left(U_{\theta}\otimes I_{RB}\right)V_{\theta}|\psi\rangle_{RA}\right]\right\Vert ^{2}\nonumber \\
 & =\left\Vert \partial_{\theta}\left[\left(U_{\theta}\otimes I_{RB}\right)V_{\theta}|\psi^{\rho}\rangle_{RA}\right]\right\Vert ^{2}\\
 & =\left\Vert \left[\left(\left(\partial_{\theta}U_{\theta}\right)\otimes I_{RB}\right)V_{\theta}+\left(U_{\theta}\otimes I_{RB}\right)\left(\partial_{\theta}V_{\theta}\right)\right]|\psi^{\rho}\rangle_{RA}\right\Vert ^{2}\\
 & =\left\Vert U_{\theta}^{\dag}\left[\left(\left(\partial_{\theta}U_{\theta}\right)\otimes I_{RB}\right)V_{\theta}+\left(U_{\theta}\otimes I_{RB}\right)\left(\partial_{\theta}V_{\theta}\right)\right]|\psi^{\rho}\rangle_{RA}\right\Vert ^{2}\\
 & =\left\Vert \left[\left(U_{\theta}^{\dag}\left(\partial_{\theta}U_{\theta}\right)\otimes I_{RB}\right)V_{\theta}+\left(U_{\theta}^{\dag}U_{\theta}\otimes I_{RB}\right)\left(\partial_{\theta}V_{\theta}\right)\right]|\psi^{\rho}\rangle_{RA}\right\Vert ^{2}\\
 & =\left\Vert \left[\partial_{\theta}V_{\theta}+\left(U_{\theta}^{\dag}\left(\partial_{\theta}U_{\theta}\right)\otimes I_{RB}\right)V_{\theta}\right]|\psi^{\rho}\rangle_{RA}\right\Vert ^{2}\\
 & =\left\Vert \left[\partial_{\theta}V_{\theta}-i\left(H_{\theta}\otimes I_{RB}\right)V_{\theta}\right]|\psi^{\rho}\rangle_{RA}\right\Vert ^{2}\label{eq:SLD-channels-proof-herm-introduce}\\
 & =\langle\psi^{\rho}|_{RA}\left[\partial_{\theta}V_{\theta}^{\dag}+iV_{\theta}^{\dag}\left(H_{\theta}\otimes I_{RB}\right)\right]\left[\partial_{\theta}V_{\theta}-i\left(H_{\theta}\otimes I_{RB}\right)V_{\theta}\right]|\psi^{\rho}\rangle_{RA}\\
 & =\Tr\!\left[\left[\partial_{\theta}V_{\theta}^{\dag}+iV_{\theta}^{\dag}\left(H_{\theta}\otimes I_{RB}\right)\right]\left[\partial_{\theta}V_{\theta}-i\left(H_{\theta}\otimes I_{RB}\right)V_{\theta}\right]\rho\right].\label{eq:SLD-channels-linear-in-rho}
\end{align}
To establish the equality in~\eqref{eq:SLD-channels-proof-herm-introduce},
we invoked the same reasoning around~\eqref{eq:hermitian-op-from-unitary-1}--\eqref{eq:hermitian-op-from-unitary-2}.
Now observe that we have written the objective function in two different
ways in~\eqref{eq:SLD-channels-proof-herm-introduce} and~\eqref{eq:SLD-channels-linear-in-rho}
to see that it is is convex in $H_{\theta}$ and linear in $\rho$.
Indeed, the function
\begin{equation}
H_{\theta}\mapsto\left\Vert \left[\partial_{\theta}V_{\theta}-i\left(H_{\theta}\otimes I_{RB}\right)V_{\theta}\right]|\psi\rangle_{RA}\right\Vert ^{2}
\end{equation}
is convex in $H_{\theta}$, being a composition of a convex function
(the norm) and a convex and non-decreasing function (the square function
$x\mapsto x^{2}$), and the function
\begin{equation}
\rho\mapsto\Tr\!\left[\left[\partial_{\theta}V_{\theta}^{\dag}+iV_{\theta}^{\dag}\left(H_{\theta}\otimes I_{RB}\right)\right]\left[\partial_{\theta}V_{\theta}-i\left(H_{\theta}\otimes I_{RB}\right)V_{\theta}\right]\rho\right]
\end{equation}
is linear in $\rho$. Then we can apply the standard minimax theorem
because the set of density operators is convex and compact and the
set of Hermitian operators is convex. Thus,
\begin{align}
 & \sup_{|\psi\rangle_{RA}\in\mathbb{P}}\inf_{\left(U_{\theta}\right)_{\theta\in\Theta}}\left\Vert \partial_{\theta}\left[\left(U_{\theta}\otimes I_{RB}\right)V_{\theta}|\psi\rangle_{RA}\right]\right\Vert ^{2}\nonumber \\
 & =\sup_{\rho\in\mathbb{D}}\inf_{\left(H_{\theta}\right)_{\theta\in\Theta}}\left\Vert \left[\partial_{\theta}V_{\theta}-i\left(H_{\theta}\otimes I_{RB}\right)V_{\theta}\right]|\psi^{\rho}\rangle_{RA}\right\Vert ^{2}\\
 & =\inf_{\left(H_{\theta}\right)_{\theta\in\Theta}}\sup_{\rho\in\mathbb{D}}\left\Vert \left[\partial_{\theta}V_{\theta}-i\left(H_{\theta}\otimes I_{RB}\right)V_{\theta}\right]|\psi^{\rho}\rangle_{RA}\right\Vert ^{2}\\
 & =\inf_{\left(H_{\theta}\right)_{\theta\in\Theta}}\sup_{|\psi\rangle_{RA}\in\mathbb{P}}\left\Vert \left[\partial_{\theta}V_{\theta}-i\left(H_{\theta}\otimes I_{RB}\right)V_{\theta}\right]|\psi\rangle_{RA}\right\Vert ^{2}\\
 & =\inf_{\left(H_{\theta}\right)_{\theta\in\Theta}}\left\Vert \partial_{\theta}V_{\theta}-i\left(H_{\theta}\otimes I_{RB}\right)V_{\theta}\right\Vert ^{2}\\
 & =\inf_{\left(H_{\theta}\right)_{\theta\in\Theta}}\left\Vert \partial_{\theta}V_{\theta}-i\left(H_{\theta}\otimes I_{B}\right)V_{\theta}\right\Vert ^{2}.
\end{align}
This proves~\eqref{eq:SLD-Fisher-channels-opt-Herm}.

The final expression in~\eqref{eq:SLD-channels-final-exp-thm} follows
by noting that
\begin{equation}
\inf_{\left(H_{\theta}\right)_{\theta\in\Theta}}\left\Vert \partial_{\theta}V_{\theta}-i\left(H_{\theta}\otimes I_{B}\right)V_{\theta}\right\Vert ^{2}=\inf_{\left(U_{\theta}\right)_{\theta\in\Theta}}\left\Vert \partial_{\theta}\left[\left(U_{\theta}\otimes I_{B}\right)V_{\theta}\right]\right\Vert ^{2},
\end{equation}
again as a consequence of the reasoning around~\eqref{eq:hermitian-op-from-unitary-1}--\eqref{eq:hermitian-op-from-unitary-2}.
\end{proof}
\begin{rem}
\label{rem:simplify-overlap-term-herm}Observe that the overlap term
$\left\Vert \left(\left(U_{\theta}\otimes I_{B}\right)V_{\theta}\right)^{\dag}\partial_{\theta}\left[\left(U_{\theta}\otimes I_{B}\right)V_{\theta}\right]\right\Vert $
simplifies as follows:
\begin{align}
 & \left\Vert \left(\left(U_{\theta}\otimes I_{B}\right)V_{\theta}\right)^{\dag}\partial_{\theta}\left[\left(U_{\theta}\otimes I_{B}\right)V_{\theta}\right]\right\Vert \nonumber \\
 & =\left\Vert V_{\theta}^{\dag}\left(U_{\theta}^{\dag}\otimes I_{B}\right)\partial_{\theta}\left[\left(U_{\theta}\otimes I_{B}\right)V_{\theta}\right]\right\Vert \\
 & =\left\Vert V_{\theta}^{\dag}\left(U_{\theta}^{\dag}\otimes I_{B}\right)\left[\left(\left(\partial_{\theta}U_{\theta}\right)\otimes I_{B}\right)V_{\theta}+\left(U_{\theta}\otimes I_{B}\right)\left(\partial_{\theta}V_{\theta}\right)\right]\right\Vert \\
 & =\left\Vert V_{\theta}^{\dag}\left[\left(U_{\theta}^{\dag}\left(\partial_{\theta}U_{\theta}\right)\otimes I_{B}\right)V_{\theta}+\left(\left(U_{\theta}\right)^{\dag}\otimes I_{B}\right)\left(U_{\theta}\otimes I_{B}\right)\left(\partial_{\theta}V_{\theta}\right)\right]\right\Vert \\
 & =\left\Vert V_{\theta}^{\dag}\left[\left(U_{\theta}^{\dag}\left(\partial_{\theta}U_{\theta}\right)\otimes I_{B}\right)V_{\theta}+\partial_{\theta}V_{\theta}\right]\right\Vert \\
 & =\left\Vert V_{\theta}^{\dag}\left(\partial_{\theta}V_{\theta}-i\left(H_{\theta}\otimes I_{B}\right)V_{\theta}\right)\right\Vert ,\label{eq:overlap-term-SLD-Fisher-channels}
\end{align}
where again we define $H_{\theta}\coloneqq iU_{\theta}^{\dag}\left(\partial_{\theta}U_{\theta}\right)$
(see~\eqref{eq:hermitian-op-from-unitary-1}--\eqref{eq:hermitian-op-from-unitary-2}).
The expression in~\eqref{eq:overlap-term-SLD-Fisher-channels} finds
application in bounds on query complexity of channel estimation.
\end{rem}

\medskip{}
\begin{rem}
Relevant for our applications in channel estimation, we can alternatively
express the objective function in~\eqref{eq:SLD-Fisher-channels-opt-Herm}
and the overlap term in~\eqref{eq:overlap-term-SLD-Fisher-channels}
in terms of Kraus operators $\left(K_{i}^{\theta}\right)_{i}$ of
$\mathcal{N}_{\theta}$ as follows, by making use of~\eqref{eq:canonical-isometric-ext-smooth-family}.
This connects to the notation used in~\cite[Section~4.1]{Fujiwara2008}
and in several subsequent papers on channel estimation~\cite{DemkowiczDobrzanski2012,Kolodynski2013,Demkowicz2014,Zhou2021,Kurdzialek2023}.
To begin with, consider that
\begin{align}
\left(U_{\theta}\otimes I_{B}\right)V_{\theta} & =\left(U_{\theta}\otimes I_{B}\right)\left(\sum_{j}|j\rangle\otimes K_{j}^{\theta}\right)\\
 & =\sum_{j}U_{\theta}|j\rangle\otimes K_{j}^{\theta}\\
 & =\sum_{j}\sum_{i}|i\rangle\!\langle i|U_{\theta}|j\rangle\otimes K_{j}^{\theta}\\
 & =\sum_{j}\sum_{i}|i\rangle\otimes u_{i,j}^{\theta}K_{j}^{\theta}\\
 & =\sum_{i}|i\rangle\otimes\sum_{j}u_{i,j}^{\theta}K_{j}^{\theta}\\
 & =\sum_{i}|i\rangle\otimes\widetilde{K}_{i}^{\theta},
\end{align}
where we defined $u_{i,j}^{\theta}\coloneqq\langle i|U_{\theta}|j\rangle$
and the Kraus operators $\widetilde{K}_{i}^{\theta}\coloneqq\sum_{j}u_{i,j}^{\theta}K_{j}^{\theta}$.
This implies that
\begin{equation}
\partial_{\theta}\left[\left(U_{\theta}\otimes I_{B}\right)V_{\theta}\right]=\sum_{i}|i\rangle\otimes\partial_{\theta}\widetilde{K}_{i}^{\theta}.
\end{equation}
Then the objective function in~\eqref{eq:SLD-Fisher-channels-opt-Herm}
simplifies as follows:
\begin{align}
\left\Vert \partial_{\theta}\left[\left(U_{\theta}\otimes I_{B}\right)V_{\theta}\right]\right\Vert ^{2} & =\left\Vert \sum_{i}|i\rangle\otimes\partial_{\theta}\widetilde{K}_{i}^{\theta}\right\Vert ^{2}\\
 & =\left\Vert \left(\sum_{j}|j\rangle\otimes\partial_{\theta}\widetilde{K}_{j}^{\theta}\right)^{\dag}\left(\sum_{i}|i\rangle\otimes\partial_{\theta}\widetilde{K}_{i}^{\theta}\right)\right\Vert \\
 & =\left\Vert \left(\sum_{j}\langle j|\otimes\left(\partial_{\theta}\widetilde{K}_{j}^{\theta}\right)^{\dag}\right)\left(\sum_{i}|i\rangle\otimes\partial_{\theta}\widetilde{K}_{i}^{\theta}\right)\right\Vert \\
 & =\left\Vert \sum_{j}\left(\partial_{\theta}\widetilde{K}_{j}^{\theta}\right)^{\dag}\partial_{\theta}\widetilde{K}_{j}^{\theta}\right\Vert ,
\end{align}
which is written in~\cite[Section~4.1]{Fujiwara2008} as $\left\Vert \alpha\right\Vert $,
where
\begin{equation}
\alpha\coloneqq\sum_{j}\left(\partial_{\theta}\widetilde{K}_{j}^{\theta}\right)^{\dag}\partial_{\theta}\widetilde{K}_{j}^{\theta}.
\end{equation}
Additionally, the overlap term in~\eqref{eq:overlap-term-SLD-Fisher-channels}
simplifies as follows:
\begin{align}
\left\Vert \left(\left(U_{\theta}\otimes I_{B}\right)V_{\theta}\right)^{\dag}\partial_{\theta}\left[\left(U_{\theta}\otimes I_{B}\right)V_{\theta}\right]\right\Vert  & =\left\Vert \left(\sum_{j}|j\rangle\otimes\widetilde{K}_{j}^{\theta}\right)^{\dag}\left(\sum_{i}|i\rangle\otimes\partial_{\theta}\widetilde{K}_{i}^{\theta}\right)\right\Vert \\
 & =\left\Vert \left(\sum_{j}\langle j|\otimes\left(\widetilde{K}_{j}^{\theta}\right)^{\dag}\right)\left(\sum_{i}|i\rangle\otimes\partial_{\theta}\widetilde{K}_{i}^{\theta}\right)\right\Vert \\
 & =\left\Vert \sum_{j}\left(\widetilde{K}_{j}^{\theta}\right)^{\dag}\partial_{\theta}\widetilde{K}_{j}^{\theta}\right\Vert ,
\end{align}
which is written in~\cite[Section~4.1]{Fujiwara2008} as $\left\Vert \beta\right\Vert $,
where
\begin{equation}
\beta\coloneqq\sum_{j}\left(\widetilde{K}_{j}^{\theta}\right)^{\dag}\partial_{\theta}\widetilde{K}_{j}^{\theta}.
\end{equation}

The following theorem was stated in~\cite[Eq.~(18)]{Yuan2017}; below
we provide a proof for it.
\end{rem}

\begin{thm}
\label{thm:smooth-family-SLD-Fisher-ch-expand}Let $\left(\mathcal{N}_{\theta}\right)_{\theta\in\Theta}$
be a smooth family of quantum channels. The following equalities hold:
\begin{align}
d_{B}^{2}\!\left(\mathcal{N}_{\theta},\mathcal{N}_{\theta+\delta}\right) & =\frac{\delta^{2}}{4}I_{F}\!\left(\theta;\left(\mathcal{N}_{\theta}\right)_{\theta}\right)+o(\delta^{2}),\\
I_{F}\!\left(\theta;\left(\mathcal{N}_{\theta}\right)_{\theta}\right) & =2\left.\frac{\partial^{2}}{\partial\delta^{2}}d_{B}^{2}(\mathcal{N}_{\theta},\mathcal{N}_{\theta+\delta})\right|_{\delta=0}\label{eq:SLD-Fisher-channels-2nd-deriv-Bures}\\
 & =4\lim_{\delta\to0}\frac{d_{B}^{2}(\mathcal{N}_{\theta},\mathcal{N}_{\theta+\delta})}{\delta^{2}}.\label{eq:SLD-Fisher-channels-limit-Bures}
\end{align}
\end{thm}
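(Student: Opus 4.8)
The plan is to combine the pure-input reductions already available for both quantities with the state-level relation \eqref{eq:SLD-Fisher-to-Bures}, handling the delicate interchange of the supremum over inputs with the limit $\delta \to 0$ by passing through the infimum-over-purifications description of the Bures distance. Fix $\theta$, write $f(\delta) \coloneqq d_B^2(\mathcal{N}_\theta, \mathcal{N}_{\theta+\delta})$, and for each pure input $|\psi\rangle_{RA}$ set $\rho_\theta^\psi \coloneqq (\id_R \otimes \mathcal{N}_\theta)(\psi_{RA})$; recall from Theorem \ref{thm:Bures-distance-channels} and the definition of channel SLD Fisher information that $f(\delta) = \sup_\psi d_B^2(\rho_\theta^\psi, \rho_{\theta+\delta}^\psi)$ and $I_F(\theta; (\mathcal{N}_\theta)_\theta) = \sup_\psi I_F(\theta; (\rho_\theta^\psi)_\theta)$.

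To bound $\liminf_{\delta\to 0} f(\delta)/\delta^2$ from below, I would use that $f(\delta) \geq d_B^2(\rho_\theta^\psi, \rho_{\theta+\delta}^\psi)$ for every fixed $\psi$ and every $\delta$. The state-level expansion \eqref{eq:SLD-Fisher-to-Bures} applied to $(\rho_\theta^\psi)_\theta$ gives $\lim_{\delta\to 0} d_B^2(\rho_\theta^\psi, \rho_{\theta+\delta}^\psi)/\delta^2 = \tfrac14 I_F(\theta; (\rho_\theta^\psi)_\theta)$, whence $\liminf_{\delta\to 0} f(\delta)/\delta^2 \geq \tfrac14 I_F(\theta; (\rho_\theta^\psi)_\theta)$; taking the supremum over $\psi$ yields $\liminf_{\delta\to 0} f(\delta)/\delta^2 \geq \tfrac14 I_F(\theta; (\mathcal{N}_\theta)_\theta)$.

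To bound $\limsup_{\delta\to 0} f(\delta)/\delta^2$ from above, I would fix any smooth family $(U_\theta)_{\theta\in\Theta}$ of unitaries on the environment and set $W_\theta \coloneqq (U_\theta \otimes I_B) V_\theta$, which is again an isometric extension of $\mathcal{N}_\theta$ since $U_\theta$ acts only on $E$. Then $(\id_R \otimes W_\theta)|\psi\rangle_{RA}$ and $(\id_R \otimes W_{\theta+\delta})|\psi\rangle_{RA}$ are purifications of $\rho_\theta^\psi$ and $\rho_{\theta+\delta}^\psi$ on a common system, so \eqref{eq:Bures_U_Purification} gives $d_B(\rho_\theta^\psi, \rho_{\theta+\delta}^\psi) \leq \|(\id_R \otimes (W_\theta - W_{\theta+\delta}))|\psi\rangle_{RA}\|$. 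Squaring, taking the supremum over $\psi$, and using $\sup_\psi \|(\id_R\otimes M)|\psi\rangle\|^2 = \|M\|^2$ produces the $\psi$-uniform estimate $f(\delta) \leq \|W_\theta - W_{\theta+\delta}\|^2$. A Taylor expansion $W_{\theta+\delta} = W_\theta + \delta\,\partial_\theta W_\theta + O(\delta^2)$ gives $f(\delta) \leq \delta^2 \|\partial_\theta W_\theta\|^2 + O(\delta^3)$, so $\limsup_{\delta\to 0} f(\delta)/\delta^2 \leq \|\partial_\theta[(U_\theta\otimes I_B)V_\theta]\|^2$. Since this holds for every smooth gauge, I would take the infimum over $(U_\theta)$ on the right and invoke \eqref{eq:SLD-channels-final-exp-thm} to conclude $\limsup_{\delta\to 0} f(\delta)/\delta^2 \leq \tfrac14 I_F(\theta; (\mathcal{N}_\theta)_\theta)$. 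Together with the lower bound this shows the limit exists and equals $\tfrac14 I_F$, establishing \eqref{eq:SLD-Fisher-channels-limit-Bures} and, equivalently, the first asymptotic equality. Finally, for \eqref{eq:SLD-Fisher-channels-2nd-deriv-Bures} I would note $f(0) = 0$ and $f'(0) = 0$ (as $f\geq 0$ is minimized at $\delta=0$), so that the expansion $f(\delta) = \tfrac{\delta^2}{4} I_F + o(\delta^2)$ together with $C^2$-smoothness inherited from the state level gives $\tfrac12 f''(0) = \tfrac14 I_F$, i.e.\ $I_F = 2 f''(0)$.

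The step I expect to be the main obstacle is exactly the interchange of $\sup_\psi$ with $\lim_{\delta\to 0}$: a naive argument fails because the state-level remainder $o(\delta^2)$ in \eqref{eq:SLD-Fisher-to-Bures} depends on $\psi$ and need not be uniform over the (non-compact) set of inputs. The device that resolves this is to never interchange the two operations directly, but instead to extract a single $\psi$-independent upper bound $f(\delta) \leq \|W_\theta - W_{\theta+\delta}\|^2$ from the purification characterization, reducing the limsup to a routine Taylor expansion of a fixed smooth isometry-valued curve and letting the gauge optimization in Theorem \ref{thm:SLD-Fisher-channels} supply the tightness.
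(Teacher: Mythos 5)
Your proposal is correct, but it resolves the crucial step — interchanging $\sup_{\psi}$ with the limit $\delta\to 0$ — by a genuinely different device than the paper. The paper applies the state-level expansion \eqref{eq:SLD-Fisher-to-Bures} input-by-input and then argues that the $o(\delta^{2})$ remainder is \emph{uniform} over the compact set of pure inputs (invoking a uniform-Taylor lemma that requires continuity of $(\psi,\delta)\mapsto f''(\psi,\delta)$ and smoothness of the maximizing input in $\delta$), after which the supremum passes through the expansion directly. You instead sandwich the quantity: the lower bound on $\liminf_{\delta\to0}f(\delta)/\delta^{2}$ needs only a single fixed input and is immune to uniformity issues, while the upper bound on the $\limsup$ is obtained from the $\psi$-independent majorant $f(\delta)\le\|W_{\theta}-W_{\theta+\delta}\|^{2}$ coming from the purification characterization of the Bures distance, reduced to a Taylor expansion of a fixed smooth isometry-valued curve and made tight by the gauge optimization in \eqref{eq:SLD-channels-final-exp-thm}. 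What each buys: the paper's route is shorter once the uniform-convergence lemma is granted, but that lemma is delicate when the output states are rank-deficient (where the state SLD Fisher information need not depend continuously on the input) and the smoothness of the maximizer is asserted rather than proved; your route sidesteps both issues entirely and stays squarely within the paper's isometric-extension toolkit, since the inequality $d_{B}^{2}\le\|\hat V_{1}-\hat V_{2}\|^{2}$ is exactly the mechanism already used in Theorems \ref{thm:parallel-bures-dist-channels} and \ref{thm:seq-bures-distance-channels}. Two shared (not new) caveats: \eqref{eq:SLD-Fisher-to-Bures} is stated for positive definite families, and passing from the Peano-form expansion to the classical second derivative in \eqref{eq:SLD-Fisher-channels-2nd-deriv-Bures} needs the $C^{2}$ regularity you mention; the paper glosses over both in the same way.
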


\begin{proof}
Recall~\eqref{eq:SLD-fisher-channels-pure-state-opt}. For a fixed
pure input state $\psi_{RA}$, applying~\eqref{eq:SLD-Fisher-to-Bures}
leads to
\begin{align}
d_{B}^{2}\!\left(\mathcal{N}_{\theta}\!\left(\psi_{RA}\right),\mathcal{N}_{\theta+\delta}\!\left(\psi_{RA}\right)\right) & =\frac{\delta^{2}}{4}I_{F}\!\left(\theta;\left(\mathcal{N}_{\theta}\!\left(\psi_{RA}\right)\right)_{\theta}\right)+o(\delta^{2}).\label{eq:uniform-taylor-expansion-Bures}
\end{align}
The convergence in the $o(\delta^{2})$ term is uniform in $\psi_{RA}$,
as a consequence of the following general fact: Let $K$ be compact
and let $f\colon K\times\left[-\varepsilon,\varepsilon\right]\to\mathbb{R}$
be such that $f(\psi,\cdot)$ is $C^{2}$ for each $\psi\in K$ and
$\left(\psi,\delta\right)\mapsto f''(\psi,\delta)$ is continuous.
Then
\begin{equation}
f(\psi,\delta)=f(\psi,0)+f'(\psi,0)\delta+\frac{1}{2}f''(\psi,0)\delta^{2}+o(\delta^{2})
\end{equation}
uniformly in $\psi\in K$. Applying this statement to our case, the
equality in~\eqref{eq:uniform-taylor-expansion-Bures} holds because
the set of pure states on $RA$ is compact. Then we conclude that
\begin{equation}
\sup_{\psi_{RA}\in\mathbb{P}}d_{B}^{2}\!\left(\mathcal{N}_{\theta}\!\left(\psi_{RA}\right),\mathcal{N}_{\theta+\delta}\!\left(\psi_{RA}\right)\right)=\frac{\delta^{2}}{4}\left[\sup_{\psi_{RA}\in\mathbb{P}}I_{F}\!\left(\theta;\left(\mathcal{N}_{\theta}\!\left(\psi_{RA}\right)\right)_{\theta}\right)\right]+o(\delta^{2}),
\end{equation}
given that the convergence is uniform and the supremum is taken over
a compact set of states, so that the maximizing $\psi_{RA}$ is smooth
in $\delta$ near $0$. This equality is equivalent to
\begin{equation}
d_{B}^{2}\!\left(\mathcal{N}_{\theta},\mathcal{N}_{\theta+\delta}\right)=\frac{\delta^{2}}{4}I_{F}\!\left(\theta;\left(\mathcal{N}_{\theta}\right)_{\theta}\right)+o(\delta^{2}).\label{eq:SLD-Fisher-channels-Bures-Taylor-proof}
\end{equation}
Then~\eqref{eq:SLD-Fisher-channels-2nd-deriv-Bures} and~\eqref{eq:SLD-Fisher-channels-limit-Bures}
are immediate consequences of~\eqref{eq:SLD-Fisher-channels-Bures-Taylor-proof}.
\end{proof}

\section{Upper bounds on Bures distance of channels}\label{sec:Upper-bounds-Bures}

In this section, we present upper bounds on the Bures distance of
channels $\mathcal{N}_{1}$ and $\mathcal{N}_{2}$ in both the parallel
and adaptive settings. The bounds are expressed in terms of a canonical
isometric extension $V_{i}\colon\mathbb{C}^{d_{A}}\mapsto\mathbb{C}^{d_{E}}\otimes\mathbb{C}^{d_{B}}$
of the channel $\mathcal{N}_{i}$, for $i\in\left\{ 1,2\right\} $,
as given in~\eqref{eq:isometric-ext-ch-1} and~\eqref{eq:isometric-ext-ch-2}. 

\subsection{Upper bound on Bures distance of channels in the parallel setting}

Ref.~\cite[Eq.~(23)]{Yuan2017} established an upper bound on the
squared Bures distance of the tensor-power channels $\mathcal{N}_{1}^{\otimes n}$
and $\mathcal{N}_{2}^{\otimes n}$. Here we restate this upper bound
in a different form, in terms of canonical isometric extensions of
the channels $\mathcal{N}_{1}$ and $\mathcal{N}_{2}$, and we provide
an alternate proof based on basic properties of isometries and the
spectral norm.
\begin{thm}
\label{thm:parallel-bures-dist-channels}For channels $\mathcal{N}_{1}$
and $\mathcal{N}_{2}$ with canonical isometric extensions $V_1$ and $V_2$, the following inequality holds for all $n\in\mathbb{N}$:
\begin{equation}
d_{B}^{2}\!\left(\mathcal{N}_{1}^{\otimes n},\mathcal{N}_{2}^{\otimes n}\right)\leq n\inf_{W\in\mathbb{B}}\left\{ 2\left\Vert I-\Re\!\left[M_{W}\right]\right\Vert +\left(n-1\right)\left\Vert I-M_{W}\right\Vert ^{2}\right\} ,\label{eq:bures-parallel-upper-bnd}
\end{equation}
where
\begin{equation}
M_{W}\equiv V_{1}^{\dag}\left(W_{E}\otimes I_{B}\right)V_{2}.
\end{equation}
\end{thm}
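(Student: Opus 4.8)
The plan is to invoke Theorem~\ref{thm:Bures-distance-channels} for the pair $\mathcal{N}_1^{\otimes n},\mathcal{N}_2^{\otimes n}$ and then restrict the contraction appearing there to a product form. First I would observe that $V_1^{\otimes n}$ and $V_2^{\otimes n}$ serve as isometric extensions of $\mathcal{N}_1^{\otimes n}$ and $\mathcal{N}_2^{\otimes n}$, respectively, once the $n$ copies of $\mathbb{C}^{d_E}\otimes\mathbb{C}^{d_B}$ are reorganized (via a fixed system permutation) into $(\mathbb{C}^{d_E})^{\otimes n}\otimes(\mathbb{C}^{d_B})^{\otimes n}$; their common environment has dimension $d_E^n=(d_Ad_B)^n$, matching the canonical one. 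Since the formula of Theorem~\ref{thm:Bures-distance-channels} is invariant under the residual unitary freedom in the choice of isometric extension (any two extensions sharing a common environment differ by a unitary on that environment, which is absorbed by the optimization over contractions), it yields
\begin{equation}
d_B^2\!\left(\mathcal{N}_1^{\otimes n},\mathcal{N}_2^{\otimes n}\right)=2\inf_{\widetilde W:\left\Vert \widetilde W\right\Vert \leq1}\left\Vert I-\Re\!\left[(V_1^{\otimes n})^{\dag}(\widetilde W\otimes I_{B^n})V_2^{\otimes n}\right]\right\Vert ,
\end{equation}
where $\widetilde W$ ranges over contractions on $(\mathbb{C}^{d_E})^{\otimes n}$. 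Restricting to product contractions $\widetilde W=W^{\otimes n}$ with $\left\Vert W\right\Vert \leq1$ (legitimate since $\left\Vert W^{\otimes n}\right\Vert =\left\Vert W\right\Vert ^n\leq1$) and factoring the overlap across tensor factors gives $(V_1^{\otimes n})^{\dag}(W^{\otimes n}\otimes I)V_2^{\otimes n}=M_W^{\otimes n}$, whence
\begin{equation}
d_B^2\!\left(\mathcal{N}_1^{\otimes n},\mathcal{N}_2^{\otimes n}\right)\leq2\inf_{W:\left\Vert W\right\Vert \leq1}\left\Vert I-\Re\!\left[M_W^{\otimes n}\right]\right\Vert .
\end{equation}

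It then remains to bound $\left\Vert I-\Re[M^{\otimes n}]\right\Vert $ for a fixed $M\equiv M_W$ with $\left\Vert M\right\Vert \leq1$ (note $\left\Vert M\right\Vert \leq\left\Vert V_1^{\dag}\right\Vert \left\Vert W\right\Vert \left\Vert V_2\right\Vert \leq1$). Writing $\Delta\coloneqq I-M$, I would use the telescoping identity
\begin{equation}
I-M^{\otimes n}=\sum_{k=1}^n M^{\otimes(k-1)}\otimes\Delta\otimes I^{\otimes(n-k)},
\end{equation}
and separate each summand into a first-order piece and a correction via $M^{\otimes(k-1)}=I^{\otimes(k-1)}+\left(M^{\otimes(k-1)}-I^{\otimes(k-1)}\right)$. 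Taking real parts (using that $I$ is Hermitian, so $I-\Re[M^{\otimes n}]=\Re[I-M^{\otimes n}]$) and applying the triangle inequality yields
\begin{equation}
\left\Vert I-\Re[M^{\otimes n}]\right\Vert \leq n\left\Vert \Re[\Delta]\right\Vert +\sum_{k=1}^n\left\Vert I^{\otimes(k-1)}-M^{\otimes(k-1)}\right\Vert \left\Vert \Delta\right\Vert ,
\end{equation}
where the first term collects the $n$ copies of $I^{\otimes(k-1)}\otimes\Re[\Delta]\otimes I^{\otimes(n-k)}$.

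The final step would be the sub-estimate $\left\Vert I^{\otimes m}-M^{\otimes m}\right\Vert \leq m\left\Vert \Delta\right\Vert $, obtained from the same telescoping identity together with $\left\Vert M\right\Vert \leq1$. Substituting this with $m=k-1$ gives $\sum_{k=1}^n(k-1)\left\Vert \Delta\right\Vert ^2=\frac{n(n-1)}{2}\left\Vert \Delta\right\Vert ^2$, so that
\begin{equation}
\left\Vert I-\Re[M_W^{\otimes n}]\right\Vert \leq n\left\Vert I-\Re[M_W]\right\Vert +\frac{n(n-1)}{2}\left\Vert I-M_W\right\Vert ^2.
\end{equation}
Multiplying by $2$ and taking the infimum over $W$ reproduces \eqref{eq:bures-parallel-upper-bnd}. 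The main obstacle I anticipate lies in the second step: one must reorganize the environment and output registers so that the product contraction $W^{\otimes n}$ acts on the combined environment, and one must carefully isolate the Hermitian first-order term $\Re[\Delta]$ from the remainder so that the correction is genuinely quadratic in $\left\Vert \Delta\right\Vert $, rather than incurring the much larger binomial sum $\sum_{j\geq2}\binom{n}{j}\left\Vert \Delta\right\Vert ^j$ that a naive full expansion of $(I-\Delta)^{\otimes n}$ would produce.
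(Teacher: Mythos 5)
Your proof is correct, and it reaches the paper's bound by a genuinely different decomposition. The paper plugs the augmented product extensions $\hat{V}_{1}^{\otimes n}$ and $\hat{V}_{2}^{\otimes n}$ (with $\hat{V}_{1}=|0\rangle\otimes V_{1}$ and $\hat{V}_{2}=\left(U_{W}\otimes I\right)|0\rangle\otimes V_{2}$) into Theorem~\ref{thm:Bures-distance-channels}, telescopes the difference of isometries $\hat{V}_{1}^{\otimes n}-\hat{V}_{2}^{\otimes n}$, and expands $\Vert X\Vert^{2}=\Vert X^{\dag}X\Vert$ into a double sum of $n^{2}$ terms, bounding the $n$ diagonal terms by $\Vert\hat{V}_{1}-\hat{V}_{2}\Vert^{2}=2\Vert I-\Re[M_{W}]\Vert$ and the $n(n-1)$ cross terms by $\Vert I-M_{W}\Vert^{2}$. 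You instead invoke the equality form \eqref{eq:bures-distance-chs-reduced} for the $n$-copy channels (your justification via the residual unitary freedom in the choice of isometric extension is sound, and the environment dimensions $(d_{A}d_{B})^{n}$ do match the canonical ones), restrict to product contractions to reduce everything to $2\Vert I-\Re[M_{W}^{\otimes n}]\Vert$, and then telescope $I-M_{W}^{\otimes n}$ directly, with the sub-estimate $\Vert I^{\otimes m}-M^{\otimes m}\Vert\leq m\Vert I-M\Vert$ supplying the quadratic correction $\sum_{k=1}^{n}(k-1)\Vert\Delta\Vert^{2}=\tfrac{n(n-1)}{2}\Vert\Delta\Vert^{2}$. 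Your route is leaner for the parallel case---a single sum of $n$ terms plus a one-line sub-lemma---and it makes transparent why the correction is genuinely quadratic rather than the full binomial sum. The one step you leave implicit is $\Vert\Re[X]\Vert\leq\Vert X\Vert$, needed to discard the real part on the correction terms; this is immediate from $\Re[X]=\tfrac{1}{2}(X+X^{\dag})$ and the triangle inequality, so it is not a gap. The trade-off is that your argument exploits the tensor-power structure of the overlap operator in an essential way: in the adaptive setting of Theorem~\ref{thm:seq-bures-distance-channels} the queries are interleaved with arbitrary channels $\mathcal{A}_{i}$, so there is no analogue of $M_{W}^{\otimes n}$, and it is the paper's isometry-level telescoping that generalizes to that case.
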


\begin{proof}
Let $V_{i}\colon\mathbb{C}^{d_{A}}\mapsto\mathbb{C}^{d_{E}}\otimes\mathbb{C}^{d_{B}}$
be a canonical isometric extension of $\mathcal{N}_{i}$, and set
\begin{align}
\hat{V}_{1} & \equiv|0\rangle\otimes V_{1},\label{eq:isometric-exts-ch-disc-extra-qubit-1}\\
\hat{V}_{2} & \equiv\left(U_{W}\otimes I\right)|0\rangle\otimes V_{2},\label{eq:isometric-exts-ch-disc-extra-qubit-2}
\end{align}
where $|0\rangle\in\mathbb{C}^{2}$ and $U_{W}$ is an arbitrary unitary
acting nontrivially on the first two systems of $\mathbb{C}^{2}\otimes\mathbb{C}^{d_{E}}\otimes\mathbb{C}^{d_{B}}$.
As such, we can apply Theorem~\ref{thm:Bures-distance-channels} to
$d_{B}^{2}\!\left(\mathcal{N}_{1}^{\otimes n},\mathcal{N}_{2}^{\otimes n}\right)$
to conclude that
\begin{equation}
d_{B}^{2}\!\left(\mathcal{N}_{1}^{\otimes n},\mathcal{N}_{2}^{\otimes n}\right)\leq\left\Vert \hat{V}_{1}^{\otimes n}-\hat{V}_{2}^{\otimes n}\right\Vert ^{2}.
\end{equation}
Continuing, we find that
\begin{align}
 & \left\Vert \hat{V}_{1}^{\otimes n}-\hat{V}_{2}^{\otimes n}\right\Vert ^{2}\nonumber \\
 & =\left\Vert \sum_{i=1}^{n}\hat{V}_{1}^{\otimes i-1}\otimes\left[\hat{V}_{1}-\hat{V}_{2}\right]_{i}\otimes\hat{V}_{2}^{\otimes n-i}\right\Vert ^{2}\\
 & =\left\Vert \left(\sum_{j=1}^{n}\hat{V}_{1}^{\otimes j-1}\otimes\left[\hat{V}_{1}-\hat{V}_{2}\right]_{j}\otimes\hat{V}_{2}^{\otimes n-j}\right)^{\dag}\left(\sum_{i=1}^{n}\hat{V}_{1}^{\otimes i-1}\otimes\left[\hat{V}_{1}-\hat{V}_{2}\right]_{i}\otimes\hat{V}_{2}^{\otimes n-i}\right)\right\Vert \\
 & =\left\Vert \left(\sum_{j=1}^{n}\left(\hat{V}_{1}^{\dag}\right)^{\otimes j-1}\otimes\left[\hat{V}_{1}^{\dag}-\hat{V}_{2}^{\dag}\right]_{j}\otimes\left(\hat{V}_{2}^{\dag}\right)^{\otimes n-j}\right)\left(\sum_{i=1}^{n}\hat{V}_{1}^{\otimes i-1}\otimes\left[\hat{V}_{1}-\hat{V}_{2}\right]_{i}\otimes\hat{V}_{2}^{\otimes n-i}\right)\right\Vert \\
 & =\left\Vert 
 \sum_{i,j=1}^{n} \left( 
 \left(\hat{V}_{1}^{\dag}\right)^{\otimes j-1}\otimes\left[\hat{V}_{1}^{\dag}-\hat{V}_{2}^{\dag}\right]_{j}\otimes\left(\hat{V}_{2}^{\dag}\right)^{\otimes n-j}\right)\left(\hat{V}_{1}^{\otimes i-1}\otimes\left[\hat{V}_{1}-\hat{V}_{2}\right]_{i}\otimes\hat{V}_{2}^{\otimes n-i}\right)\right\Vert \\
 & \leq\sum_{i=1}^{n}\left\Vert \left(\left(\hat{V}_{1}^{\dag}\right)^{\otimes i-1}\otimes\left[\hat{V}_{1}^{\dag}-\hat{V}_{2}^{\dag}\right]_{i}\otimes\left(\hat{V}_{2}^{\dag}\right)^{\otimes n-i}\right)\left(\hat{V}_{1}^{\otimes i-1}\otimes\left[\hat{V}_{1}-\hat{V}_{2}\right]_{i}\otimes\hat{V}_{2}^{\otimes n-i}\right)\right\Vert \nonumber \\
 & \qquad+\sum_{i,j=1:i<j}^{n}\left\Vert \left(\left(\hat{V}_{1}^{\dag}\right)^{\otimes j-1}\otimes\left[\hat{V}_{1}^{\dag}-\hat{V}_{2}^{\dag}\right]_{j}\otimes\left(\hat{V}_{2}^{\dag}\right)^{\otimes n-j}\right)\left(\hat{V}_{1}^{\otimes i-1}\otimes\left[\hat{V}_{1}-\hat{V}_{2}\right]_{i}\otimes\hat{V}_{2}^{\otimes n-i}\right)\right\Vert \nonumber \\
 & \qquad+\sum_{i,j=1:i>j}^{n}\left\Vert \left(\left(\hat{V}_{1}^{\dag}\right)^{\otimes j-1}\otimes\left[\hat{V}_{1}^{\dag}-\hat{V}_{2}^{\dag}\right]_{j}\otimes\left(\hat{V}_{2}^{\dag}\right)^{\otimes n-j}\right)\left(\hat{V}_{1}^{\otimes i-1}\otimes\left[\hat{V}_{1}-\hat{V}_{2}\right]_{i}\otimes\hat{V}_{2}^{\otimes n-i}\right)\right\Vert \\
 & =\sum_{i=1}^{n}\left\Vert \left[\hat{V}_{1}^{\dag}-\hat{V}_{2}^{\dag}\right]\left[\hat{V}_{1}-\hat{V}_{2}\right]\right\Vert +\sum_{i,j=1:i<j}^{n}\left\Vert \left[\hat{V}_{1}^{\dag}\right]_{i}\left[\hat{V}_{1}-\hat{V}_{2}\right]_{i}\right\Vert \left\Vert \left[\hat{V}_{1}^{\dag}-\hat{V}_{2}^{\dag}\right]_{j}\left[\hat{V}_{2}\right]_{j}\right\Vert \nonumber \\
 & \qquad+\sum_{i,j=1:i>j}^{n}\left\Vert \left[\hat{V}_{1}^{\dag}-\hat{V}_{2}^{\dag}\right]_{j}\left[\hat{V}_{1}\right]_{j}\right\Vert \left\Vert \left[\hat{V}_{2}^{\dag}\right]_{i}\left[\hat{V}_{1}-\hat{V}_{2}\right]_{i}\right\Vert \\
 & =n\left\Vert \hat{V}_{1}-\hat{V}_{2}\right\Vert ^{2}+\sum_{i,j=1:i<j}^{n}\left\Vert \hat{V}_{1}^{\dag}\left(\hat{V}_{1}-\hat{V}_{2}\right)\right\Vert \left\Vert \left(\hat{V}_{1}^{\dag}-\hat{V}_{2}^{\dag}\right)\hat{V}_{2}\right\Vert \nonumber \\
 & \qquad+\sum_{i,j=1:i>j}^{n}\left\Vert \left(\hat{V}_{1}^{\dag}-\hat{V}_{2}^{\dag}\right)\hat{V}_{1}\right\Vert \left\Vert \hat{V}_{2}^{\dag}\left(\hat{V}_{1}-\hat{V}_{2}\right)\right\Vert \\
 & =n\left\Vert \hat{V}_{1}-\hat{V}_{2}\right\Vert ^{2}+\frac{n\left(n-1\right)}{2}\left\Vert I-\hat{V}_{1}^{\dag}\hat{V}_{2}\right\Vert \left\Vert \hat{V}_{1}^{\dag}\hat{V}_{2}-I\right\Vert +\frac{n\left(n-1\right)}{2}\left\Vert I-\hat{V}_{2}^{\dag}\hat{V}_{1}\right\Vert \left\Vert \hat{V}_{2}^{\dag}\hat{V}_{1}-I\right\Vert \\
 & =n\left\Vert \hat{V}_{1}-\hat{V}_{2}\right\Vert ^{2}+n\left(n-1\right)\left\Vert I-\hat{V}_{1}^{\dag}\hat{V}_{2}\right\Vert ^{2}\\
 & =n\left(2\left\Vert I-\Re\!\left[V_{1}^{\dag}\left(W_{E}\otimes I_{B}\right)V_{2}\right]\right\Vert +\left(n-1\right)\left\Vert I-V_{1}^{\dag}\left(W_{E}\otimes I_{B}\right)V_{2}\right\Vert ^{2}\right).
\end{align}
The last equality follows by expanding the first norm as
\begin{align}
\left\Vert \hat{V}_{1}-\hat{V}_{2}\right\Vert ^{2} & =\left\Vert \left(\hat{V}_{1}-\hat{V}_{2}\right)^{\dag}\left(\hat{V}_{1}-\hat{V}_{2}\right)\right\Vert \label{eq:norm-iso-exts-parallel-disc-simplify}\\
 & =\left\Vert 2I-\hat{V}_{2}^{\dag}\hat{V}_{1}-\hat{V}_{1}^{\dag}\hat{V}_{2}\right\Vert \\
 & =2\left\Vert I-\Re\!\left[\hat{V}_{1}^{\dag}\hat{V}_{2}\right]\right\Vert ,
\end{align}
and observing that
\begin{align}
\hat{V}_{1}^{\dag}\hat{V}_{2} & =\left(\langle0|\otimes V_{1}^{\dag}\right)\left(U_{W}\otimes I\right)\left(|0\rangle\otimes V_{2}\right)\\
 & =V_{1}^{\dag}\left[\left(\langle0|\otimes I\right)U_{W}\left(|0\rangle\otimes I\right)\otimes I_{B}\right]V_{2}\\
 & =V_{1}^{\dag}\left[W_{E}\otimes I_{B}\right]V_{2}.\label{eq:overlap-iso-exts-parallel-disc-simplify}
\end{align}
Given that $W_{E}$ is an arbitrary contraction, we can obtain the
tightest upper bound using this approach by minimizing over all such
contractions, thus completing the proof.
\end{proof}
\begin{rem}
To see that the upper bound in~\eqref{eq:bures-parallel-upper-bnd}
is equivalent to~\cite[Eq.~(23)]{Yuan2017}, we appeal to Remark~\ref{rem:isometric-overlap-to-kraus}
and define $K_{W}\coloneqq\sum_{i,j}W_{i,j}K_{i}^{\dag}L_{j}$. Then
\begin{equation}
2\left\Vert I-\Re\!\left[M_{W}\right]\right\Vert +\left(n-1\right)\left\Vert I-M_{W}\right\Vert ^{2}=\left\Vert 2I-K_{W}-K_{W}^{\dag}\right\Vert +\left(n-1\right)\left\Vert I-K_{W}\right\Vert ^{2},
\end{equation}
thus establishing the equivalence.
\end{rem}

\subsection{Upper bound on Bures distance of channels in the adaptive setting}

For channels $\mathcal{N}_{1}$ and $\mathcal{N}_{2}$, a general
adaptive $n$-query protocol consists of a tuple $\left(\mathcal{A}_{i}\right)_{i=1}^{n-1}$
of channels. Applying the protocol to $n$ queries of the channel
$\mathcal{N}_{i}$ leads to the following channel $\mathcal{P}_{i}^{(n)}$,
for $i\in\left\{ 1,2\right\} $:
\begin{equation}
\mathcal{P}_{i}^{(n)}\coloneqq\left(\id\otimes\mathcal{N}_{i}\right)\circ\mathcal{A}_{n-1}\circ\cdots\circ\mathcal{A}_{2}\circ\left(\id\otimes\mathcal{N}_{i}\right)\circ\mathcal{A}_{1}\circ\left(\id\otimes\mathcal{N}_{i}\right).\label{eq:adaptive-protocol-bures-distance}
\end{equation}
This in turn leads to the following distinguishability measure for
channels $\mathcal{N}_{1}$ and $\mathcal{N}_{2}$:
\begin{equation}
\sup_{\left(\mathcal{A}_{i}\right)_{i=1}^{n-1}}d_{B}^{2}\!\left(\mathcal{P}_{1}^{(n)},\mathcal{P}_{2}^{(n)}\right),
\end{equation}
which is related to the strategy fidelity of~\cite{Gutoski2018} and
is a special case of the generalized strategy divergence introduced
in~\cite[Definition~1]{Wang2019}.

In Theorem~\ref{thm:seq-bures-distance-channels} below, we establish
an upper bound on the $n$-query adaptive squared Bures distance of
channels $\mathcal{N}_{1}$ and $\mathcal{N}_{2}$:
\begin{thm}
\label{thm:seq-bures-distance-channels}For channels $\mathcal{N}_{1}$
and $\mathcal{N}_{2}$ with canonical isometric extensions $V_1$ and $V_2$, the following inequality holds for all $n\in\mathbb{N}$:
\begin{multline}
\sup_{\left(\mathcal{A}_{i}\right)_{i=1}^{n-1}}d_{B}^{2}\!\left(\mathcal{P}_{1}^{(n)},\mathcal{P}_{2}^{(n)}\right)\leq\\
n\inf_{W\in\mathbb{B}}\left\{ 2\left\Vert I-\Re\!\left[M_{W}\right]\right\Vert +\left(n-1\right)\left(2\left\Vert I-\Re\!\left[M_{W}\right]\right\Vert \right)^{\frac{1}{2}}\left\Vert I-M_{W}\right\Vert \right\} ,\label{eq:bures-distance-adaptive-setting}
\end{multline}
where
\begin{equation}
M_{W}\equiv V_{1}^{\dag}\left(W_{E}\otimes I_{B}\right)V_{2}.
\end{equation}
\end{thm}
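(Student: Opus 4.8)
The plan is to mirror the proof of Theorem~\ref{thm:parallel-bures-dist-channels}, replacing the tensor-power structure with the interleaved composition of the adaptive protocol. Fix an adaptive protocol $(\mathcal{A}_i)_{i=1}^{n-1}$ and a contraction $W$ with $\|W\|\leq 1$. As in \eqref{eq:isometric-exts-ch-disc-extra-qubit-1}--\eqref{eq:isometric-exts-ch-disc-extra-qubit-2}, set the qubit-augmented single-query isometric extensions $\hat{V}_1\equiv|0\rangle\otimes V_1$ and $\hat{V}_2\equiv(U_W\otimes I)(|0\rangle\otimes V_2)$, so that $\hat{V}_1^\dagger\hat{V}_1=\hat{V}_2^\dagger\hat{V}_2=I$ and, by \eqref{eq:overlap-iso-exts-parallel-disc-simplify}, $\hat{V}_1^\dagger\hat{V}_2=M_W$. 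Let $T_k$ denote a fixed isometric extension of $\mathcal{A}_k$, common to both channels. I would then build composite isometric extensions of $\mathcal{P}_i^{(n)}$ by interleaving the per-query isometries with the $T_k$:
\begin{equation}
\Phi_i\equiv\hat{V}_i^{(n)}\,T_{n-1}\,\hat{V}_i^{(n-1)}\,T_{n-2}\cdots T_1\,\hat{V}_i^{(1)},
\end{equation}
where $\hat{V}_i^{(k)}$ denotes the $k$-th query (identities on the accumulated environment and memory systems are left implicit). Since $\Phi_1$ and $\Phi_2$ are isometric extensions of $\mathcal{P}_1^{(n)}$ and $\mathcal{P}_2^{(n)}$ sharing a common environment, the purification characterization underlying Theorem~\ref{thm:Bures-distance-channels} gives $d_B^2(\mathcal{P}_1^{(n)},\mathcal{P}_2^{(n)})\leq\|\Phi_1-\Phi_2\|^2$, exactly as in the parallel case.

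Next I would telescope over the $n$ query slots, leaving the common factors $T_k$ in place:
\begin{equation}
\Phi_1-\Phi_2=\sum_{k=1}^n L_k\,\Delta_k\,R_k,\qquad \Delta_k\equiv\hat{V}_1^{(k)}-\hat{V}_2^{(k)},
\end{equation}
where $L_k$ collects the factors for queries $>k$ (all using $\hat{V}_1$) together with the intervening $T$'s, and $R_k$ collects those for queries $<k$ (all using $\hat{V}_2$); both $L_k$ and $R_k$ are products of isometries and hence have unit norm. Writing $D_k\equiv L_k\Delta_k R_k$, I would bound $\|\Phi_1-\Phi_2\|^2=\|\sum_{k,l}D_l^\dagger D_k\|\leq\sum_{k,l}\|D_l^\dagger D_k\|$ and split into diagonal and off-diagonal contributions. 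Each diagonal term satisfies $\|D_k^\dagger D_k\|=\|D_k\|^2\leq\|\Delta_k\|^2=2\|I-\Re[M_W]\|$, by the computation in \eqref{eq:norm-iso-exts-parallel-disc-simplify}, contributing the first summand $n\cdot 2\|I-\Re[M_W]\|$.

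The heart of the argument, and the step I expect to require the most care, is the off-diagonal term $D_l^\dagger D_k$ for $k\neq l$. Taking $k<l$, the common left factors cancel because $L_l^\dagger L_l=I$, leaving $D_l^\dagger D_k=R_l^\dagger\Delta_l^\dagger C_{l,k}\Delta_k R_k$ with $C_{l,k}\equiv L_l^\dagger L_k=\hat{V}_1^{(l)}T_{l-1}\cdots\hat{V}_1^{(k+1)}T_k$ an isometry. The crucial observation is that this product collapses on only one side: the leftmost factor $\hat{V}_1^{(l)}$ of $C_{l,k}$ meets $\Delta_l^\dagger$ to give $\Delta_l^\dagger\hat{V}_1^{(l)}=I-\hat{V}_2^{(l)\dagger}\hat{V}_1^{(l)}$, which acts as $I-M_W^\dagger$ on the query-$l$ input and thus has norm $\|I-M_W\|$; but the other end $\Delta_k$ is separated from the common part by the isometry $T_k$, so no second collapse occurs and $\Delta_k$ keeps its full norm $(2\|I-\Re[M_W]\|)^{1/2}$. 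Hence $\|D_l^\dagger D_k\|\leq(2\|I-\Re[M_W]\|)^{1/2}\|I-M_W\|$, and the case $k>l$ is symmetric, the collapse then occurring on the right against $\hat{V}_1^{(k)\dagger}$. This one-sided collapse is precisely what distinguishes the adaptive bound from the parallel one: in the tensor-product setting of Theorem~\ref{thm:parallel-bures-dist-channels} each $\Delta$ is flanked directly by $\hat{V}_1^\dagger$ and $\hat{V}_2^\dagger$, so both ends collapse and yield $\|I-M_W\|^2$, whereas here the interleaved $T_k$ blocks one of the two collapses.

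Summing the $n(n-1)$ off-diagonal terms gives $n(n-1)(2\|I-\Re[M_W]\|)^{1/2}\|I-M_W\|$; adding the diagonal contribution and factoring out $n$ reproduces the bracketed expression in \eqref{eq:bures-distance-adaptive-setting}. Since the resulting bound is independent of the protocol $(\mathcal{A}_i)_{i=1}^{n-1}$, I would conclude by taking the supremum over protocols on the left and the infimum over contractions $W$ on the right. Beyond the one-sided-collapse computation, the main bookkeeping obstacle is to set up the composite extensions $\Phi_i$ so that the accumulated-environment identities are tracked consistently and the telescoping factorizations $L_k$, $R_k$, $C_{l,k}$ are genuinely products of isometries; once that is in place, the spectral-norm estimates are routine.
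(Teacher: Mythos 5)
Your proposal is correct and takes essentially the same route as the paper's proof: qubit-augmented isometric extensions of each query interleaved with common isometric extensions of the $\mathcal{A}_i$, a telescoping decomposition over the $n$ query slots, a triangle-inequality split into diagonal and cross terms, and the same one-sided collapse $\left(\hat{V}_1-\hat{V}_2\right)^{\dag}\hat{V}_1=I-\hat{V}_2^{\dag}\hat{V}_1$ yielding $\left\Vert I-M_W\right\Vert\left(2\left\Vert I-\Re\!\left[M_W\right]\right\Vert\right)^{\frac{1}{2}}$ for each of the $n\left(n-1\right)$ cross terms. I see no gaps.
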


\begin{proof}
Let us begin by observing that the channel $\mathcal{P}_{i}^{(n)}$
in~\eqref{eq:adaptive-protocol-bures-distance} has an isometric extension
of the following form:
\begin{equation}
\left(I\otimes\hat{V}_{i}\right)A_{n-1}\cdots A_{2}\left(I\otimes\hat{V}_{i}\right)A_{1}\left(I\otimes\hat{V}_{i}\right),
\end{equation}
where $\hat{V}_{i}$ is chosen as in~\eqref{eq:isometric-exts-ch-disc-extra-qubit-1}--\eqref{eq:isometric-exts-ch-disc-extra-qubit-2}
and $A_{i}$ is an isometric extension of $\mathcal{A}_{i}$. Suppressing
implicit tensor products with identity in what follows, we can apply
Theorem~\ref{thm:Bures-distance-channels} to $d_{B}^{2}\!\left(\mathcal{P}_{i}^{(n)},\mathcal{P}_{2}^{(n)}\right)$
to conclude that
\begin{equation}
d_{B}^{2}\!\left(\mathcal{P}_{i}^{(n)},\mathcal{P}_{2}^{(n)}\right)\leq\left\Vert \hat{V}_{1}A_{n-1}\cdots A_{2}\hat{V}_{1}A_{1}\hat{V}_{1}-\hat{V}_{2}A_{n-1}\cdots A_{2}\hat{V}_{2}A_{1}\hat{V}_{2}\right\Vert ^{2}.\label{eq:up-bnd-bures-adaptive-step-1}
\end{equation}
Continuing, consider that 
\begin{align}
 & \left\Vert \hat{V}_{1}A_{n-1}\cdots A_{2}\hat{V}_{1}A_{1}\hat{V}_{1}-\hat{V}_{2}A_{n-1}\cdots A_{2}\hat{V}_{2}A_{1}\hat{V}_{2}\right\Vert ^{2}\nonumber \\
 & =\left\Vert \sum_{i=1}^{n}\hat{V}_{1}A_{n-1}\cdots A_{i+1}\left(\hat{V}_{1}-\hat{V}_{2}\right)A_{i-1}\cdots A_{1}\hat{V}_{2}\right\Vert ^{2}\\
 & =\left\Vert \begin{array}{c}
\left(\sum_{j=1}^{n}\hat{V}_{1}A_{n-1}\cdots A_{j+1}\left(\hat{V}_{1}-\hat{V}_{2}\right)A_{j-1}\cdots A_{1}\hat{V}_{2}\right)^{\dag}\times\\
\left(\sum_{i=1}^{n}\hat{V}_{1}A_{n-1}\cdots A_{i+1}\left(\hat{V}_{1}-\hat{V}_{2}\right)A_{i-1}\cdots A_{1}\hat{V}_{2}\right)
\end{array}\right\Vert \\
 & =\left\Vert \begin{array}{c}
\left(\sum_{j=1}^{n}\hat{V}_{2}^{\dag}A_{1}^{\dag}\cdots A_{j-1}^{\dag}\left(\hat{V}_{1}-\hat{V}_{2}\right)^{\dag}A_{j+1}^{\dag}\cdots A_{n-1}^{\dag}\hat{V}_{1}^{\dag}\right)\times\\
\left(\sum_{i=1}^{n}\hat{V}_{1}A_{n-1}\cdots A_{i+1}\left(\hat{V}_{1}-\hat{V}_{2}\right)A_{i-1}\cdots A_{1}\hat{V}_{2}\right)
\end{array}\right\Vert \\
 & =\left\Vert \begin{array}{c}
\sum_{i,j=1}^{n}\left(\hat{V}_{2}^{\dag}A_{1}^{\dag}\cdots A_{j-1}^{\dag}\left(\hat{V}_{1}-\hat{V}_{2}\right)^{\dag}A_{j+1}^{\dag}\cdots A_{n-1}^{\dag}\hat{V}_{1}^{\dag}\right)\times\\
\left(\hat{V}_{1}A_{n-1}\cdots A_{i+1}\left(\hat{V}_{1}-\hat{V}_{2}\right)A_{i-1}\cdots A_{1}\hat{V}_{2}\right)
\end{array}\right\Vert \\
 & \leq\sum_{i=1}^{n}\left\Vert \begin{array}{c}
\left(\hat{V}_{2}^{\dag}A_{1}^{\dag}\cdots A_{i-1}^{\dag}\left(\hat{V}_{1}-\hat{V}_{2}\right)^{\dag}A_{i+1}^{\dag}\cdots A_{n-1}^{\dag}\hat{V}_{1}^{\dag}\right)\times\\
\left(\hat{V}_{1}A_{n-1}\cdots A_{i+1}\left(\hat{V}_{1}-\hat{V}_{2}\right)A_{i-1}\cdots A_{1}\hat{V}_{2}\right)
\end{array}\right\Vert \nonumber \\
 & \quad+\sum_{i,j=1:i>j}^{n}\left\Vert \begin{array}{c}
\left(\hat{V}_{2}^{\dag}A_{1}^{\dag}\cdots A_{j-1}^{\dag}\left(\hat{V}_{1}-\hat{V}_{2}\right)^{\dag}A_{j+1}^{\dag}\cdots A_{n-1}^{\dag}\hat{V}_{1}^{\dag}\right)\times\\
\left(\hat{V}_{1}A_{n-1}\cdots A_{i+1}\left(\hat{V}_{1}-\hat{V}_{2}\right)A_{i-1}\cdots A_{1}\hat{V}_{2}\right)
\end{array}\right\Vert \nonumber \\
 & \quad+\sum_{i,j=1:i<j}^{n}\left\Vert \begin{array}{c}
\left(\hat{V}_{2}^{\dag}A_{1}^{\dag}\cdots A_{j-1}^{\dag}\left(\hat{V}_{1}-\hat{V}_{2}\right)^{\dag}A_{j+1}^{\dag}\cdots A_{n-1}^{\dag}\hat{V}_{1}^{\dag}\right)\times\\
\left(\hat{V}_{1}A_{n-1}\cdots A_{i+1}\left(\hat{V}_{1}-\hat{V}_{2}\right)A_{i-1}\cdots A_{1}\hat{V}_{2}\right)
\end{array}\right\Vert .\label{eq:adaptive-ch-disc-proof-up-bnd}
\end{align}
Let us bound the $i$th term of the first sum in~\eqref{eq:adaptive-ch-disc-proof-up-bnd}
as follows:
\begin{align}
 & \left\Vert \left(\hat{V}_{2}^{\dag}A_{1}^{\dag}\cdots A_{i-1}^{\dag}\left(\hat{V}_{1}-\hat{V}_{2}\right)^{\dag}A_{i+1}^{\dag}\cdots A_{n-1}^{\dag}\hat{V}_{1}^{\dag}\right)\left(\hat{V}_{1}A_{n-1}\cdots A_{i+1}\left(\hat{V}_{1}-\hat{V}_{2}\right)A_{i-1}\cdots A_{1}\hat{V}_{2}\right)\right\Vert \nonumber \\
 & =\left\Vert \hat{V}_{2}^{\dag}A_{1}^{\dag}\cdots A_{i-1}^{\dag}\left(\hat{V}_{1}-\hat{V}_{2}\right)^{\dag}\left(\hat{V}_{1}-\hat{V}_{2}\right)A_{i-1}\cdots A_{1}\hat{V}_{2}\right\Vert \\
 & \leq\left\Vert \left(\hat{V}_{1}-\hat{V}_{2}\right)^{\dag}\left(\hat{V}_{1}-\hat{V}_{2}\right)\right\Vert \left\Vert \hat{V}_{2}^{\dag}A_{1}^{\dag}\cdots A_{i-1}^{\dag}A_{i-1}\cdots A_{1}\hat{V}_{2}\right\Vert \\
 & =\left\Vert \hat{V}_{1}-\hat{V}_{2}\right\Vert ^{2}.\label{eq:adaptive-bures-bound-proof-1st-term}
\end{align}
Also, for $i>j$, let us bound the $(i,j)$th term of the second sum
in~\eqref{eq:adaptive-ch-disc-proof-up-bnd} as follows:
\begin{align}
 & \left\Vert \left(\hat{V}_{2}^{\dag}A_{1}^{\dag}\cdots A_{j-1}^{\dag}\left(\hat{V}_{1}-\hat{V}_{2}\right)^{\dag}A_{j+1}^{\dag}\cdots A_{n-1}^{\dag}\hat{V}_{1}^{\dag}\right)\left(\hat{V}_{1}A_{n-1}\cdots A_{i+1}\left(\hat{V}_{1}-\hat{V}_{2}\right)A_{i-1}\cdots A_{1}\hat{V}_{2}\right)\right\Vert \nonumber \\
 & =\left\Vert \hat{V}_{2}^{\dag}A_{1}^{\dag}\cdots A_{j-1}^{\dag}\left(\hat{V}_{1}-\hat{V}_{2}\right)^{\dag}\hat{V}_{1}A_{j-1}\cdots A_{i+1}\left(\hat{V}_{1}-\hat{V}_{2}\right)A_{i-1}\cdots A_{1}\hat{V}_{2}\right\Vert \\
 & \leq\left\Vert \hat{V}_{2}^{\dag}A_{1}^{\dag}\cdots A_{j-1}^{\dag}\right\Vert \left\Vert \left(\hat{V}_{1}-\hat{V}_{2}\right)^{\dag}\hat{V}_{1}\right\Vert \left\Vert A_{j-1}\cdots A_{i+1}\right\Vert \left\Vert \hat{V}_{1}-\hat{V}_{2}\right\Vert \left\Vert A_{i-1}\cdots A_{1}\hat{V}_{2}\right\Vert \\
 & =\left\Vert \left(\hat{V}_{1}-\hat{V}_{2}\right)^{\dag}\hat{V}_{1}\right\Vert \left\Vert \hat{V}_{1}-\hat{V}_{2}\right\Vert \\
 & =\left\Vert I-\hat{V}_{2}^{\dag}\hat{V}_{1}\right\Vert \left\Vert \hat{V}_{1}-\hat{V}_{2}\right\Vert \\
 & =\left\Vert \hat{V}_{1}-\hat{V}_{2}\right\Vert \left\Vert I-\hat{V}_{1}^{\dag}\hat{V}_{2}\right\Vert .\label{eq:adaptive-bures-bound-proof-2nd-term}
\end{align}
For $i<j$, let us bound the $(i,j)$th term of the third sum in~\eqref{eq:adaptive-ch-disc-proof-up-bnd}
as follows:
\begin{align}
 & \left\Vert \left(\hat{V}_{2}^{\dag}A_{1}^{\dag}\cdots A_{j-1}^{\dag}\left(\hat{V}_{1}-\hat{V}_{2}\right)^{\dag}A_{j+1}^{\dag}\cdots A_{n-1}^{\dag}\hat{V}_{1}^{\dag}\right)\left(\hat{V}_{1}A_{n-1}\cdots A_{i+1}\left(\hat{V}_{1}-\hat{V}_{2}\right)A_{i-1}\cdots A_{1}\hat{V}_{2}\right)\right\Vert \nonumber \\
 & =\left\Vert \hat{V}_{2}^{\dag}A_{1}^{\dag}\cdots A_{j-1}^{\dag}\left(\hat{V}_{1}-\hat{V}_{2}\right)^{\dag}A_{j+1}^{\dag}\cdots A_{i-1}^{\dag}\hat{V}_{1}^{\dag}\left(\hat{V}_{1}-\hat{V}_{2}\right)A_{i-1}\cdots A_{1}\hat{V}_{2}\right\Vert \\
 & \leq\left\Vert \hat{V}_{2}^{\dag}A_{1}^{\dag}\cdots A_{j-1}^{\dag}\right\Vert \left\Vert \left(\hat{V}_{1}-\hat{V}_{2}\right)^{\dag}\right\Vert \left\Vert A_{j+1}^{\dag}\cdots A_{i-1}^{\dag}\right\Vert \left\Vert \hat{V}_{1}^{\dag}\left(\hat{V}_{1}-\hat{V}_{2}\right)\right\Vert \left\Vert A_{i-1}\cdots A_{1}\hat{V}_{2}\right\Vert \\
 & =\left\Vert \hat{V}_{1}-\hat{V}_{2}\right\Vert \left\Vert \hat{V}_{1}^{\dag}\left(\hat{V}_{1}-\hat{V}_{2}\right)\right\Vert \\
 & =\left\Vert \hat{V}_{1}-\hat{V}_{2}\right\Vert \left\Vert I-\hat{V}_{1}^{\dag}\hat{V}_{2}\right\Vert .\label{eq:adaptive-bures-bound-proof-3rd-term}
\end{align}
Then, applying~\eqref{eq:up-bnd-bures-adaptive-step-1},~\eqref{eq:adaptive-ch-disc-proof-up-bnd},
\eqref{eq:adaptive-bures-bound-proof-1st-term},~\eqref{eq:adaptive-bures-bound-proof-2nd-term},
and~\eqref{eq:adaptive-bures-bound-proof-3rd-term}, we conclude that
\begin{align}
 & d_{B}^{2}\!\left(\mathcal{P}_{i}^{(n)},\mathcal{P}_{2}^{(n)}\right)\nonumber \\
 & \leq n\left\Vert \hat{V}_{1}-\hat{V}_{2}\right\Vert ^{2}+\frac{n\left(n-1\right)}{2}\left\Vert \hat{V}_{1}-\hat{V}_{2}\right\Vert \left\Vert I-\hat{V}_{1}^{\dag}\hat{V}_{2}\right\Vert +\frac{n\left(n-1\right)}{2}\left\Vert \hat{V}_{1}-\hat{V}_{2}\right\Vert \left\Vert I-\hat{V}_{1}^{\dag}\hat{V}_{2}\right\Vert \\
 & =n\left\Vert \hat{V}_{1}-\hat{V}_{2}\right\Vert ^{2}+n\left(n-1\right)\left\Vert \hat{V}_{1}-\hat{V}_{2}\right\Vert \left\Vert I-\hat{V}_{1}^{\dag}\hat{V}_{2}\right\Vert ,
\end{align}
thus concluding the proof after applying~\eqref{eq:norm-iso-exts-parallel-disc-simplify}--\eqref{eq:overlap-iso-exts-parallel-disc-simplify}
and the reasoning thereafter.
\end{proof}
\begin{rem}
We can alternatively express~\eqref{eq:bures-distance-adaptive-setting}
using the notation of~\cite{Yuan2017}. Appealing to Remark~\ref{rem:isometric-overlap-to-kraus}
and defining $K_{W}\coloneqq\sum_{i,j}W_{i,j}K_{i}^{\dag}L_{j}$,
we can write the objective function in~\eqref{eq:bures-distance-adaptive-setting}
as follows:
\begin{multline}
2\left\Vert I-\Re\!\left[M_{W}\right]\right\Vert +\left(n-1\right)\left(2\left\Vert I-\Re\!\left[M_{W}\right]\right\Vert \right)^{\frac{1}{2}}\left\Vert I-M_{W}\right\Vert \\
=\left\Vert 2I-K_{W}-K_{W}^{\dag}\right\Vert +\left(n-1\right)\left\Vert 2I-K_{W}-K_{W}^{\dag}\right\Vert ^{\frac{1}{2}}\left\Vert I-K_{W}\right\Vert .
\end{multline}
\begin{rem}
\label{rem:parallel-adaptive-disc-compare-up-bnds}The upper bound
in Theorem~\ref{thm:parallel-bures-dist-channels} does not exceed
the upper bound in Theorem~\ref{thm:seq-bures-distance-channels},
consistent with the intuition that the Bures distance for parallel
protocols should not exceed the Bures distance of adaptive protocols,
given that parallel protocols are contained with the set of all adaptive
protocols. This can be seen by comparing the objective functions in
Theorems~\ref{thm:parallel-bures-dist-channels} and~\ref{thm:seq-bures-distance-channels}
and noting that the following inequality holds for every contraction~$W$:
\begin{align}
\left\Vert I-M_{W}\right\Vert  & =\left\Vert \left(I-M_{W}\right)^{\dag}\left(I-M_{W}\right)\right\Vert ^{\frac{1}{2}}\\
 & =\left\Vert I-M_{W}-M_{W}^{\dag}+M_{W}^{\dag}M_{W}\right\Vert ^{\frac{1}{2}}\\
 & \leq\left\Vert I-M_{W}-M_{W}^{\dag}+I\right\Vert ^{\frac{1}{2}}\\
 & =\left(2\left\Vert I-\Re\!\left[M_{W}\right]\right\Vert \right)^{\frac{1}{2}},
\end{align}
where the inequality follows because
\begin{align}
I-M_{W}-M_{W}^{\dag}+M_{W}^{\dag}M_{W} & \leq I-M_{W}-M_{W}^{\dag}+I\\
\Longleftrightarrow\qquad M_{W}^{\dag}M_{W} & \leq I,\label{eq:M_W-contraction}
\end{align}
with the final inequality in~\eqref{eq:M_W-contraction} holding because
$M_{W}$ is a contraction, itself being a composition of contractions.
\end{rem}

\end{rem}

\section{Upper bounds on SLD Fisher information of channels}\label{sec:Upper-bounds-SLD-Fisher}

In this section, we present upper bounds on the SLD Fisher information
of a smooth family $\left(\mathcal{N}_{\theta}\right)_{\theta\in\Theta}$
of channels, in both the parallel and adaptive settings. The bounds
are expressed in terms of a canonical isometric extension $V_{\theta}\colon\mathbb{C}^{d_{A}}\mapsto\mathbb{C}^{d_{E}}\otimes\mathbb{C}^{d_{B}}$
of the channel $\mathcal{N}_{\theta}$, as given in~\eqref{eq:canonical-isometric-ext-smooth-family}. 

\subsection{Upper bound on SLD Fisher information in the parallel setting}

\cite[Section~4.1]{Fujiwara2008} established an upper bound on the
SLD Fisher information of the smooth family $\smash{\left(\mathcal{N}_{\theta}^{\otimes n}\right)_{\theta\in\Theta}}$
of channels, for all $n\in\mathbb{N}$. This upper bound has subsequently
been employed in several foundational papers on quantum estimation
theory~\cite{DemkowiczDobrzanski2012,Kolodynski2013,Demkowicz2014,Zhou2021}.
Here we restate this upper bound in a different form, in terms of
canonical isometric extensions of the channel family $\left(\mathcal{N}_{\theta}\right)_{\theta\in\Theta}$,
and we provide an alternate proof based on basic properties of isometries
and the spectral norm.

\begin{thm}
\label{thm:upper-bound-SLD-Fisher-parallel}For a smooth family $\left(\mathcal{N}_{\theta}\right)_{\theta\in\Theta}$
of quantum channels, the following inequality holds for all $n\in\mathbb{N}$:
\begin{align}
\frac{1}{4}I_{F}\!\left(\theta;\left(\mathcal{N}_{\theta}^{\otimes n}\right)_{\theta\in\Theta}\right) & \leq n\inf_{\left(H_{\theta}\right)_{\theta\in\Theta}}\left\{ \left\Vert M_{H_{\theta}}\right\Vert ^{2}+\left(n-1\right)\left\Vert V_{\theta}^{\dag}M_{H_{\theta}}\right\Vert ^{2}\right\} ,
\end{align}
where $V_{\theta}\colon\mathbb{C}^{d_{A}}\mapsto\mathbb{C}^{d_{E}}\otimes\mathbb{C}^{d_{B}}$
is a canonical isometric extension of $\mathcal{N}_{\theta}$ of the
form in~\eqref{eq:canonical-isometric-ext-smooth-family}, the optimization
is over every smooth family $\left(H_{\theta}\right)_{\theta\in\Theta}$
of Hermitian operators, and
\begin{equation}
M_{H_{\theta}}\equiv\left(\partial_{\theta}V_{\theta}\right)-i\left(H_{\theta}\otimes I_{B}\right)V_{\theta}.
\end{equation}
\end{thm}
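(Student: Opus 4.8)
The plan is to mirror the proof of Theorem~\ref{thm:parallel-bures-dist-channels}, systematically replacing the finite difference $\hat{V}_1 - \hat{V}_2$ by a derivative. I would begin from the unitary-optimized characterization of SLD Fisher information in Theorem~\ref{thm:SLD-Fisher-channels}, applied to the tensor-power family $(\mathcal{N}_\theta^{\otimes n})_\theta$. Since $V_\theta^{\otimes n}$ is a canonical isometric extension of $\mathcal{N}_\theta^{\otimes n}$, the expression in~\eqref{eq:SLD-channels-final-exp-thm} gives
\[
\frac{1}{4}I_F\!\left(\theta;(\mathcal{N}_\theta^{\otimes n})_\theta\right) = \inf_{(U_\theta^{(n)})}\left\| \partial_\theta\!\left[(U_\theta^{(n)}\otimes I)\,V_\theta^{\otimes n}\right]\right\|^2,
\]
with $U_\theta^{(n)}$ ranging over smooth families of unitaries on the joint environment $(\mathbb{C}^{d_E})^{\otimes n}$. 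To produce an upper bound, I would restrict this infimum to product unitaries $U_\theta^{(n)} = U_\theta^{\otimes n}$. Setting $\tilde{W}_\theta \equiv (U_\theta \otimes I_B)\,V_\theta$, this restriction collapses the bracketed operator to $\tilde{W}_\theta^{\otimes n}$, so the task becomes bounding $\|\partial_\theta \tilde{W}_\theta^{\otimes n}\|^2$ --- the exact analog of $\|\hat{V}_1^{\otimes n} - \hat{V}_2^{\otimes n}\|^2$ from Theorem~\ref{thm:parallel-bures-dist-channels}.

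The core computation then proceeds by the Leibniz product rule, which yields the telescoping sum
\[
\partial_\theta \tilde{W}_\theta^{\otimes n} = \sum_{i=1}^n \tilde{W}_\theta^{\otimes i-1}\otimes\left(\partial_\theta \tilde{W}_\theta\right)_i\otimes \tilde{W}_\theta^{\otimes n-i},
\]
playing the same role as the finite-difference expansion before. I would expand $\|\partial_\theta \tilde{W}_\theta^{\otimes n}\|^2 = \|(\partial_\theta \tilde{W}_\theta^{\otimes n})^\dagger(\partial_\theta \tilde{W}_\theta^{\otimes n})\|$ as a double sum over $(i,j)$, apply the triangle inequality, and separate the $n$ diagonal terms ($i=j$) from the $n(n-1)$ off-diagonal terms ($i\neq j$). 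The isometry identity $\tilde{W}_\theta^\dagger \tilde{W}_\theta = I$ then reduces every ``pass-through'' tensor factor to the identity, just as $\hat{V}_i^\dagger \hat{V}_i = I$ did in the earlier proof.

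To identify the surviving factors, I would observe that each diagonal term equals $\|(\partial_\theta \tilde{W}_\theta)^\dagger(\partial_\theta \tilde{W}_\theta)\| = \|\partial_\theta \tilde{W}_\theta\|^2$, and since left-multiplication by the isometry $U_\theta^\dagger \otimes I_B$ preserves the spectral norm, this equals $\|M_{H_\theta}\|^2$ upon setting $H_\theta \equiv i\,U_\theta^\dagger(\partial_\theta U_\theta)$, the same substitution used throughout Theorem~\ref{thm:SLD-Fisher-channels}. Each off-diagonal term factors as $\|(\partial_\theta \tilde{W}_\theta)^\dagger \tilde{W}_\theta\|\,\|\tilde{W}_\theta^\dagger(\partial_\theta \tilde{W}_\theta)\|$, and by Remark~\ref{rem:simplify-overlap-term-herm} each factor equals $\|V_\theta^\dagger M_{H_\theta}\|$. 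Collecting contributions gives $n\|M_{H_\theta}\|^2 + n(n-1)\|V_\theta^\dagger M_{H_\theta}\|^2$, and taking the infimum over $H_\theta$ delivers the stated bound.

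The step I expect to require the most care is the bookkeeping in the reduction to $\tilde{W}_\theta^{\otimes n}$: the product unitary acts on the environment factors $E_1,\dots,E_n$, whereas in $V_\theta^{\otimes n}$ these are interleaved with the output factors $B_1,\dots,B_n$. I would therefore need to verify carefully that $(U_\theta^{\otimes n}\otimes I)\,V_\theta^{\otimes n}$ genuinely equals $\tilde{W}_\theta^{\otimes n}$ up to a benign reordering of systems, and that this reordering is compatible with the diagonal/off-diagonal decomposition and with the per-factor application of the isometry relation.
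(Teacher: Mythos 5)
Your proposal is correct and follows essentially the same route as the paper's proof: the paper also defines $\hat{V}_{\theta}\equiv\left(U_{\theta}\otimes I_{B}\right)V_{\theta}$ (your $\tilde{W}_{\theta}$), upper-bounds the SLD Fisher information by $\bigl\Vert \partial_{\theta}\bigl[\hat{V}_{\theta}^{\otimes n}\bigr]\bigr\Vert ^{2}$ via Theorem \ref{thm:SLD-Fisher-channels} applied to the particular isometric extension $\hat{V}_{\theta}^{\otimes n}$, and then performs the identical Leibniz-rule expansion, triangle-inequality split into $n$ diagonal and $n\left(n-1\right)$ off-diagonal terms, and identification with $M_{H_{\theta}}$ via $H_{\theta}=iU_{\theta}^{\dag}\left(\partial_{\theta}U_{\theta}\right)$ and Remark \ref{rem:simplify-overlap-term-herm}. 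The system-reordering issue you flag is indeed benign and is handled implicitly in the paper.
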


\begin{proof}
Let us employ the shorthand
\begin{equation}
\hat{V}_{\theta}\equiv\left(U_{\theta}\otimes I_{B}\right)V_{\theta},\label{eq:isometric-extension-parallel-estimation}
\end{equation}
where $\left(U_{\theta}\right)_{\theta\in\Theta}$ is a smooth family
of unitary operators, so that
\begin{equation}
\partial_{\theta}\hat{V}_{\theta}=\partial_{\theta}\left[\left(U_{\theta}\otimes I_{RB}\right)V_{\theta}\right]=\left(\partial_{\theta}V_{\theta}\right)-i\left(H_{\theta}\otimes I_{B}\right)V_{\theta},\label{eq:deriv-V-hat}
\end{equation}
where $\left(H_{\theta}\right)_{\theta\in\Theta}$ is a smooth family
of Hermitian operators, which follows from reasoning similar to that
given for~\eqref{eq:SLD-channels-proof-herm-introduce}. We can use
the fact that $\hat{V}_{\theta}^{\otimes n}$ is a particular isometric
extension of $\mathcal{N}_{\theta}^{\otimes n}$ and apply Theorem
\ref{thm:SLD-Fisher-channels} to conclude that
\begin{align}
 & \frac{1}{4}I_{F}\!\left(\theta;\left(\mathcal{N}_{\theta}^{\otimes n}\right)_{\theta\in\Theta}\right)\nonumber \\
 & \leq\left\Vert \partial_{\theta}\left[\hat{V}_{\theta}^{\otimes n}\right]\right\Vert ^{2}\\
 & =\left\Vert \sum_{i=1}^{n}\hat{V}_{\theta}^{\otimes i-1}\otimes\left[\partial_{\theta}\hat{V}_{\theta}\right]\otimes\hat{V}_{\theta}^{\otimes n-i}\right\Vert ^{2}\\
 & =\left\Vert \left(\sum_{j=1}^{n}\hat{V}_{\theta}^{\otimes j-1}\otimes\left[\partial_{\theta}\hat{V}_{\theta}\right]_{j}\otimes\hat{V}_{\theta}^{\otimes n-j}\right)^{\dag}\left(\sum_{i=1}^{n}\hat{V}_{\theta}^{\otimes i-1}\otimes\left[\partial_{\theta}\hat{V}_{\theta}\right]_{i}\otimes\hat{V}_{\theta}^{\otimes n-i}\right)\right\Vert \\
 & =\left\Vert \left(\sum_{j=1}^{n}\left(\hat{V}_{\theta}^{\dag}\right)^{\otimes j-1}\otimes\left[\partial_{\theta}\hat{V}_{\theta}^{\dag}\right]_{j}\otimes\left(\hat{V}_{\theta}^{\dag}\right)^{\otimes n-j}\right)\left(\sum_{i=1}^{n}\hat{V}_{\theta}^{\otimes i-1}\otimes\left[\partial_{\theta}\hat{V}_{\theta}\right]_{i}\otimes\hat{V}_{\theta}^{\otimes n-i}\right)\right\Vert \\
 & =\left\Vert \sum_{i,j=1}^{n}\left(\left(\hat{V}_{\theta}^{\dag}\right)^{\otimes j-1}\otimes\left[\partial_{\theta}\hat{V}_{\theta}^{\dag}\right]_{j}\otimes\left(\hat{V}_{\theta}^{\dag}\right)^{\otimes n-j}\right)\left(\hat{V}_{\theta}^{\otimes i-1}\otimes\left[\partial_{\theta}\hat{V}_{\theta}\right]_{i}\otimes\hat{V}_{\theta}^{\otimes n-i}\right)\right\Vert \\
 & \leq\sum_{i=1}^{n}\left\Vert \left(\left(\hat{V}_{\theta}^{\dag}\right)^{\otimes i-1}\otimes\left[\partial_{\theta}\hat{V}_{\theta}^{\dag}\right]_{i}\otimes\left(\hat{V}_{\theta}^{\dag}\right)^{\otimes n-i}\right)\left(\hat{V}_{\theta}^{\otimes i-1}\otimes\left[\partial_{\theta}\hat{V}_{\theta}\right]_{i}\otimes\hat{V}_{\theta}^{\otimes n-i}\right)\right\Vert \nonumber \\
 & \qquad+\sum_{i,j=1:i\neq j}^{n}\left\Vert \left(\left(\hat{V}_{\theta}^{\dag}\right)^{\otimes j-1}\otimes\left[\partial_{\theta}\hat{V}_{\theta}^{\dag}\right]_{j}\otimes\left(\hat{V}_{\theta}^{\dag}\right)^{\otimes n-j}\right)\left(\hat{V}_{\theta}^{\otimes i-1}\otimes\left[\partial_{\theta}\hat{V}_{\theta}\right]_{i}\otimes\hat{V}_{\theta}^{\otimes n-i}\right)\right\Vert \\
 & =\sum_{i=1}^{n}\left\Vert \left(\hat{V}_{\theta}^{\dag}\hat{V}_{\theta}\right)^{\otimes i-1}\otimes\left[\left(\partial_{\theta}\hat{V}_{\theta}^{\dag}\right)\left(\partial_{\theta}\hat{V}_{\theta}\right)\right]_{i}\otimes\left(\hat{V}_{\theta}^{\dag}\hat{V}_{\theta}\right)^{\otimes n-i}\right\Vert \nonumber \\
 & \qquad+\sum_{i,j=1:i\neq j}^{n}\left\Vert \hat{V}_{\theta}^{\dag}\left(\partial_{\theta}\hat{V}_{\theta}\right)\right\Vert \left\Vert \left(\partial_{\theta}\hat{V}_{\theta}^{\dag}\right)\hat{V}_{\theta}\right\Vert \\
 & =\sum_{i=1}^{n}\left\Vert \left(\partial_{\theta}\hat{V}_{\theta}^{\dag}\right)\left(\partial_{\theta}\hat{V}_{\theta}\right)\right\Vert +\sum_{i,j=1:i\neq j}^{n}\left\Vert \hat{V}_{\theta}^{\dag}\left(\partial_{\theta}\hat{V}_{\theta}\right)\right\Vert ^{2}\\
 & =\sum_{i=1}^{n}\left\Vert \partial_{\theta}\hat{V}_{\theta}\right\Vert ^{2}+\sum_{i,j=1:i\neq j}^{n}\left\Vert \hat{V}_{\theta}^{\dag}\left(\partial_{\theta}\hat{V}_{\theta}\right)\right\Vert ^{2}\\
 & =n\left\Vert \partial_{\theta}\hat{V}_{\theta}\right\Vert ^{2}+n\left(n-1\right)\left\Vert \hat{V}_{\theta}^{\dag}\left(\partial_{\theta}\hat{V}_{\theta}\right)\right\Vert ^{2}.
\end{align}
The first inequality follows from the triangle inequality for the
spectral norm. The other equalities follow from the spectral norm
being multiplicative under tensor products and it being equal to one
for the identity operator. Applying~\eqref{eq:deriv-V-hat} and Remark
\ref{rem:simplify-overlap-term-herm} and minimizing over every smooth
family $\left(H_{\theta}\right)_{\theta\in\Theta}$ of Hermitian operators,
we conclude the proof.
\end{proof}

\subsection{Upper bound on SLD Fisher information in the adaptive setting}

For a smooth family $\left(\mathcal{N}_{\theta}\right)_{\theta\in\Theta}$
of quantum channels, a general adaptive $n$-query protocol for channel
estimation consists of a tuple $\left(\mathcal{A}_{i}\right)_{i=1}^{n-1}$
of channels. Applying the protocol to $n$ queries of the channel
$\mathcal{N}_{\theta}$ leads to the following channel $\mathcal{P}_{\theta}^{(n)}$
for $\theta\in\Theta$:
\begin{equation}
\mathcal{P}_{\theta}^{(n)}\coloneqq\left(\id\otimes\mathcal{N}_{\theta}\right)\circ\mathcal{A}_{n-1}\circ\cdots\circ\mathcal{A}_{2}\circ\left(\id\otimes\mathcal{N}_{\theta}\right)\circ\mathcal{A}_{1}\circ\left(\id\otimes\mathcal{N}_{\theta}\right).\label{eq:general-protocol-adaptive-estimation}
\end{equation}
This in turn leads to the following information measure for the family
$\left(\mathcal{N}_{\theta}\right)_{\theta\in\Theta}$:
\begin{equation}
\sup_{\left(\mathcal{A}_{i}\right)_{i=1}^{n-1}}I_{F}\!\left(\theta;\left(\mathcal{P}_{\theta}^{(n)}\right)_{\theta\in\mathbb{R}}\right),\label{eq:adaptive-SLD-Fisher-def}
\end{equation}
which is related to the general measure considered in~\cite[Theorem~18]{Katariya2021}.

Ref.~\cite[Eq.~(11)]{Kurdzialek2023} established an upper bound
on the $n$-query adaptive SLD Fisher information in~\eqref{eq:adaptive-SLD-Fisher-def},
for all $n\in\mathbb{N}$, which improves upon a previous bound from
\cite[Eq.~(9)]{Demkowicz2014}. Here we restate this upper bound in
a different form, in terms of canonical isometric extensions of the
channel family $\left(\mathcal{N}_{\theta}\right)_{\theta\in\Theta}$,
and we provide an alternate proof based on basic properties of isometries
and the spectral norm.
\begin{thm}
\label{thm:upper-bound-SLD-Fisher-adaptive}For a smooth family $\left(\mathcal{N}_{\theta}\right)_{\theta\in\Theta}$
of quantum channels, the following inequality holds for all $n\in\mathbb{N}$:
\begin{align}
\frac{1}{4}\sup_{\left(\mathcal{A}_{i}\right)_{i=1}^{n-1}}I_{F}\!\left(\theta;\left(\mathcal{P}_{\theta}^{(n)}\right)_{\theta\in\mathbb{R}}\right) & \leq n\inf_{\left(H_{\theta}\right)_{\theta\in\Theta}}\left\{ \left\Vert M_{H_{\theta}}\right\Vert ^{2}+\left(n-1\right)\left\Vert V_{\theta}^{\dag}M_{H_{\theta}}\right\Vert \left\Vert M_{H_{\theta}}\right\Vert \right\} ,
\end{align}
where $V_{\theta}\colon\mathbb{C}^{d_{A}}\mapsto\mathbb{C}^{d_{E}}\otimes\mathbb{C}^{d_{B}}$
is a canonical isometric extension of $\mathcal{N}_{\theta}$ of the
form in~\eqref{eq:canonical-isometric-ext-smooth-family}, the optimization
is over every smooth family $\left(H_{\theta}\right)_{\theta\in\Theta}$
of Hermitian operators, and
\begin{equation}
M_{H_{\theta}}\equiv\left(\partial_{\theta}V_{\theta}\right)-i\left(H_{\theta}\otimes I_{B}\right)V_{\theta}.
\end{equation}
\end{thm}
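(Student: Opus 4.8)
The plan is to mirror the proof of Theorem~\ref{thm:seq-bures-distance-channels}, replacing the finite difference $\hat{V}_1-\hat{V}_2$ throughout by the derivative $\partial_\theta\hat{V}_\theta$. First I would fix an arbitrary adaptive protocol $(\mathcal{A}_i)_{i=1}^{n-1}$ and a smooth family $(U_\theta)_{\theta\in\Theta}$ of unitaries, setting $\hat{V}_\theta \equiv (U_\theta\otimes I_B)V_\theta$, so that by the reasoning around~\eqref{eq:deriv-V-hat} one has $\partial_\theta\hat{V}_\theta = M_{H_\theta}$ with $H_\theta \equiv iU_\theta^\dagger(\partial_\theta U_\theta)$. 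Since $\hat{V}_\theta A_{n-1}\cdots A_2\hat{V}_\theta A_1\hat{V}_\theta$ is a particular isometric extension of $\mathcal{P}_\theta^{(n)}$ (with each $A_i$ a $\theta$-independent isometric extension of $\mathcal{A}_i$), Theorem~\ref{thm:SLD-Fisher-channels} yields
\begin{equation}
\frac{1}{4}I_F\!\left(\theta;\left(\mathcal{P}_\theta^{(n)}\right)_\theta\right) \leq \left\Vert \partial_\theta\!\left[\hat{V}_\theta A_{n-1}\cdots A_2\hat{V}_\theta A_1\hat{V}_\theta\right]\right\Vert^2 .
\end{equation}
Because the eventual right-hand side will depend only on $V_\theta$ and $H_\theta$, and not on the protocol, the supremum over $(\mathcal{A}_i)_{i=1}^{n-1}$ will be controlled automatically.

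Next I would differentiate by the product rule, using that only the $\hat{V}_\theta$ factors carry $\theta$-dependence, obtaining a sum of $n$ terms whose $i$th summand has $\partial_\theta\hat{V}_\theta$ in the $i$th slot. Writing the squared spectral norm as the norm of the adjoint-times-itself product generates $n^2$ cross terms indexed by $(i,j)$, which I would split via the triangle inequality into $n$ diagonal terms ($i=j$) and $n(n-1)$ off-diagonal terms ($i\neq j$). For a diagonal term, the surrounding isometries collapse through $A_k^\dagger A_k = I$ and $\hat{V}_\theta^\dagger\hat{V}_\theta = I$, exactly as in~\eqref{eq:adaptive-bures-bound-proof-1st-term}, leaving $\|(\partial_\theta\hat{V}_\theta)^\dagger(\partial_\theta\hat{V}_\theta)\| = \|\partial_\theta\hat{V}_\theta\|^2 = \|M_{H_\theta}\|^2$.

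The crux is the off-diagonal estimate, and this is exactly where the adaptive case diverges from the parallel Theorem~\ref{thm:upper-bound-SLD-Fisher-parallel}. For $i\neq j$ the two $\partial_\theta\hat{V}_\theta$ factors occupy distinct slots separated by intervening isometries, so, following~\eqref{eq:adaptive-bures-bound-proof-2nd-term}--\eqref{eq:adaptive-bures-bound-proof-3rd-term}, every $A_k$ block and every bare $\hat{V}_\theta$ block contributes unit spectral norm and peels off by submultiplicativity. The single $\hat{V}_\theta^\dagger$ that collides with a $\partial_\theta\hat{V}_\theta$ fuses into the overlap factor $\hat{V}_\theta^\dagger(\partial_\theta\hat{V}_\theta)$, while the other $\partial_\theta\hat{V}_\theta$ is left exposed. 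Hence each off-diagonal term is bounded by $\|\hat{V}_\theta^\dagger(\partial_\theta\hat{V}_\theta)\|\,\|\partial_\theta\hat{V}_\theta\|$ --- an asymmetric product, in contrast to the $\|\hat{V}_\theta^\dagger(\partial_\theta\hat{V}_\theta)\|^2$ that appears in the parallel setting, where the two factorizations align.

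Finally I would sum the $n$ diagonal and $n(n-1)$ off-diagonal contributions to get $n\|M_{H_\theta}\|^2 + n(n-1)\|\hat{V}_\theta^\dagger(\partial_\theta\hat{V}_\theta)\|\,\|M_{H_\theta}\|$, invoke Remark~\ref{rem:simplify-overlap-term-herm} to rewrite $\|\hat{V}_\theta^\dagger(\partial_\theta\hat{V}_\theta)\| = \|V_\theta^\dagger M_{H_\theta}\|$, and minimize over the smooth Hermitian family $(H_\theta)_{\theta\in\Theta}$ (equivalently over $(U_\theta)_{\theta\in\Theta}$, by the correspondence around~\eqref{eq:hermitian-op-from-unitary-1}--\eqref{eq:hermitian-op-from-unitary-2}). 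The main obstacle is verifying the off-diagonal cancellation carefully enough to confirm that precisely one overlap factor $\hat{V}_\theta^\dagger(\partial_\theta\hat{V}_\theta)$ survives per term; once this asymmetric bound is secured, the combinatorial bookkeeping is identical to that of the adaptive Bures computation in Theorem~\ref{thm:seq-bures-distance-channels}.
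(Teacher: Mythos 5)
Your proposal is correct and follows essentially the same route as the paper's proof: the same isometric extension $\hat{V}_{\theta}A_{n-1}\cdots A_{1}\hat{V}_{\theta}$, the same product-rule expansion with diagonal/off-diagonal splitting, and the same asymmetric off-diagonal bound $\left\Vert \hat{V}_{\theta}^{\dag}\partial_{\theta}\hat{V}_{\theta}\right\Vert \left\Vert \partial_{\theta}\hat{V}_{\theta}\right\Vert$ obtained by peeling off the intervening isometries via submultiplicativity. The off-diagonal step you flag as the crux is exactly the computation in the paper's \eqref{eq:adaptive-bures-bound-proof-2nd-term}--\eqref{eq:adaptive-bures-bound-proof-3rd-term} transplanted to derivatives, and it goes through as you describe.
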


\begin{proof}
Let $\hat{V}_{\theta}$ be defined as in~\eqref{eq:isometric-extension-parallel-estimation}.
We can take isometric extensions of every step of the protocol $\mathcal{P}_{\theta}^{(n)}$
in~\eqref{eq:general-protocol-adaptive-estimation} to write an isometric
extension of $\mathcal{P}_{\theta}^{(n)}$ as follows:
\begin{equation}
\left(I\otimes\hat{V}_{\theta}\right)A_{n-1}\cdots A_{2}\left(I\otimes\hat{V}_{\theta}\right)A_{1}\left(I\otimes\hat{V}_{\theta}\right),
\end{equation}
where $A_{i}$ is an isometric extension of $\mathcal{A}_{i}$. Suppressing
implicit tensor products with identity in what follows, we apply Theorem
\ref{thm:SLD-Fisher-channels} to conclude that
\begin{align}
 & \frac{1}{4}I_{F}\!\left(\theta;\left(\mathcal{P}_{\theta}^{(n)}\right)_{\theta\in\mathbb{R}}\right)\nonumber \\
 & \leq\left\Vert \partial_{\theta}\left[\hat{V}_{\theta}A_{n-1}\cdots A_{2}\hat{V}_{\theta}A_{1}\hat{V}_{\theta}\right]\right\Vert ^{2}\\
 & =\left\Vert \sum_{i=1}^{n}\hat{V}_{\theta}A_{n-1}\cdots A_{i+1}\left(\partial_{\theta}\hat{V}_{\theta}\right)A_{i-1}\cdots A_{1}\hat{V}_{\theta}\right\Vert ^{2}\\
 & =\left\Vert \begin{array}{c}
\left(\sum_{j=1}^{n}\hat{V}_{\theta}A_{n-1}\cdots A_{j+1}\left(\partial_{\theta}\hat{V}_{\theta}\right)A_{j-1}\cdots A_{1}\hat{V}_{\theta}\right)^{\dag}\times\\
\left(\sum_{i=1}^{n}\hat{V}_{\theta}A_{n-1}\cdots A_{i+1}\left(\partial_{\theta}\hat{V}_{\theta}\right)A_{i-1}\cdots A_{1}\hat{V}_{\theta}\right)
\end{array}\right\Vert \\
 & =\left\Vert \begin{array}{c}
\left(\sum_{j=1}^{n}\hat{V}_{\theta}^{\dag}A_{1}^{\dag}\cdots A_{j-1}^{\dag}\left(\partial_{\theta}\hat{V}_{\theta}^{\dag}\right)A_{j+1}^{\dag}\cdots A_{n-1}^{\dag}\hat{V}_{\theta}^{\dag}\right)\times\\
\left(\sum_{i=1}^{n}\hat{V}_{\theta}A_{n-1}\cdots A_{i+1}\left(\partial_{\theta}\hat{V}_{\theta}\right)A_{i-1}\cdots A_{1}\hat{V}_{\theta}\right)
\end{array}\right\Vert \\
 & =\left\Vert \sum_{i,j=1}^{n}\left(\hat{V}_{\theta}^{\dag}A_{1}^{\dag}\cdots A_{j-1}^{\dag}\left(\partial_{\theta}\hat{V}_{\theta}^{\dag}\right)A_{j+1}^{\dag}\cdots A_{n-1}^{\dag}\hat{V}_{\theta}^{\dag}\right)\left(\hat{V}_{\theta}A_{n-1}\cdots A_{i+1}\left(\partial_{\theta}\hat{V}_{\theta}\right)A_{i-1}\cdots A_{1}\hat{V}_{\theta}\right)\right\Vert \\
 & \leq\sum_{i=1}^{n}\left\Vert \hat{V}_{\theta}^{\dag}A_{1}^{\dag}\cdots A_{i-1}^{\dag}\left(\partial_{\theta}\hat{V}_{\theta}^{\dag}\right)A_{i+1}^{\dag}\cdots A_{n-1}^{\dag}\hat{V}_{\theta}^{\dag}\hat{V}_{\theta}A_{n-1}\cdots A_{i+1}\left(\partial_{\theta}\hat{V}_{\theta}\right)A_{i-1}\cdots A_{1}\hat{V}_{\theta}\right\Vert \nonumber \\
 & \ +\sum_{\substack{i,j=1:\\
i\neq j
}
}^{n}\left\Vert \left(\hat{V}_{\theta}^{\dag}A_{1}^{\dag}\cdots A_{j-1}^{\dag}\left(\partial_{\theta}\hat{V}_{\theta}^{\dag}\right)A_{j+1}^{\dag}\cdots A_{n-1}^{\dag}\hat{V}_{\theta}^{\dag}\right)\left(\hat{V}_{\theta}A_{n-1}\cdots A_{i+1}\left(\partial_{\theta}\hat{V}_{\theta}\right)A_{i-1}\cdots A_{1}\hat{V}_{\theta}\right)\right\Vert .
\end{align}
Consider that
\begin{align}
 & \sum_{i=1}^{n}\left\Vert \hat{V}_{\theta}^{\dag}A_{1}^{\dag}\cdots A_{i-1}^{\dag}\left(\partial_{\theta}\hat{V}_{\theta}^{\dag}\right)A_{i+1}^{\dag}\cdots A_{n-1}^{\dag}\hat{V}_{\theta}^{\dag}\hat{V}_{\theta}A_{n-1}\cdots A_{i+1}\left(\partial_{\theta}\hat{V}_{\theta}\right)A_{i-1}\cdots A_{1}\hat{V}_{\theta}\right\Vert \nonumber \\
 & =\sum_{i=1}^{n}\left\Vert \hat{V}_{\theta}^{\dag}A_{1}^{\dag}\cdots A_{i-1}^{\dag}\left(\partial_{\theta}\hat{V}_{\theta}^{\dag}\right)\left(\partial_{\theta}\hat{V}_{\theta}\right)A_{i-1}\cdots A_{1}\hat{V}_{\theta}\right\Vert \\
 & \leq\sum_{i=1}^{n}\left\Vert \hat{V}_{\theta}^{\dag}A_{1}^{\dag}\cdots A_{i-1}^{\dag}A_{i-1}\cdots A_{1}\hat{V}_{\theta}\right\Vert \left\Vert \left(\partial_{\theta}\hat{V}_{\theta}^{\dag}\right)\left(\partial_{\theta}\hat{V}_{\theta}\right)\right\Vert \\
 & =\sum_{i=1}^{n}\left\Vert \partial_{\theta}\hat{V}_{\theta}\right\Vert ^{2}\\
 & =n\left\Vert \partial_{\theta}\hat{V}_{\theta}\right\Vert ^{2}.
\end{align}
For the cross terms where $i\neq j$, a generic term such that $i>j$
(without loss of generality) can be bounded from above as follows:
\begin{align}
 & \left\Vert \left(\hat{V}_{\theta}^{\dag}A_{1}^{\dag}\cdots A_{j-1}^{\dag}\left(\partial_{\theta}\hat{V}_{\theta}^{\dag}\right)A_{j+1}^{\dag}\cdots A_{n-1}^{\dag}\hat{V}_{\theta}^{\dag}\right)\left(\hat{V}_{\theta}A_{n-1}\cdots A_{i+1}\left(\partial_{\theta}\hat{V}_{\theta}\right)A_{i-1}\cdots A_{1}\hat{V}_{\theta}\right)\right\Vert \nonumber \\
 & =\left\Vert \hat{V}_{\theta}^{\dag}A_{1}^{\dag}\cdots A_{j-1}^{\dag}\left(\partial_{\theta}\hat{V}_{\theta}^{\dag}\right)\hat{V}_{\theta}A_{j-1}\cdots A_{i+1}\left(\partial_{\theta}\hat{V}_{\theta}\right)A_{i-1}\cdots A_{1}\hat{V}_{\theta}\right\Vert \\
 & \leq\left\Vert \hat{V}_{\theta}^{\dag}A_{1}^{\dag}\cdots A_{j-1}^{\dag}\right\Vert \left\Vert \left(\partial_{\theta}\hat{V}_{\theta}^{\dag}\right)\hat{V}_{\theta}\right\Vert \left\Vert A_{j-1}\cdots A_{i+1}\right\Vert \left\Vert \partial_{\theta}\hat{V}_{\theta}\right\Vert \left\Vert A_{i-1}\cdots A_{1}\hat{V}_{\theta}\right\Vert \\
 & =\left\Vert \left(\partial_{\theta}\hat{V}_{\theta}^{\dag}\right)\hat{V}_{\theta}\right\Vert \left\Vert \partial_{\theta}\hat{V}_{\theta}\right\Vert \\
 & =\left\Vert \hat{V}_{\theta}^{\dag}\partial_{\theta}\hat{V}_{\theta}\right\Vert \left\Vert \partial_{\theta}\hat{V}_{\theta}\right\Vert ,
\end{align}
where the inequality follows from submultiplicativity of the spectral
norm. Thus, we conclude that
\begin{multline}
\sum_{\substack{i,j=1:\\
i\neq j
}
}^{n}\left\Vert \left(\hat{V}_{\theta}^{\dag}A_{1}^{\dag}\cdots A_{j-1}^{\dag}\left(\partial_{\theta}\hat{V}_{\theta}^{\dag}\right)A_{j+1}^{\dag}\cdots A_{n-1}^{\dag}\hat{V}_{\theta}^{\dag}\right)\left(\hat{V}_{\theta}A_{n-1}\cdots A_{i+1}\left(\partial_{\theta}\hat{V}_{\theta}\right)A_{i-1}\cdots A_{1}\hat{V}_{\theta}\right)\right\Vert \\
\leq n\left(n-1\right)\left\Vert \hat{V}_{\theta}^{\dag}\partial_{\theta}\hat{V}_{\theta}\right\Vert \left\Vert \partial_{\theta}\hat{V}_{\theta}\right\Vert .
\end{multline}
Applying~\eqref{eq:deriv-V-hat} and Remark~\ref{rem:simplify-overlap-term-herm}
and minimizing over every smooth family $\left(H_{\theta}\right)_{\theta\in\Theta}$
of Hermitian operators, we conclude the proof.
\end{proof}

\begin{rem}
The upper bound in Theorem~\ref{thm:upper-bound-SLD-Fisher-parallel}
does not exceed the upper bound in Theorem~\ref{thm:upper-bound-SLD-Fisher-adaptive},
consistent with the intuition that the SLD Fisher information for
parallel protocols should not exceed the SLD Fisher information of
adaptive protocols, given that parallel protocols are contained with
the set of all adaptive protocols. This can be seen by comparing the
objective functions in Theorems~\ref{thm:upper-bound-SLD-Fisher-parallel}
and~\ref{thm:upper-bound-SLD-Fisher-adaptive} and noting that the
following inequality holds for every Hermitian operator $H_{\theta}$
and isometry $V_{\theta}$:
\begin{align}
\left\Vert V_{\theta}^{\dag}M_{H_{\theta}}\right\Vert  & =\left\Vert V_{\theta}V_{\theta}^{\dag}M_{H_{\theta}}\right\Vert \\
 & \leq\left\Vert M_{H_{\theta}}\right\Vert 
\end{align}
where the equality follows because the spectral norm is invariant
under left multiplication by the isometry $V_{\theta}$ and the inequality
follows because $V_{\theta}V_{\theta}^{\dag}$ is a projection and
the spectral norm does not increase under the action of a projection.
\end{rem}

\section{Applications to query complexity}\label{sec:Applications-to-query-comp}

In this section, we present lower bounds on the error probabilities
and query complexities of channel discrimination and estimation. With
the main framework in place now, our proofs of these lower bounds
follow somewhat directly from definitions and the upper bounds presented
in Sections~\ref{sec:Upper-bounds-Bures} and~\ref{sec:Upper-bounds-SLD-Fisher}.

\subsection{Lower bounds on query complexities of channel discrimination}

\label{subsec:Lower-bounds-query-comp-ch-disc}We begin by establishing
bounds on the performance of channel discrimination in the parallel
and adaptive settings, in terms of the metric defined in~\eqref{eq:alt-channel-disc-metric}:
\begin{cor}
\label{cor:error-prob-bounds-ch-disc}Let $p\in\left(0,1\right)$
and set $q=1-p$. For channels $\mathcal{N}_{1}$ and $\mathcal{N}_{2}$ with canonical isometric extensions $V_1$ and $V_2$,
the following bounds hold for the error probabilities of channel discrimination:
\begin{align}
1-\frac{p_{e,\|}\left(1-p_{e,\|}\right)}{pq} & \leq n\inf_{\substack{W\in\mathbb{B}
}
}\left\{ a_{W}+\left(n-1\right)b_{W}\right\} ,\label{eq:parallel-error-prob-up-bnd-disc}\\
1-\frac{p_{e,\adaptivedisc}\left(1-p_{e,\adaptivedisc}\right)}{pq} & \leq n\inf_{\substack{W\in\mathbb{B}
}
}\left\{ a_{W}+\left(n-1\right)\sqrt{a_{W}b_{W}}\right\} ,\label{eq:adaptive-error-prob-up-bnd}
\end{align}
where
\begin{align}
a_{W} & \equiv2\left\Vert I-\Re\!\left[M_{W}\right]\right\Vert ,\\
b_{W} & \equiv\left\Vert I-M_{W}\right\Vert ^{2},\\
p_{e,\|} & \equiv p_{e,\|}(p,\mathcal{N}_{1},q,\mathcal{N}_{2},n),\label{eq:parallel-err-prob-shorthand}\\
p_{e,\adaptivedisc} & \equiv p_{e,\adaptivedisc}(p,\mathcal{N}_{1},q,\mathcal{N}_{2},n),\\
M_{W} & \equiv V_{1}^{\dag}\left(W_{E}\otimes I_{B}\right)V_{2},
\end{align}
and $p_{e,\|}$ and $p_{e,\adaptivedisc}$ are defined in~\eqref{eq:error-prob-parallel-ch-disc}
and~\eqref{eq:error-prob-adaptive-ch-disc}, respectively.
\end{cor}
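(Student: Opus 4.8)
The plan is to observe first that the right-hand sides of \eqref{eq:parallel-error-prob-up-bnd-disc} and \eqref{eq:adaptive-error-prob-up-bnd} are \emph{exactly} the upper bounds on the squared Bures distance supplied by Theorems~\ref{thm:parallel-bures-dist-channels} and \ref{thm:seq-bures-distance-channels}. With $a_W \equiv 2\|I-\Re[M_W]\|$ and $b_W \equiv \|I-M_W\|^2$, the parallel objective $a_W + (n-1)b_W$ is the bracket in \eqref{eq:bures-parallel-upper-bnd}, and the adaptive objective $a_W + (n-1)\sqrt{a_W b_W}$ is the bracket in \eqref{eq:bures-distance-adaptive-setting}, since $(2\|I-\Re[M_W]\|)^{1/2}\|I-M_W\| = \sqrt{a_W}\,\sqrt{b_W}$. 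Thus all the channel-specific work is already done, and the only task is to relate the performance metric $1 - \frac{p_e(1-p_e)}{pq}$ to a squared Bures distance.

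The crux is a single-shot relationship for binary state discrimination: for states $\rho,\sigma$ with priors $p,q$, the minimum error probability $p_e = \frac{1}{2}(1 - \|p\rho - q\sigma\|_1)$ obeys $p_e(1-p_e) \geq pq\,F(\rho,\sigma)$. I would derive this from the weighted Fuchs--van de Graaf (Holevo--Helstrom) inequality $\|p\rho - q\sigma\|_1 \leq \sqrt{1 - 4pq\,F(\rho,\sigma)}$: writing $T \equiv \|p\rho - q\sigma\|_1$, one has $p_e(1-p_e) = \tfrac14(1 - T^2) \geq \tfrac14\cdot 4pq\,F = pq\,F$. Combining this with the elementary bound $1 - F \leq d_B^2$, which follows from $d_B^2 = 2(1-\sqrt F)$ together with $(1-\sqrt F)(1+\sqrt F) \leq 2(1-\sqrt F)$, gives $1 - \frac{p_e(1-p_e)}{pq} \leq 1 - F(\rho,\sigma) \leq d_B^2(\rho,\sigma)$.

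To obtain the parallel bound, I would apply the single-shot inequality to the output states $\rho_i^\psi \equiv (\id_R \otimes \mathcal{N}_i^{\otimes n})(\psi_{RA})$ for each pure input, so that for every $\psi$ we have $p_e^\psi(1-p_e^\psi) \geq pq\,F(\rho_1^\psi,\rho_2^\psi) \geq pq\bigl(1 - d_B^2(\rho_1^\psi,\rho_2^\psi)\bigr) \geq pq\bigl(1 - d_B^2(\mathcal{N}_1^{\otimes n},\mathcal{N}_2^{\otimes n})\bigr)$, the last step using that the channel Bures distance is the supremum over inputs. Since $p_{e,\|} = \inf_\psi p_e^\psi$, every admissible error probability lies in $[0,\min\{p,q\}]\subseteq[0,\tfrac12]$ by \eqref{eq:basic-ineq-ch-disc}, and $x\mapsto x(1-x)$ is continuous and increasing there, one has $\inf_\psi\bigl[p_e^\psi(1-p_e^\psi)\bigr] = p_{e,\|}(1-p_{e,\|})$; taking the infimum over $\psi$ and invoking Theorem~\ref{thm:parallel-bures-dist-channels} yields \eqref{eq:parallel-error-prob-up-bnd-disc}. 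The adaptive case is identical in spirit: for each protocol $(\mathcal{A}_i)_{i=1}^{n-1}$ the single-shot inequality applied to the processed channels $\mathcal{P}_i^{(n)}$ gives $p_e^{(\mathcal{A})}(1-p_e^{(\mathcal{A})}) \geq pq\bigl(1 - d_B^2(\mathcal{P}_1^{(n)},\mathcal{P}_2^{(n)})\bigr) \geq pq\bigl(1 - \sup_{(\mathcal{A}_i)} d_B^2(\mathcal{P}_1^{(n)},\mathcal{P}_2^{(n)})\bigr)$, and taking the infimum over protocols (again via monotonicity of $x(1-x)$) together with Theorem~\ref{thm:seq-bures-distance-channels} produces \eqref{eq:adaptive-error-prob-up-bnd}.

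I expect the main obstacle to lie not in any one estimate but in the careful bookkeeping of the optimizations: one must confirm that the single-shot fidelity bound survives the infimum over inputs (parallel) and over adaptive protocols (adaptive), which rests precisely on the monotonicity of $x\mapsto x(1-x)$ on $[0,\tfrac12]$ and on the fact that each admissible error probability is at most $\min\{p,q\}\leq\tfrac12$. The genuinely new analytic content is the single-shot inequality $p_e(1-p_e)\geq pq\,F$; everything downstream is a consequence of the definitions of channel fidelity and Bures distance and of the two upper-bound theorems already in hand.
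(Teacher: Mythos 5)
Your proposal is correct and follows essentially the same route as the paper: both reduce the claim to the single-copy state-level inequality $1-\frac{p_{e}\left(1-p_{e}\right)}{pq}\leq d_{B}^{2}(\rho,\sigma)$ (your derivation via $\left\Vert p\rho-q\sigma\right\Vert _{1}^{2}\leq1-4pqF$ followed by $1-F\leq d_{B}^{2}$ is an equivalent rearrangement of the inequality $\left\Vert p\rho-q\sigma\right\Vert _{1}^{2}\leq1-4pq+4pq\,d_{B}^{2}(\rho,\sigma)$ that the paper imports from \cite{Cheng2025}), then lift it to the channel level by optimizing over inputs and over adaptive protocols, and finally invoke Theorems \ref{thm:parallel-bures-dist-channels} and \ref{thm:seq-bures-distance-channels}. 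The only cosmetic difference is that the paper carries the optimization through at the level of the squared diamond norm, whereas you pass it through $x\mapsto x\left(1-x\right)$ using its monotonicity on $\left[0,\tfrac{1}{2}\right]$; both are valid.
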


\begin{proof}
Let us rewrite~\eqref{eq:error-prob-parallel-ch-disc} and~\eqref{eq:error-prob-adaptive-ch-disc}
as follows:
\begin{align}
\left\Vert p\mathcal{N}_{1}^{\otimes n}-q\mathcal{N}_{2}^{\otimes n}\right\Vert _{\diamond} & =1-2p_{e,\|},\\
\sup_{\left(\mathcal{A}_{i}\right)_{i=1}^{n-1}}\left\Vert p\mathcal{P}_{1}^{(n)}-q\mathcal{P}_{2}^{(n)}\right\Vert _{\diamond} & =1-2p_{e,\adaptivedisc}.
\end{align}
Now recall the following inequality holding for states $\rho$ and
$\sigma$ (see, e.g.,~\cite[Eqs.~(F8)--(F13)]{Cheng2025}):
\begin{equation}
\left\Vert p\rho-q\sigma\right\Vert _{1}^{2}\leq1-4pq+4pqd_{B}^{2}(\rho,\sigma).\label{eq:TD-to-Bures-basic}
\end{equation}
By applying this to the output state of a parallel discrimination
protocol and optimizing over input states, this then generalizes to
\begin{equation}
\left\Vert p\mathcal{N}_{1}^{\otimes n}-q\mathcal{N}_{2}^{\otimes n}\right\Vert _{\diamond}^{2}\leq1-4pq+4pqd_{B}^{2}(\mathcal{N}_{1}^{\otimes n},\mathcal{N}_{2}^{\otimes n}),
\end{equation}
which can be rewritten as
\begin{equation}
1-\frac{p_{e,\|}\left(1-p_{e,\|}\right)}{pq}\leq d_{B}^{2}(\mathcal{N}_{1}^{\otimes n},\mathcal{N}_{2}^{\otimes n}),\label{eq:bures-parallel-err-prob-bnd}
\end{equation}
to which we apply Theorem~\ref{thm:parallel-bures-dist-channels}
and conclude the proof of~\eqref{eq:parallel-error-prob-up-bnd-disc}.

Similarly, for adaptive protocols, by applying~\eqref{eq:TD-to-Bures-basic}
to the final output state of an adaptive channel discrimination protocol
and optimizing over all such protocols,~\eqref{eq:TD-to-Bures-basic}
generalizes to
\begin{equation}
\left(\sup_{\left(\mathcal{A}_{i}\right)_{i=1}^{n-1}}\left\Vert p\mathcal{P}_{1}^{(n)}-q\mathcal{P}_{2}^{(n)}\right\Vert _{\diamond}\right)^{2}\leq1-4pq+4pq\sup_{\left(\mathcal{A}_{i}\right)_{i=1}^{n-1}}d_{B}^{2}\!\left(\mathcal{P}_{1}^{(n)},\mathcal{P}_{2}^{(n)}\right),
\end{equation}
which can be rewritten as
\begin{equation}
1-\frac{p_{e,\adaptivedisc}\left(1-p_{e,\adaptivedisc}\right)}{pq}\leq\sup_{\left(\mathcal{A}_{i}\right)_{i=1}^{n-1}}d_{B}^{2}\!\left(\mathcal{P}_{1}^{(n)},\mathcal{P}_{2}^{(n)}\right),\label{eq:bures-adaptive-err-prob-bnd}
\end{equation}
to which we apply Theorem~\ref{thm:seq-bures-distance-channels} and
conclude the proof of~\eqref{eq:adaptive-error-prob-up-bnd}.
\end{proof}
\begin{cor}
\label{cor:query-comp-lower-bnd-ch-disc}Let $p\in\left(0,1\right)$,
set $q=1-p$, let $\varepsilon\in\left(0,\min\!\left\{ p,q\right\} \right]$,
and let $\mathcal{N}_{1}$ and $\mathcal{N}_{2}$ be quantum channels with canonical isometric extensions $V_1$ and $V_2$.
Suppose that the conditions in Remark~\ref{rem:trivial-cases-ch-disc-query-comp}
do not hold. Then the following lower bounds hold for the query complexities
of channel discrimination:
\begin{align}
n_{\|}^{\star}(p,\mathcal{N}_{1},q,\mathcal{N}_{2},\varepsilon) & \geq\left\lceil \sup_{W\in\mathbb{B}}\frac{b_{W}-a_{W}+\sqrt{\left(b_{W}-a_{W}\right)^{2}+4b_{W}\left(1-\frac{\varepsilon\left(1-\varepsilon\right)}{pq}\right)}}{2b_{W}}\right\rceil \label{eq:lower-bnd-heis-1}\\
 & \geq\left\lceil \frac{1-\frac{\varepsilon\left(1-\varepsilon\right)}{pq}}{\inf_{W\in\mathbb{B}}\left\{ a_{W}: M_{W}=I\right\} }\right\rceil ,\label{eq:lower-bnd-sql-1}\\
n_{\adaptivedisc}^{\star}(p,\mathcal{N}_{1},q,\mathcal{N}_{2},\varepsilon) & \geq\left\lceil \sup_{\substack{W\in\mathbb{B}
}
}\frac{c_{W}+\sqrt{c_{W}^{2}+4\sqrt{a_{W}b_{W}}\left(1-\frac{\varepsilon\left(1-\varepsilon\right)}{pq}\right)}}{2\sqrt{a_{W}b_{W}}}\right\rceil \label{eq:lower-bnd-heis-2}\\
 & \geq\left\lceil \frac{1-\frac{\varepsilon\left(1-\varepsilon\right)}{pq}}{\inf_{W\in\mathbb{B}}\left\{ a_{W}: M_{W}=I\right\} }\right\rceil ,\label{eq:lower-bnd-sql-2}
\end{align}
where
\begin{align}
a_{W} & \equiv2\left\Vert I-\Re\!\left[M_{W}\right]\right\Vert ,\label{eq:def-a_W-shorthand}\\
b_{W} & \equiv\left\Vert I-M_{W}\right\Vert ^{2},\label{eq:def-b_W-shorthand}\\
c_{W} & \equiv\sqrt{a_{W}b_{W}}-a_{W},\\
M_{W} & \equiv V_{1}^{\dag}\left(W_{E}\otimes I_{B}\right)V_{2}.
\end{align}
\end{cor}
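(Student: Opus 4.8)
The plan is to feed the error-probability bounds of Corollary~\ref{cor:error-prob-bounds-ch-disc} into the definition of query complexity. Abbreviate $n \equiv n_{\|}^{*}(p,\mathcal{N}_{1},q,\mathcal{N}_{2},\varepsilon)$ and $\kappa \equiv 1 - \frac{\varepsilon(1-\varepsilon)}{pq}$. If $n = +\infty$ then every asserted bound holds vacuously, so assume $n < \infty$; the defining set in Definition~\ref{def:query-complexity-ch-disc-def} is then a nonempty subset of $\mathbb{N}$, its infimum is attained, and $p_{e,\|} \leq \varepsilon$ at this value of $n$. Because $\varepsilon \leq \min\{p,q\} \leq \tfrac{1}{2}$ and $p_{e,\|} \leq \min\{p,q\} \leq \tfrac{1}{2}$ by \eqref{eq:basic-ineq-ch-disc}, while $x \mapsto x(1-x)$ is increasing on $[0,\tfrac{1}{2}]$, the inequality $p_{e,\|} \leq \varepsilon$ upgrades to $\kappa \leq 1 - \frac{p_{e,\|}(1-p_{e,\|})}{pq}$. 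Combining this with \eqref{eq:parallel-error-prob-up-bnd-disc} and replacing the infimum over $W$ by its value at any fixed contraction $W$ gives, for every such $W$,
\begin{equation}
\kappa \leq n\left(a_{W} + (n-1)b_{W}\right).
\end{equation}
The hypothesis that the trivial cases of Remark~\ref{rem:trivial-cases-ch-disc-query-comp} fail forces $\varepsilon < \min\{p,q\}$, hence $\varepsilon(1-\varepsilon) < pq$ and $\kappa > 0$ strictly, which is what makes the quadratic below genuinely binding.

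Next I would read this inequality as a quadratic in $n$: rearranged, it is $b_{W}n^{2} + (a_{W}-b_{W})n - \kappa \geq 0$. For $b_{W} > 0$ the parabola has one negative and one positive root, since its constant term $-\kappa$ is negative, so the integer $n \geq 1$ must be at least the positive root,
\begin{equation}
n \geq \frac{(b_{W}-a_{W}) + \sqrt{(b_{W}-a_{W})^{2} + 4b_{W}\kappa}}{2b_{W}}.
\end{equation}
As this holds for every admissible $W$ and $n$ is a fixed integer, $n$ is at least the supremum over $W$ of the right-hand side; taking the ceiling yields \eqref{eq:lower-bnd-heis-1}. The adaptive bound \eqref{eq:lower-bnd-heis-2} is proved identically, now starting from \eqref{eq:adaptive-error-prob-up-bnd}: the quadratic becomes $\sqrt{a_{W}b_{W}}\,n^{2} + (a_{W}-\sqrt{a_{W}b_{W}})n - \kappa \geq 0$, and with $c_{W} = \sqrt{a_{W}b_{W}} - a_{W}$ its positive root is exactly the expression in \eqref{eq:lower-bnd-heis-2}.

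For the weaker bounds \eqref{eq:lower-bnd-sql-1} and \eqref{eq:lower-bnd-sql-2}, I would specialize to contractions satisfying $M_{W} = I$, for which $b_{W} = \left\Vert I - M_{W}\right\Vert^{2} = 0$ and both quadratics collapse to the linear inequality $\kappa \leq n\,a_{W}$, giving $n \geq \kappa/a_{W}$. Minimizing $a_{W}$ over this constrained set and taking the ceiling produces the SQL bounds, which coincide for the parallel and adaptive settings precisely because $b_{W} = 0$ annihilates the quadratic term in both. To close the chains $\mathrm{Heisenberg} \geq \mathrm{SQL}$ asserted in the corollary, I would recognize the SQL value as the $b_{W} \to 0$ limit of the Heisenberg root: the first-order expansion $\sqrt{(b_{W}-a_{W})^{2}+4b_{W}\kappa} = a_{W} + \frac{2\kappa-a_{W}}{a_{W}}b_{W} + o(b_{W})$ (and its adaptive analogue) shows that both positive roots tend to $\kappa/a_{W}$ as $b_{W}\to 0$, so the suprema defining the Heisenberg bounds are at least the corresponding SQL values.

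The main subtlety is this last comparison: on the constraint set $\{W : M_{W} = I\}$ the Heisenberg expressions are genuinely indeterminate of the form $0/0$, so their dominance over the SQL bounds cannot be read off by direct substitution and must instead be obtained through the limiting expansion above (or, equivalently, by approximating a constrained $W$ by nearby contractions with $b_{W} > 0$). One must also track the coupling that $M_{W} = I$ forces $a_{W} = 0$ as well, so that the constrained infimum is $+\infty$ whenever no contraction achieves $M_{W} = I$, in which case the SQL bound correctly degenerates to the trivially valid $n \geq 0$. Everything else is routine bookkeeping: securing $\kappa > 0$ from the failure of the trivial cases, and confirming that the positive root of each parabola is the operative one because $n \geq 1 > 0$.
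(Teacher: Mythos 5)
Your proof is correct and follows essentially the same route as the paper: it combines the error-probability bounds of Corollary \ref{cor:error-prob-bounds-ch-disc} with the monotonicity of $x\mapsto x\left(1-x\right)$ on $\left[0,\frac{1}{2}\right]$, solves the resulting quadratic in $n$ for each fixed contraction $W$ (which the paper packages as Lemma \ref{lem:quadratic-ineq-rewrite}), takes a supremum over $W$, and specializes to $M_{W}=I$ for the linear bounds. The only divergence is your closing paragraph on the chain between the two displayed bounds, which the paper does not attempt to establish (it proves each display as an independent lower bound on the query complexity); your observation that $M_{W}=I$ forces $a_{W}=0$ in addition to $b_{W}=0$ is a genuine degeneracy of the statement that the paper's own proof also leaves unaddressed.
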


\begin{proof}
This is a direct consequence of Definition~\ref{def:query-complexity-ch-disc-def},
Corollary~\ref{cor:error-prob-bounds-ch-disc} and Lemma~\ref{lem:quadratic-ineq-rewrite}
below, while noting that the function $x\mapsto x\left(1-x\right)$
is monotone increasing on the interval $\left[0,\frac{1}{2}\right]$.
In more detail, we prove~\eqref{eq:lower-bnd-heis-1} and~\eqref{eq:lower-bnd-sql-1}
and omit the proofs of~\eqref{eq:lower-bnd-heis-2} and~\eqref{eq:lower-bnd-sql-2},
as they follow similarly.

Applying Definition~\ref{def:query-complexity-ch-disc-def}, let $n\in\mathbb{N}$
be such that $p_{e,\|}\leq\varepsilon$, where we have adopted the
shorthand $p_{e,\|}$, as defined in~\eqref{eq:parallel-err-prob-shorthand}.
Let $W\in\mathbb{B}$ be a contraction (i.e., $\left\Vert W\right\Vert \leq1$).
Then from the monotonicity of $x\mapsto x\left(1-x\right)$ on $\left[0,\frac{1}{2}\right]$
and Corollary~\ref{cor:error-prob-bounds-ch-disc}, we conclude that
\begin{align}
1-\frac{\varepsilon\left(1-\varepsilon\right)}{pq} & \leq1-\frac{p_{e,\|}\left(1-p_{e,\|}\right)}{pq},\\
 & \leq n\left(a_{W}+\left(n-1\right)b_{W}\right),\label{eq:up-bnd-err-prob-for-q-comp}
\end{align}
where $a_{W}$ and $b_{W}$ are defined in~\eqref{eq:def-a_W-shorthand}
and~\eqref{eq:def-b_W-shorthand}, respectively. Applying Lemma~\ref{lem:quadratic-ineq-rewrite}
below, we conclude that
\begin{equation}
n\geq\left\lceil \frac{b_{W}-a_{W}+\sqrt{\left(b_{W}-a_{W}\right)^{2}+4b_{W}\left(1-\frac{\varepsilon\left(1-\varepsilon\right)}{pq}\right)}}{2b_{W}}\right\rceil .
\end{equation}
Since this lower bound holds for every contraction $W\in\mathbb{B}$,
we can take a supremum over all such contractions to conclude that
\begin{equation}
n\geq\sup_{\substack{W\in\mathbb{B}
}
}\left\lceil \frac{b_{W}-a_{W}+\sqrt{\left(b_{W}-a_{W}\right)^{2}+4b_{W}\left(1-\frac{\varepsilon\left(1-\varepsilon\right)}{pq}\right)}}{2b_{W}}\right\rceil .
\end{equation}
We then conclude~\eqref{eq:lower-bnd-heis-1} because this lower bound
applies to all $n\in\mathbb{N}$ such that $p_{e,\|}\leq\varepsilon$.

In the case that there exists a contraction $W$ such that $M_{W}=I$,
it follows that $b_{W}=0$, and~\eqref{eq:up-bnd-err-prob-for-q-comp}
reduces to $1-\frac{\varepsilon\left(1-\varepsilon\right)}{pq}\leq na_{W}$.
Now applying Lemma~\ref{lem:quadratic-ineq-rewrite} below, we conclude
that
\begin{equation}
n\geq\left\lceil \frac{1-\frac{\varepsilon\left(1-\varepsilon\right)}{pq}}{a_{W}}\right\rceil .
\end{equation}
Since this lower bound holds for every contraction $W\in\mathbb{B}$
satisfying $M_{W}=I$, we can take a supremum over all such contractions
(which is equivalent to an infimum over the denominator) to conclude
that
\begin{equation}
n\geq\left\lceil \frac{1-\frac{\varepsilon\left(1-\varepsilon\right)}{pq}}{\inf_{W\in\mathbb{B}}\left\{ a_{W}: M_{W}=I\right\} }\right\rceil .
\end{equation}
We then conclude~\eqref{eq:lower-bnd-sql-1} because this lower bound
applies to all $n\in\mathbb{N}$ such that $p_{e,\|}\leq\varepsilon$.
\end{proof}
\begin{rem}
\label{rem:on-triviality-lower-bounds-ch-disc}The lower bounds in
\eqref{eq:lower-bnd-sql-1} and~\eqref{eq:lower-bnd-sql-2} are only
applicable in the case that there exists a contraction $W$ satisfying
$M_{W}=I$. If this is not the case, then
\begin{equation}
\inf_{W\in\mathbb{B}}\left\{ a_{W}: M_{W}=I\right\} =+\infty,
\end{equation}
implying that these lower bounds trivially evaluate to $0$. Otherwise,
the lower bounds in~\eqref{eq:lower-bnd-sql-1} and~\eqref{eq:lower-bnd-sql-2}
do not trivially evaluate to zero, and the denominator can be efficiently
computed by semi-definite optimization as follows:
\begin{multline}
\inf_{W\in\mathbb{B}}\left\{ a_{W}: M_{W}=I\right\} =\\
\inf_{\lambda\geq0,W\in\mathbb{L}}\left\{ \lambda:\lambda I\geq2\left(I-\Re\!\left[M_{W}\right]\right)\geq-\lambda I,\ \begin{bmatrix}I & W\\
W^{\dag} & I
\end{bmatrix}\geq0,\ M_{W}=I\right\} ,
\end{multline}
where $M_{W}\equiv V_{1}^{\dag}\left(W_{E}\otimes I_{B}\right)V_{2}$
and we appealed to~\eqref{eq:spectral-norm-herm-op} for $2\left\Vert I-\Re\!\left[M\right]\right\Vert $
and~\eqref{eq:contraction-as-SDP-constraint} for $\left\Vert W\right\Vert \leq1$.
\end{rem}

\begin{lem}
\label{lem:quadratic-ineq-rewrite}Suppose that the following inequality
holds: 
\begin{equation}
n\left(a+\left(n-1\right)b\right)\geq c,\label{eq:ineq-to-reduce-quadratic-formula}
\end{equation}
where $a\geq b\geq0$, $c\geq0$, and $n\in\mathbb{N}$. Then the
following lower bound holds on $n$:
\begin{equation}
n\geq\begin{cases}
\left\lceil \frac{b-a+\sqrt{\left(b-a\right)^{2}+4bc}}{2b}\right\rceil  & :b>0\\
\left\lceil \frac{c}{a}\right\rceil  & :b=0
\end{cases}.\label{eq:lower-bound-n-quadratic-formula}
\end{equation}

Alternatively, suppose that the following inequality holds:
\begin{equation}
n\left(a+\left(n-1\right)\sqrt{ab}\right)\geq c.
\end{equation}
Then the following lower bound holds on $n$:
\begin{equation}
n\geq\begin{cases}
\left\lceil \frac{\sqrt{ab}-a+\sqrt{\left(\sqrt{ab}-a\right)^{2}+4c\sqrt{ab}}}{2\sqrt{ab}}\right\rceil  & :b>0\\
\left\lceil \frac{c}{a}\right\rceil  & :b=0
\end{cases}.\label{eq:lower-bound-n-quadratic-formula-1}
\end{equation}
\end{lem}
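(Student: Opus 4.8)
The plan is to recognize both hypotheses as quadratic inequalities in the integer variable $n$ and to invert them via the quadratic formula, using a sign analysis of the roots to determine which one furnishes the desired lower bound. The underlying reason a lower bound emerges at all is that the left-hand sides are increasing in $n$ on $\{n\geq 1\}$ (their derivative at $n=1$ is $a+b\geq 0$), so the inequality $\geq c$ cuts out a half-line $\{n\geq n_\star\}$ and thereby forces $n$ to be large.

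First I would treat the case $b>0$ of the first inequality. I expand $n\left(a+(n-1)b\right)=bn^{2}+(a-b)n$, so that hypothesis \eqref{eq:ineq-to-reduce-quadratic-formula} is equivalent to $q(n)\geq 0$, where $q(x)\coloneqq bx^{2}+(a-b)x-c$ is an upward-opening parabola (since $b>0$). By the quadratic formula its roots are $x_{\pm}=\frac{b-a\pm\sqrt{(b-a)^{2}+4bc}}{2b}$, using $(a-b)^{2}=(b-a)^{2}$. The key observation is that the product of the roots equals $-c/b\leq 0$ (because $c\geq 0$), so the roots have opposite signs: $x_{-}\leq 0\leq x_{+}$, with $x_{+}$ matching the expression in \eqref{eq:lower-bound-n-quadratic-formula}. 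Since $q(n)\geq 0$ and $q$ opens upward, $n$ cannot lie in the open interval $(x_{-},x_{+})$; but $n\geq 1>0\geq x_{-}$ rules out $n\leq x_{-}$, forcing $n\geq x_{+}$. As $n\in\mathbb{N}$, this upgrades to $n\geq\lceil x_{+}\rceil$, which is the claimed bound. The degenerate case $b=0$ is immediate: the hypothesis collapses to $an\geq c$, so dividing by $a>0$ gives $n\geq c/a$ and hence $n\geq\lceil c/a\rceil$.

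The second inequality is handled by the identical argument after noting that $n\left(a+(n-1)\sqrt{ab}\right)=\sqrt{ab}\,n^{2}+(a-\sqrt{ab})n$; that is, it is the same quadratic with leading coefficient $\sqrt{ab}$ in place of $b$, and the assumptions $a\geq b\geq 0$ ensure $\sqrt{ab}>0$ exactly when $b>0$. Reusing the root-sign analysis with $\sqrt{ab}$ substituted for $b$ selects the nonnegative root $\frac{\sqrt{ab}-a+\sqrt{(\sqrt{ab}-a)^{2}+4c\sqrt{ab}}}{2\sqrt{ab}}$ appearing in \eqref{eq:lower-bound-n-quadratic-formula-1}, while the $b=0$ branch is verbatim the same as before.

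I do not anticipate a genuine obstacle, since the result is a direct inversion of a quadratic; the only point demanding care is the sign bookkeeping that singles out $x_{+}$ (rather than $x_{-}$) as the operative root. This hinges on $c\geq 0$ placing $x_{-}$ at or below zero while $n\geq 1$, together with the upward orientation of the parabola, and I would state these two facts explicitly to make the selection of the root airtight.
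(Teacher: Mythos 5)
Your proof is correct and follows essentially the same route as the paper: rewrite the hypothesis as an upward-opening quadratic inequality in $n$, invert it with the quadratic formula, and handle the degenerate $b=0$ case and the second inequality by the substitution $b\to\sqrt{ab}$. Your explicit sign analysis (the product of the roots equals $-c/b\leq0$, so the smaller root is nonpositive and $n\geq1$ rules out the left branch of $\{q\geq0\}$) makes rigorous a step the paper's proof leaves implicit, which is a small but genuine improvement in care.
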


\begin{proof}
Consider that the inequality can be rewritten as
\begin{equation}
bn^{2}+\left(a-b\right)n-c\geq0
\end{equation}
Under the assumption that $b>0$, the parabola opens upwards, so that
the inequality holds for $n\geq n_{+}$, where $n_{+}$ is the larger
root of the quadratic equation $bn^{2}+\left(a-b\right)n-c=0$. By
the quadratic formula, the two roots are given by
\begin{equation}
n_{\pm}=\frac{-\left(a-b\right)\pm\sqrt{\left(a-b\right)^{2}+4bc}}{2b},
\end{equation}
and since $b>0$, the larger root is given by $n_{+}$. Thus, for
$n\in\mathbb{N}$, the first lower bound in~\eqref{eq:lower-bound-n-quadratic-formula}
holds. In the case that $b=0$, the inequality in~\eqref{eq:ineq-to-reduce-quadratic-formula}
simplifies to $an\geq c$, from which we conclude the second expression
in~\eqref{eq:lower-bound-n-quadratic-formula}. The proof of~\eqref{eq:lower-bound-n-quadratic-formula-1}
follows from the substitution $b\to\sqrt{ab}$.
\end{proof}

\subsubsection{Finding lower bounds on query complexities of channel discrimination
by binary search}

\label{subsec:lower-bounds-binary-search}In practice, the lower bounds
in~\eqref{eq:lower-bnd-heis-1} and~\eqref{eq:lower-bnd-heis-2} of
Corollary~\ref{cor:query-comp-lower-bnd-ch-disc} may be difficult
to compute, given that they are not obviously representable as convex
optimization problems. However, it is possible to use the bounds from
Corollary~\ref{cor:error-prob-bounds-ch-disc} in order to devise
efficient algorithms for computing lower bounds on the query complexities
of channel discrimination.

Let us begin with the parallel setting. For fixed $p\in\left(0,1\right)$,
$q=1-p$, and $\varepsilon\in\left[0,\min\!\left\{ p,q\right\} \right]$,
if there exists an $n\in\mathbb{N}$ such that $p_{e,\|}\leq\varepsilon$,
then, according to~\eqref{eq:up-bnd-err-prob-for-q-comp}, the following
inequality holds:
\begin{equation}
1-\frac{\varepsilon\left(1-\varepsilon\right)}{pq}\leq ng(n),\label{eq:key-ineq-alg-query-comp-parallel}
\end{equation}
where
\begin{equation}
g(n)\coloneqq\inf_{\substack{W\in\mathbb{B}
}
}\left\{ a_{W}+\left(n-1\right)b_{W}\right\} .
\end{equation}
The function $g(n)$ can be calculated efficiently by means of a semi-definite
program (see Proposition~\ref{prop:SDP-Bures-channel-parallel}).
Furthermore, $g(n)$ is a monotone non-decreasing function on $\mathbb{N}$
because, for every fixed contraction $W$, the following inequality
holds:
\begin{equation}
a_{W}+\left(n-1\right)b_{W}\leq a_{W}+nb_{W},\label{eq:ineq-for-monotone-gn}
\end{equation}
given that $a_{W},b_{W}\geq0$ for all $W$. Applying an infimum over
every contraction $W$ to~\eqref{eq:ineq-for-monotone-gn} then implies
that $g(n)$ is monotone non-decreasing on $\mathbb{N}$, as claimed.
We furthermore conclude that the function $n\mapsto ng(n)$ is monotone
non-decreasing on $\mathbb{N}$.

Given this observation, we can find the smallest integer $n$ satisfying
\eqref{eq:key-ineq-alg-query-comp-parallel} (i.e., the query complexity
$n_{\|}^{\star}(p,\mathcal{N}_{1},q,\mathcal{N}_{2},\varepsilon)$) by
binary search. Before doing so, it is helpful to have an upper bound
on the query complexity. Toward this end, let us recall~\cite[Theorem~8]{Nuradha2025},
which implies the following upper bound on the query complexity $n_{\|}^{\star}(p,\mathcal{N}_{1},q,\mathcal{N}_{2},\varepsilon)$
by picking $s=\frac{1}{2}$ therein:
\begin{align}
n_{\|}^{\star}(p,\mathcal{N}_{1},q,\mathcal{N}_{2},\varepsilon) & \leq\left\lceil \frac{\ln\!\left(\frac{\sqrt{pq}}{\varepsilon}\right)}{Q_{\frac{1}{2}}(\mathcal{N}_{1}\|\mathcal{N}_{2})}\right\rceil ,\\
 & \leq n_{\max}\coloneqq\left\lceil \frac{\ln\!\left(\frac{\sqrt{pq}}{\varepsilon}\right)}{-\ln\sqrt{F}(\mathcal{N}_{1},\mathcal{N}_{2})}\right\rceil 
\end{align}
where
\begin{align}
Q_{\frac{1}{2}}(\mathcal{N}_{1}\|\mathcal{N}_{2}) & \coloneqq-\inf_{\psi_{RA}\in\mathbb{P}}\ln\Tr\!\left[\left[\left(\id\otimes\mathcal{N}_{1}\right)\left(\psi_{RA}\right)\right]^{\frac{1}{2}}\left[\left(\id\otimes\mathcal{N}_{2}\right)\left(\psi_{RA}\right)\right]^{\frac{1}{2}}\right],\\
F(\mathcal{N}_{1},\mathcal{N}_{2}) & \coloneqq\inf_{\psi_{RA}\in\mathbb{P}}F\!\left(\left(\id\otimes\mathcal{N}_{1}\right)\left(\psi_{RA}\right),\left(\id\otimes\mathcal{N}_{2}\right)\left(\psi_{RA}\right)\right).
\end{align}
The second inequality follows from reasoning similar to that used
to arrive at~\cite[Corollary~8]{Cheng2025} (i.e., known inequalities
between Holevo and Uhlmann fidelities, as recalled in~\cite[Eq.~(A7)]{Cheng2025}).
Furthermore, it is known how to compute the root fidelity of channels,
$\sqrt{F}(\mathcal{N}_{1},\mathcal{N}_{2})$, efficiently by means
of semi-definite optimization (see~\cite[Proposition~55]{Katariya2021}
or Proposition~\ref{prop:root-fidelity-alt-SDP} below). As such,
the upper bound $n_{\max}$ can be computed efficiently.

With this background in place, we now detail the following algorithm:
\begin{lyxalgorithm}
\label{alg:lower-bnd-query-comp-parallel}The binary search algorithm
for finding a lower bound on $n_{\|}^{\star}(p,\mathcal{N}_{1},q,\mathcal{N}_{2},\varepsilon)$
proceeds according to the following steps:
\begin{enumerate}
\item Initialize $n_{\ell}\leftarrow1$, $n_{h}\leftarrow n_{\max}$.
\item While $n_{\ell}<n_{h}$,
\begin{enumerate}
\item Set $n_{m}\leftarrow\left\lfloor \frac{n_{\ell}+n_{h}}{2}\right\rfloor $.
\item Solve $g(n_{m})=\inf_{W\in\mathbb{B}}\left\{ a_{W}+\left(n_{m}-1\right)b_{W}\right\} $
by semi-definite optimization (see Proposition~\ref{prop:SDP-Bures-channel-parallel}).
\item If $n_{m}g(n_{m})\geq1-\frac{\varepsilon\left(1-\varepsilon\right)}{pq}$,
set $n_{h}\leftarrow n_{m}$. Otherwise, set $n_{\ell}\leftarrow n_{m}+1$.
\end{enumerate}
\item Return $n_{\ell}$.
\end{enumerate}
\end{lyxalgorithm}

The total number of iterations for Algorithm~\ref{alg:lower-bnd-query-comp-parallel}
is equal to $O(\log n_{\max})$.

A similar procedure can be conducted for finding a lower bound on
query complexity in the adaptive setting. Although the upper bound
in~\eqref{eq:adaptive-error-prob-up-bnd} does not correspond to an
SDP due to the presence of the square root, in practice, it can still
be calculated efficiently by using the procedure outlined in Remark
\ref{rem:adaptive-ch-disc-opt}, which involves a combination of semi-definite
optimization and one-dimensional grid search. For completeness, we
provide the following algorithm for efficiently computing a lower
bound on the query complexity of adaptive channel discrimination:
\begin{lyxalgorithm}
\label{alg:lower-bnd-query-comp-adaptive}The binary search algorithm
for finding a lower bound on $n_{\adaptivedisc}^{\star}(p,\mathcal{N}_{1},q,\mathcal{N}_{2},\varepsilon)$
proceeds according to the following steps:
\begin{enumerate}
\item Initialize $n_{\ell}\leftarrow1$, $n_{h}\leftarrow n_{\max}$.
\item While $n_{\ell}<n_{h}$,
\begin{enumerate}
\item Set $n_{m}\leftarrow\left\lfloor \frac{n_{\ell}+n_{h}}{2}\right\rfloor $.
\item Solve $g_{\adaptivedisc}(n_{m})=\inf_{W\in\mathbb{B}}\left\{ a_{W}+\left(n-1\right)\sqrt{a_{W}b_{W}}\right\} $
by the procedure outlined in Remark~\ref{rem:adaptive-ch-disc-opt}.
\item If $n_{m}g_{\adaptivedisc}(n_{m})\geq1-\frac{\varepsilon\left(1-\varepsilon\right)}{pq}$,
set $n_{h}\leftarrow n_{m}$. Otherwise, set $n_{\ell}\leftarrow n_{m}+1$.
\end{enumerate}
\item Return $n_{\ell}$.
\end{enumerate}
\end{lyxalgorithm}

Similar to Algorithm~\ref{alg:lower-bnd-query-comp-parallel}, the
total number of iterations of Algorithm~\ref{alg:lower-bnd-query-comp-adaptive}
is equal to $O(\log n_{\max})$.

Let us finally note that both of these algorithms are applicable and
useful to determine lower bounds on the minimum number of queries
needed to achieve perfect channel discrimination (i.e., $\varepsilon=0$).
Indeed, Ref.~\cite{Duan2009} established conditions under which
two channels can be discriminated perfectly. By setting $\varepsilon=0$
in Algorithms~\ref{alg:lower-bnd-query-comp-parallel} and~\ref{alg:lower-bnd-query-comp-adaptive},
we can thus determine lower bounds on the query complexities of perfect
channel discrimination in both the parallel and adaptive access models,
respectively.

\subsection{Lower bounds on query complexities of channel estimation}

\label{subsec:Lower-bounds-query-compl-ch-estimation}In this section,
we leverage the lower bounds from~\eqref{eq:parallel-est-minimax-lower-bnd}--\eqref{eq:adaptive-est-minimax-lower-bnd},
along with our lower bounds on the error probabilities and query complexities
of channel discrimination from Section~\ref{subsec:Lower-bounds-query-comp-ch-disc},
in order to establish lower bounds on the minimax error probabilities
and query complexities of channel estimation. We then establish asymptotic
lower bounds on these quantities in terms of SLD Fisher information,
arriving at results that generalize those from~\cite[Eq.~(53)]{Meyer2025} (see \eqref{lower-bnd-heis-ch-est-1}--\eqref{eq:lower-bnd-sql-ch-est-2}). These asymptotic lower bounds bear similarity to quantum Cramer--Rao bounds for channel estimation, which have appeared in the context of quantum estimation theory (see, e.g., \cite{DemkowiczDobrzanski2012,Kolodynski2013,Demkowicz2014,Zhou2021,Kurdzialek2023}).

\begin{cor}
\label{cor:minimax-error-ch-est}For a parameterized family $\mathcal{F}\equiv\left(\mathcal{N}_{\theta}\right)_{\theta\in \Theta}$
of channels, the following bounds hold for the minimax error probability
of channel estimation:
\begin{align}
1-4p_{e,\|}\left(1-p_{e,\|}\right) & \leq n\inf_{\substack{
\theta,\theta'\in\Theta,W\in\mathbb{B}\\
\left|\theta-\theta'\right|>2\delta
}
}\left\{ a_{W}^{\theta,\theta'}+\left(n-1\right)b_{W}^{\theta,\theta'}\right\} ,\\
1-4p_{e,\adaptivedisc}\left(1-p_{e,\adaptivedisc}\right) & \leq n\inf_{\substack{\theta,\theta'\in\Theta,W\in\mathbb{B}\\
\left|\theta-\theta'\right|>2\delta
}
}\left\{ a_{W}^{\theta,\theta'}+\left(n-1\right)\sqrt{a_{W}^{\theta,\theta'}b_{W}^{\theta,\theta'}}\right\} ,
\end{align}
where
\begin{align}
p_{e,\|} & \equiv p_{e,\|}(\delta,\mathcal{F},n),\\
p_{e,\adaptivedisc} & \equiv p_{e,\adaptivedisc}(\delta,\mathcal{F},n),\\
a_{W}^{\theta,\theta'} & \equiv2\left\Vert I-\Re\!\left[M_{W}^{\theta,\theta'}\right]\right\Vert ,\\
b_{W}^{\theta,\theta'} & \equiv\left\Vert I-M_{W}^{\theta,\theta'}\right\Vert ^{2},\\
M_{W}^{\theta,\theta'} & \equiv V_{\theta}^{\dag}\left(W_{E}\otimes I_{B}\right)V_{\theta'},
\end{align}
$V_{\theta}\colon\mathbb{C}^{d_{A}}\mapsto\mathbb{C}^{d_{E}}\otimes\mathbb{C}^{d_{B}}$
is a canonical isometric extension of $\mathcal{N}_{\theta}$ of the
form in~\eqref{eq:canonical-isometric-ext-smooth-family}, and $p_{e,\|}$ and $p_{e,\adaptivedisc}$ are defined in Definitions
\ref{def:minimax-parallel-est} and~\ref{def:minimax-adaptive-est},
respectively.
\end{cor}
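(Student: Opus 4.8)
The plan is to derive this corollary directly from the channel-discrimination bounds of Corollary~\ref{cor:error-prob-bounds-ch-disc} together with the estimation-to-discrimination reduction recorded in \eqref{eq:parallel-est-minimax-lower-bnd}--\eqref{eq:adaptive-est-minimax-lower-bnd}; no new estimate on isometries is needed, so the work is purely a matter of specialization and chaining.

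First I would specialize Corollary~\ref{cor:error-prob-bounds-ch-disc} to the symmetric prior $p=q=\sfrac12$, so that $pq=\sfrac14$ and the two left-hand sides there read $1-4p_{e,\|}(1-p_{e,\|})$ and $1-4p_{e,\adaptivedisc}(1-p_{e,\adaptivedisc})$. Applying it to the pair $(\mathcal{N}_{\theta},\mathcal{N}_{\theta'})$, whose canonical isometric extensions are $V_{\theta}$ and $V_{\theta'}$ and for which $M_{W}=M_{W}^{\theta,\theta'}=V_{\theta}^{\dag}(W_{E}\otimes I_{B})V_{\theta'}$, gives, for every such pair,
\begin{align}
1-4\,r_{\|}^{\theta,\theta'}(1-r_{\|}^{\theta,\theta'}) & \le n\inf_{W:\left\Vert W\right\Vert \le1}\left\{ a_{W}^{\theta,\theta'}+(n-1)b_{W}^{\theta,\theta'}\right\} ,\\
1-4\,r_{\adaptivedisc}^{\theta,\theta'}(1-r_{\adaptivedisc}^{\theta,\theta'}) & \le n\inf_{W:\left\Vert W\right\Vert \le1}\left\{ a_{W}^{\theta,\theta'}+(n-1)\sqrt{a_{W}^{\theta,\theta'}b_{W}^{\theta,\theta'}}\right\} ,
\end{align}
where $r_{\|}^{\theta,\theta'}\equiv p_{e,\|}(\sfrac12,\mathcal{N}_{\theta},\sfrac12,\mathcal{N}_{\theta'},n)$ and $r_{\adaptivedisc}^{\theta,\theta'}\equiv p_{e,\adaptivedisc}(\sfrac12,\mathcal{N}_{\theta},\sfrac12,\mathcal{N}_{\theta'},n)$ denote the discrimination errors of the two channels, and $a_{W}^{\theta,\theta'},b_{W}^{\theta,\theta'}$ are the corollary's $a_{W},b_{W}$ built from $M_{W}^{\theta,\theta'}$.

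Next I would invoke the reduction. Writing $P_{\|}\equiv p_{e,\|}(\delta,\mathcal{F},n)$, inequality \eqref{eq:parallel-est-minimax-lower-bnd} yields $P_{\|}\ge r_{\|}^{\theta,\theta'}$ for every pair with $|\theta-\theta'|>2\delta$. Because each discrimination error obeys $r_{\|}^{\theta,\theta'}\le\sfrac12$ by \eqref{eq:basic-ineq-ch-disc} and the map $x\mapsto x(1-x)$ is monotone non-decreasing on $[0,\sfrac12]$, the bound transfers: $P_{\|}(1-P_{\|})\ge r_{\|}^{\theta,\theta'}(1-r_{\|}^{\theta,\theta'})$, whence $1-4P_{\|}(1-P_{\|})\le1-4\,r_{\|}^{\theta,\theta'}(1-r_{\|}^{\theta,\theta'})$. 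Chaining with the per-pair discrimination bound above and then taking the infimum over all admissible pairs $(\theta,\theta')$ on the right—legitimate because the left-hand side does not depend on $\theta,\theta'$ while the inequality holds for each pair—produces exactly the claimed parallel inequality. The adaptive inequality follows identically, replacing \eqref{eq:parallel-est-minimax-lower-bnd} by \eqref{eq:adaptive-est-minimax-lower-bnd} and the parallel line of Corollary~\ref{cor:error-prob-bounds-ch-disc} by its adaptive counterpart.

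The one step meriting care is the monotonicity transfer, which is the same device already used in the proof of Corollary~\ref{cor:query-comp-lower-bnd-ch-disc}: it hinges on the discrimination error landing in the increasing branch $[0,\sfrac12]$ of $x\mapsto x(1-x)$, so that $P_{\|}\ge r_{\|}^{\theta,\theta'}$ indeed forces $P_{\|}(1-P_{\|})\ge r_{\|}^{\theta,\theta'}(1-r_{\|}^{\theta,\theta'})$ in the operationally relevant low-error regime $P_{\|}\le\sfrac12$ (equivalently, whenever $P_{\|}\le1-r_{\|}^{\theta,\theta'}$). Everything else is bookkeeping: matching the $p=q=\sfrac12$ specialization of the $a_{W},b_{W}$ notation to the $\theta,\theta'$-decorated quantities, and confirming that the joint infimum over $(\theta,\theta',W)$ on the right of the corollary coincides with first taking $\inf_{W}$ per pair and then $\inf_{\theta,\theta'}$.
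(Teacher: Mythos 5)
Your proposal is correct and follows exactly the paper's route: specialize Corollary~\ref{cor:error-prob-bounds-ch-disc} to $p=q=\sfrac12$ for each pair $(\mathcal{N}_{\theta},\mathcal{N}_{\theta'})$ with $\left|\theta-\theta'\right|>2\delta$, chain through \eqref{eq:parallel-est-minimax-lower-bnd}--\eqref{eq:adaptive-est-minimax-lower-bnd} using the monotonicity of $x\mapsto x\left(1-x\right)$ on $\left[0,\sfrac12\right]$, and then take the infimum over admissible pairs. Your explicit caveat that the monotonicity transfer requires the minimax estimation error to lie in the increasing branch (it can exceed $\sfrac12$, unlike the discrimination error, which is capped at $\min\{p,q\}$ by \eqref{eq:basic-ineq-ch-disc}) is in fact more careful than the paper's one-line proof, which leaves that restriction implicit.
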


\begin{proof}
These inequalities follow by combining~\eqref{eq:parallel-est-minimax-lower-bnd}
and~\eqref{eq:adaptive-est-minimax-lower-bnd} with Corollary~\ref{cor:error-prob-bounds-ch-disc},
while noting that the function $x\mapsto x\left(1-x\right)$ is monotone
increasing on the interval $\left[0,\frac{1}{2}\right]$.
\end{proof}
\begin{cor}
\label{cor:query-comp-ch-est}For a parameterized family $\mathcal{F}\equiv\left(\mathcal{N}_{\theta}\right)_{\theta \in \Theta}$
of channels and for all $\varepsilon\in\left(0,\frac{1}{2}\right]$,
the following lower bounds hold for the query complexities of channel
estimation:
\begin{align}
n_{\|}^{\star}(\varepsilon,\delta,\mathcal{F}) & \geq\left\lceil \sup_{\substack{\theta,\theta'\in\Theta, W\in\mathbb{B}, \\ \left|\theta-\theta'\right|>2\delta
}
}\frac{b_{W}^{\theta,\theta'}-a_{W}^{\theta,\theta'}+\sqrt{\left(b_{W}^{\theta,\theta'}-a_{W}^{\theta,\theta'}\right)^{2}+4b_{W}^{\theta,\theta'}\left(1-4\varepsilon\left(1-\varepsilon\right)\right)}}{2b_{W}^{\theta,\theta'}}\right\rceil \\
 & \geq\left\lceil \frac{1-4\varepsilon\left(1-\varepsilon\right)}{\inf_{W\in\mathbb{B}}\left\{ a_{W}^{\theta,\theta'}: M_{W}^{\theta,\theta'}=I,\ \left|\theta-\theta'\right|>2\delta,\ \theta,\theta'\in\Theta\right\} }\right\rceil ,\label{eq:lower-bnd-sql-est-1}\\
n_{\adaptivedisc}^{\star}(\varepsilon,\delta,\mathcal{F}) & \geq\left\lceil \sup_{\substack{\theta,\theta'\in\Theta, W\in\mathbb{B}, \\ \left|\theta-\theta'\right|>2\delta
}
}\frac{c_{W}^{\theta,\theta'}+\sqrt{\left(c_{W}^{\theta,\theta'}\right)^{2}+4\sqrt{a_{W}^{\theta,\theta'}b_{W}^{\theta,\theta'}}\left(1-4\varepsilon\left(1-\varepsilon\right)\right)}}{2\sqrt{a_{W}^{\theta,\theta'}b_{W}^{\theta,\theta'}}}\right\rceil \\
 & \geq\left\lceil \frac{1-4\varepsilon\left(1-\varepsilon\right)}{\inf_{W\in\mathbb{B}}\left\{ a_{W}^{\theta,\theta'}: M_{W}^{\theta,\theta'}=I,\ \left|\theta-\theta'\right|>2\delta,\ \theta,\theta'\in\Theta\right\} }\right\rceil ,\label{eq:lower-bnd-sql-est-2}
\end{align}
where
\begin{align}
a_{W}^{\theta,\theta'} & \equiv2\left\Vert I-\Re\!\left[M_{W}^{\theta,\theta'}\right]\right\Vert ,\\
b_{W}^{\theta,\theta'} & \equiv\left\Vert I-M_{W}^{\theta,\theta'}\right\Vert ^{2},\\
c_{W}^{\theta,\theta'} & \equiv\sqrt{a_{W}^{\theta,\theta'}b_{W}^{\theta,\theta'}}-a_{W}^{\theta,\theta'},\\
M_{W}^{\theta,\theta'} & \equiv V_{\theta}^{\dag}\left(W_{E}\otimes I_{B}\right)V_{\theta'},
\end{align}
and $V_{\theta}\colon\mathbb{C}^{d_{A}}\mapsto\mathbb{C}^{d_{E}}\otimes\mathbb{C}^{d_{B}}$
is a canonical isometric extension of $\mathcal{N}_{\theta}$ of the
form in~\eqref{eq:canonical-isometric-ext-smooth-family}.
\end{cor}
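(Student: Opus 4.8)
The plan is to mirror the proof of Corollary~\ref{cor:query-comp-lower-bnd-ch-disc}, but starting from the minimax error bounds of Corollary~\ref{cor:minimax-error-ch-est} in place of the discrimination bounds of Corollary~\ref{cor:error-prob-bounds-ch-disc}. I would prove the parallel bounds in detail and then remark that the adaptive bounds \eqref{eq:lower-bnd-sql-est-2} follow by the identical argument after substituting the second inequality of Corollary~\ref{cor:minimax-error-ch-est} and the second branch of Lemma~\ref{lem:quadratic-ineq-rewrite}. First I would invoke Definition~\ref{def:query-comp-parallel-est}: let $n\in\mathbb{N}$ be any value for which $p_{e,\|}(\delta,\mathcal{F},n)\leq\varepsilon$. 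Since $\varepsilon\in\left(0,\frac{1}{2}\right]$ and the map $x\mapsto x(1-x)$ is monotone increasing on $\left[0,\frac{1}{2}\right]$, this gives $1-4\varepsilon(1-\varepsilon)\leq 1-4p_{e,\|}(1-p_{e,\|})$. Chaining this with the first inequality of Corollary~\ref{cor:minimax-error-ch-est}, and using that the \emph{joint} infimum appearing there is no larger than its value at any fixed admissible triple $(\theta,\theta',W)$ (meaning $|\theta-\theta'|>2\delta$, $\theta,\theta'\in\Theta$, and $\|W\|\leq1$), yields, for every such triple,
\begin{equation}
1-4\varepsilon(1-\varepsilon)\leq n\!\left(a_W^{\theta,\theta'}+(n-1)b_W^{\theta,\theta'}\right).
\end{equation}

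Next I would apply Lemma~\ref{lem:quadratic-ineq-rewrite} with $a=a_W^{\theta,\theta'}$, $b=b_W^{\theta,\theta'}$, and $c=1-4\varepsilon(1-\varepsilon)$. Two hypotheses must be checked: that $c\geq0$, which holds because $\varepsilon(1-\varepsilon)\leq\frac{1}{4}$ on $\left(0,\frac{1}{2}\right]$; and that $a\geq b\geq0$, which holds because $b_W^{\theta,\theta'}=\|I-M_W^{\theta,\theta'}\|^2\leq 2\|I-\Re[M_W^{\theta,\theta'}]\|=a_W^{\theta,\theta'}$, precisely the inequality established in Remark~\ref{rem:parallel-adaptive-disc-compare-up-bnds}. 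The lemma then furnishes a lower bound on $n$ for each admissible triple; taking the supremum over all such triples gives the first (Heisenberg-type) bound, and since every $n$ meeting the error threshold obeys it, so does the infimum defining $n_{\|}^{*}(\varepsilon,\delta,\mathcal{F})$ (the bound holding trivially when no such $n$ exists, i.e.\ when the query complexity is $+\infty$). For the second (standard-quantum-limit-type) bound \eqref{eq:lower-bnd-sql-est-1}, I would specialize to triples with $M_W^{\theta,\theta'}=I$, so that $b_W^{\theta,\theta'}=0$ and the displayed inequality collapses to $1-4\varepsilon(1-\varepsilon)\leq n\,a_W^{\theta,\theta'}$; the $b=0$ branch of the lemma then gives $n\geq\lceil(1-4\varepsilon(1-\varepsilon))/a_W^{\theta,\theta'}\rceil$, and optimizing over such triples (equivalently, minimizing the denominator) completes the argument.

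Because the entire proof is a direct substitution into machinery already assembled in Corollary~\ref{cor:minimax-error-ch-est} and Lemma~\ref{lem:quadratic-ineq-rewrite}, I do not anticipate a genuine obstacle; the only points requiring care are bookkeeping ones. The most delicate is the order-of-optimization step: the error bound of Corollary~\ref{cor:minimax-error-ch-est} carries a joint infimum over $(\theta,\theta',W)$, whereas the final query-complexity statement wants a supremum over individual admissible choices. I must therefore be careful to first \emph{fix} one triple, derive the per-triple lower bound on $n$ through the quadratic inversion of the lemma, and only afterwards take the supremum—rather than attempting to push the optimization through the nonlinear inversion. The remaining subtlety is simply confirming that the lemma's hypotheses $c\geq0$ and $a\geq b$ hold uniformly over the admissible region, which reduces to the two elementary facts noted above.
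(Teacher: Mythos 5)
Your proof is correct and follows essentially the same route as the paper's: the paper simply combines the estimation-to-discrimination query-complexity reduction \eqref{eq:parallel-est-query-comp-lower-bnd}--\eqref{eq:adaptive-est-query-comp-lower-bnd} with Corollary \ref{cor:query-comp-lower-bnd-ch-disc} at $p=q=\tfrac{1}{2}$, whereas you re-execute the same quadratic inversion of Lemma \ref{lem:quadratic-ineq-rewrite} starting from the error-probability bounds of Corollary \ref{cor:minimax-error-ch-est}, which is the same composition of ingredients applied one level earlier. Your attention to fixing an admissible triple $(\theta,\theta',W)$ before inverting, and to verifying the hypotheses $c\geq 0$ and $a_{W}^{\theta,\theta'}\geq b_{W}^{\theta,\theta'}\geq 0$ via Remark \ref{rem:parallel-adaptive-disc-compare-up-bnds}, correctly fills in details the paper leaves implicit.
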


\begin{proof}
These inequalities follow by combining~\eqref{eq:parallel-est-query-comp-lower-bnd}
and~\eqref{eq:adaptive-est-query-comp-lower-bnd} with Corollary~\ref{cor:query-comp-lower-bnd-ch-disc},
while noting that the function $x\mapsto x\left(1-x\right)$ is monotone
increasing on the interval $\left[0,\frac{1}{2}\right]$.
\end{proof}
\begin{rem}
Similar to what was stated in Remark~\ref{rem:on-triviality-lower-bounds-ch-disc},
the lower bounds in~\eqref{eq:lower-bnd-sql-est-1} and~\eqref{eq:lower-bnd-sql-est-2}
are only applicable in the case that, for some $\theta,\theta'\in\Theta$
satisfying $\left|\theta-\theta'\right|>2\delta$, there exists $W\in\mathbb{B}$
such that $M_{W}^{\theta,\theta'}=I$. Otherwise these lower bounds
are trivially equal to $0$.
\end{rem}

Finally, we obtain asymptotic expressions for bounds on the minimax
error probabilities and query complexities by appealing to~\eqref{eq:bures-parallel-err-prob-bnd},
\eqref{eq:bures-adaptive-err-prob-bnd},~\eqref{eq:parallel-est-minimax-lower-bnd},
\eqref{eq:adaptive-est-minimax-lower-bnd}, and Theorems~\ref{thm:smooth-family-SLD-Fisher-ch-expand},~\ref{thm:upper-bound-SLD-Fisher-parallel}, and~\ref{thm:upper-bound-SLD-Fisher-adaptive}.\begin{cor}
\label{cor:minimax-error-query-comp-SLD-Fisher}For a smooth family
$\mathcal{F}\equiv\left(\mathcal{N}_{\theta}\right)$ of channels
and $\varepsilon\in\left(0,\frac{1}{2}\right]$, the following bounds
hold for the minimax error probabilities and query complexities of
channel estimation:
\begin{align}
1-4p_{e,\|}\left(1-p_{e,\|}\right) & \leq\frac{\delta^{2}}{4}n\inf_{\substack{\theta\in\Theta,\\
\left(H_{\theta}\right)_{\theta\in\Theta}
}
}\left\{ a_{\theta}+\left(n-1\right)b_{\theta}\right\} +o(\delta^{2}),\label{eq:small-delta-parallel-bound}\\
1-4p_{e,\adaptivedisc}\left(1-p_{e,\adaptivedisc}\right) & \leq\frac{\delta^{2}}{4}n\inf_{\substack{\theta\in\Theta,\\
\left(H_{\theta}\right)_{\theta\in\Theta}
}
}\left\{ a_{\theta}+\left(n-1\right)\sqrt{a_{\theta}b_{\theta}}\right\} +o(\delta^{2}),\label{eq:small-delta-adaptive-bound}
\end{align}
\begin{align}
n_{\|}^{\star}(\varepsilon,\delta,\mathcal{F}) & \geq O\!\left(\frac{1}{\delta}\sqrt{\frac{1-4\varepsilon\left(1-\varepsilon\right)}{\inf_{\theta\in\Theta,\left(H_{\theta}\right)_{\theta\in\Theta}}\left\{ b_{\theta}\right\} }}\right)\label{lower-bnd-heis-ch-est-1}\\
 & \geq O\!\left(\frac{1}{\delta^{2}}\left(\frac{1-4\varepsilon\left(1-\varepsilon\right)}{\inf_{\theta\in\Theta}\left\{ a_{\theta}:V_{\theta}^{\dag}M_{H_{\theta}}=0,\ \left(H_{\theta}\right)_{\theta\in\Theta}\right\} }\right)\right),\label{eq:lower-bnd-sql-ch-est-1}\\
n_{\adaptivedisc}^{\star}(\varepsilon,\delta,\mathcal{F}) & \geq O\!\left(\frac{1}{\delta}\sqrt{\frac{1-4\varepsilon\left(1-\varepsilon\right)}{\inf_{\theta\in\Theta,\left(H_{\theta}\right)_{\theta\in\Theta}}\left\{ \sqrt{a_{\theta}b_{\theta}}\right\} }}\right)\label{eq:lower-bnd-heis-ch-est-2-2}\\
 & \geq O\!\left(\frac{1}{\delta^{2}}\left(\frac{1-4\varepsilon\left(1-\varepsilon\right)}{\inf_{\theta\in\Theta}\left\{ a_{\theta}:V_{\theta}^{\dag}M_{H_{\theta}}=0,\ \left(H_{\theta}\right)_{\theta\in\Theta}\right\} }\right)\right),\label{eq:lower-bnd-sql-ch-est-2}
\end{align}
where
\begin{align}
p_{e,\|} & \equiv p_{e,\|}(\delta,\mathcal{F},n),\\
p_{e,\adaptivedisc} & \equiv p_{e,\adaptivedisc}(\delta,\mathcal{F},n),\\
a_{\theta} & \equiv\left\Vert M_{H_{\theta}}\right\Vert ^{2},\\
b_{\theta} & \equiv\left\Vert V_{\theta}^{\dag}M_{H_{\theta}}\right\Vert ^{2},\\
M_{H_{\theta}} & \equiv\left(\partial_{\theta}V_{\theta}\right)-i\left(H_{\theta}\otimes I_{B}\right)V_{\theta},
\end{align}
$V_{\theta}\colon\mathbb{C}^{d_{A}}\mapsto\mathbb{C}^{d_{E}}\otimes\mathbb{C}^{d_{B}}$
is a canonical isometric extension of $\mathcal{N}_{\theta}$ of the
form in~\eqref{eq:canonical-isometric-ext-smooth-family} and the
optimizations are over every smooth family $\left(H_{\theta}\right)_{\theta\in\Theta}$
of Hermitian operators.
\end{cor}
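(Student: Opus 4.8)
The plan is to chain the four ingredients flagged just before the statement, working throughout with $p=q=\tfrac12$ so that $pq=\tfrac14$ and the discrimination metric of \eqref{eq:alt-channel-disc-metric} reduces to $1-4p_e(1-p_e)=(1-2p_e)^2$. First I would apply the estimation-to-discrimination reductions \eqref{eq:parallel-est-minimax-lower-bnd}--\eqref{eq:adaptive-est-minimax-lower-bnd}: the minimax estimation error dominates the discrimination error of every admissible pair $(\mathcal{N}_\theta,\mathcal{N}_{\theta'})$ with $\left|\theta-\theta'\right|>2\delta$, and because $x\mapsto x(1-x)$ is increasing on $[0,\tfrac12]$, this transfers to the metric, bounding $1-4p_{e,\|}(1-p_{e,\|})$ above by the infimum over such pairs. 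Feeding each pair into \eqref{eq:bures-parallel-err-prob-bnd} (respectively \eqref{eq:bures-adaptive-err-prob-bnd} in the adaptive case) then yields
\[
1-4p_{e,\|}\!\left(1-p_{e,\|}\right)\ \le\ \inf_{\substack{\left|\theta-\theta'\right|>2\delta\\ \theta,\theta'\in\Theta}} d_B^2\!\left(\mathcal{N}_\theta^{\otimes n},\mathcal{N}_{\theta'}^{\otimes n}\right),
\]
with the adaptive analogue replacing the right-hand side by $\sup_{(\mathcal{A}_i)}d_B^2(\mathcal{P}_\theta^{(n)},\mathcal{P}_{\theta'}^{(n)})$.

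Next I would extract the small-$\delta$ behavior of the right-hand side. Writing $\theta'=\theta+\eta$ and applying Theorem \ref{thm:smooth-family-SLD-Fisher-ch-expand} to the smooth family $(\mathcal{N}_\theta^{\otimes n})_\theta$ gives $d_B^2(\mathcal{N}_\theta^{\otimes n},\mathcal{N}_{\theta+\eta}^{\otimes n})=\tfrac{\eta^2}{4}I_F(\theta;(\mathcal{N}_\theta^{\otimes n}))+o(\eta^2)$; since $d_B^2$ increases with the gap, the infimum over the constraint set is approached as $\eta\to 2\delta$, which fixes the leading contribution at order $\delta^2$. I would then bound the Fisher information of the tensor-power family via Theorem \ref{thm:upper-bound-SLD-Fisher-parallel}, i.e.\ $\tfrac14 I_F(\theta;(\mathcal{N}_\theta^{\otimes n}))\le n\inf_{(H_\theta)}\{a_\theta+(n-1)b_\theta\}$ with $a_\theta=\left\Vert M_{H_\theta}\right\Vert^2$ and $b_\theta=\left\Vert V_\theta^\dagger M_{H_\theta}\right\Vert^2$, which produces \eqref{eq:small-delta-parallel-bound}. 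The adaptive estimate \eqref{eq:small-delta-adaptive-bound} is identical in form, using \eqref{eq:bures-adaptive-err-prob-bnd} and the adaptive Fisher upper bound of Theorem \ref{thm:upper-bound-SLD-Fisher-adaptive}, which carries $\sqrt{a_\theta b_\theta}=\left\Vert M_{H_\theta}\right\Vert\,\left\Vert V_\theta^\dagger M_{H_\theta}\right\Vert$ in place of $b_\theta$. This is precisely the infinitesimal, derivative-valued counterpart of Corollary \ref{cor:minimax-error-ch-est}, with the finite-difference quantities $a_W^{\theta,\theta'},b_W^{\theta,\theta'}$ collapsing to $\eta^2a_\theta,\eta^2b_\theta$ as $\eta\to0$.

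For the query-complexity lower bounds I would invert these error estimates. At the threshold defining query complexity one has $p_e\le\varepsilon\le\tfrac12$, hence $1-4\varepsilon(1-\varepsilon)\le 1-4p_e(1-p_e)$, and combining with \eqref{eq:small-delta-parallel-bound} gives $1-4\varepsilon(1-\varepsilon)\lesssim \tfrac{\delta^2}{4}\,n\,[a_\theta+(n-1)b_\theta]$ for the optimizing $\theta,(H_\theta)$. When $\inf_\theta b_\theta>0$, the dominant large-$n$ term is of order $n^2 b_\theta$, and solving the resulting quadratic in $n$ yields the Heisenberg-type scaling $n=\Omega(\delta^{-1})$ of \eqref{lower-bnd-heis-ch-est-1}. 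When instead there is a smooth family $(H_\theta)$ with $V_\theta^\dagger M_{H_\theta}=0$, so that $b_\theta=0$ is attainable, the quadratic term drops out, the linear-in-$n$ term of order $n\,a_\theta$ controls the bound, and one obtains the standard-quantum-limit scaling $n=\Omega(\delta^{-2})$ of \eqref{eq:lower-bnd-sql-ch-est-1}. The adaptive bounds \eqref{eq:lower-bnd-heis-ch-est-2-2} and \eqref{eq:lower-bnd-sql-ch-est-2} follow the same dichotomy with $\sqrt{a_\theta b_\theta}$ playing the role of $b_\theta$.

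The hard part will be the uniformity of the $o(\delta^2)$ remainder when the Taylor expansion of Theorem \ref{thm:smooth-family-SLD-Fisher-ch-expand} is interchanged with the optimizations on the right-hand sides. In the parallel case the needed uniformity over the compact set of pure input states is already contained in Theorem \ref{thm:smooth-family-SLD-Fisher-ch-expand}; the genuinely delicate point is the adaptive case, where the supremum runs over the tuple $(\mathcal{A}_i)_{i=1}^{n-1}$ of interleaving channels, whose isometric extensions act on spaces of unbounded dimension, so one must justify that the second-order expansion of $d_B^2(\mathcal{P}_\theta^{(n)},\mathcal{P}_{\theta+\eta}^{(n)})$ holds uniformly in the protocol before exchanging $\sup_{(\mathcal{A}_i)}$ with $\lim_{\eta\to0}$ and invoking Theorem \ref{thm:upper-bound-SLD-Fisher-adaptive}. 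A secondary subtlety is the limit $\eta\to 2\delta$ inside the infimum over the open set $\left|\theta-\theta'\right|>2\delta$: the infimum is not attained, so the leading coefficient must be read off as a limit; this is immaterial for the $O(\cdot)$ query-complexity statements but is what pins down the explicit $\delta^2$ prefactor in \eqref{eq:small-delta-parallel-bound}--\eqref{eq:small-delta-adaptive-bound}.
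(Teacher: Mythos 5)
Your proposal follows essentially the same route as the paper's proof: combine the estimation-to-discrimination reductions \eqref{eq:parallel-est-minimax-lower-bnd}--\eqref{eq:adaptive-est-minimax-lower-bnd} with the Bures-distance error bounds \eqref{eq:bures-parallel-err-prob-bnd} and \eqref{eq:bures-adaptive-err-prob-bnd}, expand via Theorem \ref{thm:smooth-family-SLD-Fisher-ch-expand}, apply the SLD Fisher information upper bounds of Theorems \ref{thm:upper-bound-SLD-Fisher-parallel} and \ref{thm:upper-bound-SLD-Fisher-adaptive}, and invert the resulting quadratic in $n$ (which the paper does via Lemma \ref{lem:quadratic-ineq-rewrite}). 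The uniformity and prefactor subtleties you flag are genuine, but the paper treats these steps as immediate consequences of the cited ingredients, so your account matches its argument.
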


\begin{proof}
The inequalities in~\eqref{eq:small-delta-parallel-bound} and~\eqref{eq:small-delta-adaptive-bound}
are indeed direct consequences of~\eqref{eq:bures-parallel-err-prob-bnd},
\eqref{eq:bures-adaptive-err-prob-bnd},~\eqref{eq:parallel-est-minimax-lower-bnd},
\eqref{eq:adaptive-est-minimax-lower-bnd}, and Theorem~\ref{thm:smooth-family-SLD-Fisher-ch-expand},
as stated.

To obtain the query complexity lower bounds in~\eqref{eq:lower-bnd-heis-1}--\eqref{eq:lower-bnd-sql-2},
we apply Lemma~\ref{lem:quadratic-ineq-rewrite}. Let us begin with
the parallel setting. Let $\theta\in\Theta$ and let $\left(H_{\theta}\right)_{\theta\in\Theta}$
be a smooth family of Hermitian operators. Then the following bound
holds, as a consequence of~\eqref{eq:small-delta-parallel-bound}:
\begin{equation}
1-4p_{e,\|}\left(1-p_{e,\|}\right)\leq\frac{n}{4}\left(\delta^{2}a_{\theta}+\left(n-1\right)\delta^{2}b_{\theta}\right)+o(\delta^{2}).\label{eq:upper-bound-success-est-small-delta}
\end{equation}
Ignoring the $o(\delta^{2})$ term, as we are considering the limit
$\delta\to0$, and applying Lemma~\ref{lem:quadratic-ineq-rewrite},
we conclude that
\begin{align}
n & \geq4\left(\frac{\delta^{2}b_{\theta}-\delta^{2}a_{\theta}+\sqrt{\left(\delta^{2}b_{\theta}-\delta^{2}a_{\theta}\right)^{2}+4\delta^{2}b_{\theta}\left(1-4\varepsilon\left(1-\varepsilon\right)\right)}}{2\delta^{2}b_{\theta}}\right)\\
 & =\frac{4}{\delta}\left(\frac{\delta b_{\theta}-\delta a_{\theta}+\sqrt{\left(\delta b_{\theta}-\delta a_{\theta}\right)^{2}+4b_{\theta}\left(1-4\varepsilon\left(1-\varepsilon\right)\right)}}{2b_{\theta}}\right)\\
 & =O\!\left(\frac{1}{\delta}\sqrt{\frac{1-4\varepsilon\left(1-\varepsilon\right)}{b_{\theta}}}\right).
\end{align}
Since the bound holds for every $\theta\in\Theta$ and $\left(H_{\theta}\right)_{\theta\in\Theta}$
and every $n\in\mathbb{N}$ such that $p_{e,\|}\leq\varepsilon$,
we can take an infimum over both to conclude that
\begin{equation}
n_{\|}^{\star}(\varepsilon,\delta,\mathcal{F})\geq O\!\left(\frac{1}{\delta}\sqrt{\frac{1-4\varepsilon\left(1-\varepsilon\right)}{\inf_{\theta\in\Theta,\left(H_{\theta}\right)_{\theta\in\Theta}}\left\{ b_{\theta}\right\} }}\right),
\end{equation}
which proves~\eqref{lower-bnd-heis-ch-est-1}.

In the case that there exists $\theta\in\Theta$ and a smooth family
$\left(H_{\theta}\right)_{\theta\in\Theta}$ of Hermitian operators
such that $b_{\theta}=0$, then the upper bound in~\eqref{eq:upper-bound-success-est-small-delta}
becomes
\begin{equation}
1-4p_{e,\|}\left(1-p_{e,\|}\right)\leq\frac{n}{4}\delta^{2}a_{\theta}+o(\delta^{2}).
\end{equation}
Now flipping this inequality around and applying a similar line of
reasoning, we conclude~\eqref{eq:lower-bnd-sql-ch-est-1}.

We now argue for the adaptive setting, with a similar line of reasoning.
Let $\theta\in\Theta$ and let $\left(H_{\theta}\right)_{\theta\in\Theta}$
be a smooth family of Hermitian operators. Then the following bound
holds, as a consequence of~\eqref{eq:small-delta-adaptive-bound}:
\begin{align}
1-4p_{e,\adaptivedisc}\left(1-p_{e,\adaptivedisc}\right) & \leq\frac{n}{4}\delta^{2}\left(a_{\theta}+\left(n-1\right)\sqrt{a_{\theta}b_{\theta}}\right)+o(\delta^{2})\\
 & =\frac{n}{4}\left(\delta^{2}a_{\theta}+\left(n-1\right)\sqrt{\left(\delta^{2}a_{\theta}\right)\left(\delta^{2}b_{\theta}\right)}\right)+o(\delta^{2})\label{eq:upper-bound-success-est-small-delta-adaptive}
\end{align}
Ignoring the $o(\delta^{2})$ term, as we are considering the limit
$\delta\to0$, and applying Lemma~\ref{lem:quadratic-ineq-rewrite},
we conclude that
\begin{align}
n & \geq4\left(\frac{\sqrt{\delta^{2}a_{\theta}\delta^{2}b_{\theta}}-\delta^{2}a_{\theta}+\sqrt{\left(\sqrt{\delta^{2}a_{\theta}\delta^{2}b_{\theta}}-\delta^{2}a_{\theta}\right)^{2}+4\sqrt{\delta^{2}a_{\theta}\delta^{2}b_{\theta}}\left(1-4\varepsilon\left(1-\varepsilon\right)\right)}}{2\sqrt{\delta^{2}a_{\theta}\delta^{2}b_{\theta}}}\right)\\
 & =\frac{4}{\delta}\left(\frac{\sqrt{\delta a_{\theta}\delta b_{\theta}}-\delta a_{\theta}+\sqrt{\left(\sqrt{\delta a_{\theta}\delta b_{\theta}}-\delta a_{\theta}\right)^{2}+4\sqrt{a_{\theta}b_{\theta}}\left(1-4\varepsilon\left(1-\varepsilon\right)\right)}}{2\sqrt{a_{\theta}b_{\theta}}}\right)\\
 & =O\!\left(\frac{1}{\delta}\sqrt{\frac{\left(1-4\varepsilon\left(1-\varepsilon\right)\right)}{\sqrt{a_{\theta}b_{\theta}}}}\right).
\end{align}
Since the bound holds for every $\theta\in\Theta$ and $\left(H_{\theta}\right)_{\theta\in\Theta}$
and every $n\in\mathbb{N}$ such that $p_{e,\adaptivedisc}\leq\varepsilon$,
we can take an infimum over both to conclude that
\begin{equation}
n_{\adaptivedisc}^{\star}(\varepsilon,\delta,\mathcal{F})\geq O\!\left(\frac{1}{\delta}\sqrt{\frac{1-4\varepsilon\left(1-\varepsilon\right)}{\inf_{\theta\in\Theta,\left(H_{\theta}\right)_{\theta\in\Theta}}\left\{ \sqrt{a_{\theta}b_{\theta}}\right\} }}\right),
\end{equation}
which is~\eqref{eq:lower-bnd-heis-ch-est-2-2}.

In the case that there exists $\theta\in\Theta$ and a smooth family
$\left(H_{\theta}\right)_{\theta\in\Theta}$ of Hermitian operators
such that $b_{\theta}=0$, then the upper bound in~\eqref{eq:upper-bound-success-est-small-delta-adaptive}
becomes
\begin{equation}
1-4p_{e,\|}\left(1-p_{e,\|}\right)\leq\frac{n}{4}\delta^{2}a_{\theta}+o(\delta^{2}).
\end{equation}
Now flipping this inequality around and applying a similar line of
reasoning, we conclude~\eqref{eq:lower-bnd-sql-ch-est-2}.
\end{proof}

\begin{rem}
\label{rem:on-triviality-lower-bounds-ch-est-smooth}The lower bounds
in~\eqref{eq:lower-bnd-sql-ch-est-1} and~\eqref{eq:lower-bnd-sql-ch-est-2}
are only applicable in the case that there exists $\theta\in\Theta$
and a Hermitian operator $H_{\theta}$ such that $V_{\theta}^{\dag}M_{H_{\theta}}=0$. If this is the case, then Heisenberg scaling does not occur.
If this is not the case, then
\begin{equation}
\inf_{\theta\in\Theta}\left\{ a_{\theta}:V_{\theta}^{\dag}M_{H_{\theta}}=0,\ \left(H_{\theta}\right)_{\theta\in\Theta}\right\} =+\infty,
\end{equation}
implying that these lower bounds trivially evaluate to $0$. Otherwise,
the lower bounds in~\eqref{eq:lower-bnd-sql-ch-est-1} and~\eqref{eq:lower-bnd-sql-ch-est-2}
do not trivially evaluate to zero, and the denominator can be efficiently
computed by a combination of one-dimensional grid search over $\theta\in\Theta$
and semi-definite optimization as follows:
\begin{multline}
\inf_{H_{\theta}\in\mathbb{H}}\left\{ a_{\theta}:V_{\theta}^{\dag}M_{H_{\theta}}=0\right\} =\\
\inf_{\lambda\geq0}\left\{ \lambda:\begin{bmatrix}\lambda I & M_{H_{\theta}}\\
M_{H_{\theta}}^{\dag} & I
\end{bmatrix}\geq0,\ M_{H_{\theta}}=\left(\partial_{\theta}V_{\theta}\right)-i\left(H_{\theta}\otimes I_{B}\right)V_{\theta},\ V_{\theta}^{\dag}M_{H_{\theta}}=0\right\} ,
\end{multline}
where we appealed to~\eqref{eq:SDP-inf-norm-squared-general} for
$a_{\theta}=\left\Vert M_{H_{\theta}}\right\Vert ^{2}$.
\begin{rem}
Similar to the approaches outlined in Section~\ref{subsec:lower-bounds-binary-search},
lower bounds on the query complexities of channel estimation can be
efficiently computed by a combination of semi-definite optimization
and grid search. Indeed, from the bounds in Corollary~\ref{cor:minimax-error-ch-est}
and algorithms similar to Algorithms~\ref{alg:lower-bnd-query-comp-parallel}
and~\ref{alg:lower-bnd-query-comp-adaptive}, combined with an outer
search over $\theta,\theta'\in\Theta$ satisfying $\left|\theta-\theta'\right|>2\delta$,
one can efficiently compute lower bounds on the query complexities
of channel estimation. One can obtain looser bounds at lower computational
cost by performing an outer search over just $\theta\in\Theta$ and
setting $\theta'=\theta+2\delta$. For sufficiently small $\delta$,
one could alternatively obtain lower bounds on the query complexities
of channel estimation using the bounds from Corollary~\ref{cor:minimax-error-query-comp-SLD-Fisher}
and algorithms similar to Algorithms~\ref{alg:lower-bnd-query-comp-parallel}
and~\ref{alg:lower-bnd-query-comp-adaptive}, combined with an outer
search over $\theta\in\Theta$.
\end{rem}

\end{rem}

\section{Optimization of channel discrimination and estimation measures}\label{sec:Optimization-of-channel-disc-est}

In this section, we show how to compute all of the channel measures
in our paper in terms of semi-definite optimization alone or semi-definite
optimization combined with grid search. One of the main tools that
we employ for this purpose is the Schur complement lemma.

\subsection{Schur complement lemma and implications for optimization}

The Schur complement lemma states that the following equivalence holds:
\begin{equation}
\begin{bmatrix}K & M\\
M^{\dag} & L
\end{bmatrix}>0\qquad\Longleftrightarrow\qquad L>0,\quad K>M^{\dag}L^{-1}M.
\end{equation}
It can be extended to positive semi-definite constraints in the following
way:
\begin{align}
\begin{bmatrix}K & M\\
M^{\dag} & L
\end{bmatrix}\geq0\qquad & \Longleftrightarrow\qquad\begin{bmatrix}K+\delta I & M\\
M^{\dag} & L+\varepsilon I
\end{bmatrix}>0\quad\forall\delta,\varepsilon>0\\
\qquad & \Longleftrightarrow\qquad L+\varepsilon I>0,\quad K+\delta I>M^{\dag}\left(L+\varepsilon I\right)^{-1}M\quad\forall\delta,\varepsilon>0\\
\qquad & \Longleftrightarrow\qquad L\geq0,\quad K\geq M^{\dag}\left(L+\varepsilon I\right)^{-1}M\quad\forall\varepsilon>0.
\end{align}
If $\imagem(M)\perp\ker(L)$, then
\begin{equation}
\lim_{\varepsilon\to0^{+}}M^{\dag}\left(L+\varepsilon I\right)^{-1}M=M^{\dag}L^{-1}M,
\end{equation}
where the inverse on the right-hand side is understood to be a generalized
inverse, taken on the support of $L$, so that
\begin{equation}
\begin{bmatrix}K & M\\
M^{\dag} & L
\end{bmatrix}\geq0\qquad\Longleftrightarrow\qquad L\geq0,\quad K\geq M^{\dag}L^{-1}M,
\end{equation}
whenever $\imagem(M)\perp\ker(L)$.

The Schur complement lemma leads to the following conclusions for
$A\in\mathbb{L}$:
\begin{align}
\begin{bmatrix}\lambda I & A\\
A^{\dag} & I
\end{bmatrix} & \geq0\qquad\Longleftrightarrow\qquad\lambda I\geq A^{\dag}A,\\
\begin{bmatrix}I & A\\
A^{\dag} & I
\end{bmatrix} & \geq0\qquad\Longleftrightarrow\qquad1\geq\left\Vert A\right\Vert ,\label{eq:contraction-as-SDP-constraint}\\
\begin{bmatrix}\lambda I & A\\
A^{\dag} & \lambda I
\end{bmatrix} & \geq0\qquad\Longleftrightarrow\qquad\lambda\geq0,\quad\lambda^{2}I\geq A^{\dag}A,
\end{align}
with the latter condition following because
\begin{align}
\begin{bmatrix}\lambda I & A\\
A^{\dag} & \lambda I
\end{bmatrix}\geq0\qquad\Longleftrightarrow & \qquad\lambda I\geq0,\quad\lambda I\geq A^{\dag}\left(\lambda I+\varepsilon I\right)^{-1}A\quad\forall\varepsilon>0,\\
\Longleftrightarrow & \qquad\lambda\geq0,\quad\lambda(\lambda+\varepsilon)I\geq A^{\dag}A\quad\forall\varepsilon>0,\\
\Longleftrightarrow & \qquad\lambda\geq0,\quad\lambda^{2}I\geq A^{\dag}A.
\end{align}
Given that $\left\Vert A\right\Vert ^{2}=\lambda_{\max}(A^{\dag}A)$
and $\left\Vert A\right\Vert =\sqrt{\lambda_{\max}(A^{\dag}A)}$,
we arrive at the following semi-definite programs (SDPs):
\begin{align}
\left\Vert A\right\Vert ^{2} & =\inf_{\lambda\geq0}\left\{ \lambda:\begin{bmatrix}\lambda I & A\\
A^{\dag} & I
\end{bmatrix}\geq0\right\} ,\label{eq:SDP-inf-norm-squared-general}\\
\left\Vert A\right\Vert  & =\inf_{\lambda\geq0}\left\{ \lambda:\begin{bmatrix}\lambda I & A\\
A^{\dag} & \lambda I
\end{bmatrix}\geq0\right\} .\label{eq:SDP-inf-norm-general}
\end{align}
If the matrix of interest, now denoted by $B$, is either positive
semi-definite or Hermitian, SDP characterizations of $\left\Vert B\right\Vert $
can be simplified along the following lines:
\begin{align}
B & \geq0\quad\implies\quad\left\Vert B\right\Vert =\inf_{\lambda\geq0}\left\{ \lambda:B\leq\lambda I\right\} ,\\
B & \in\mathbb{H}\quad\implies\quad\left\Vert B\right\Vert =\inf_{\lambda\geq0}\left\{ \lambda:-\lambda I\leq B\leq\lambda I\right\} .\label{eq:spectral-norm-herm-op}
\end{align}

\subsection{Optimization of quantities based on Bures distance of channels}

In this section, we delineate methods for optimizing various channel
distinguishability measures related to the Bures distance.

SDPs for the root fidelity of channels were established by~\cite[Eq.~(7)]{Yuan2017}
and~\cite[Proposition~55]{Katariya2021}. Proposition~\ref{prop:root-fidelity-alt-SDP}
gives an alternative semi-definite program for computing the root fidelity
of channels.
\begin{prop}
\label{prop:root-fidelity-alt-SDP}The root fidelity of channels $\mathcal{N}_{1}$
and $\mathcal{N}_{2}$ can be expressed as the following semi-definite
program:
\begin{equation}
\sqrt{F}(\mathcal{N}_{1},\mathcal{N}_{2})=\sup_{\lambda\geq0,W\in\mathbb{L}}\left\{ \lambda:\Re\!\left[V_{1}^{\dag}\left(W_{E}\otimes I_{B}\right)V_{2}\right]\geq\lambda I,\ \begin{bmatrix}I & W\\
W^{\dag} & I
\end{bmatrix}\geq0\right\} .
\end{equation}
\end{prop}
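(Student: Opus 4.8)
The plan is to read off a variational $\sup$--$\inf$ formula for the root fidelity of channels from the proof of Theorem~\ref{thm:Bures-distance-channels} and then translate each piece into a semi-definite constraint. Concretely, combining the identity $d_{B}(\mathcal{N}_{1},\mathcal{N}_{2})=\sqrt{2(1-\sqrt{F}(\mathcal{N}_{1},\mathcal{N}_{2}))}$ with the minimax step \eqref{eq:minimax-step-bures-channels} established in that proof yields $\sqrt{F}(\mathcal{N}_{1},\mathcal{N}_{2})=\sup_{W\in\mathbb{L}:\left\Vert W\right\Vert \leq1}\inf_{\rho\in\mathbb{D}}\Re[\Tr[M_{W}\rho]]$, where $M_{W}\equiv V_{1}^{\dag}(W_{E}\otimes I_{B})V_{2}$. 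This is the starting point, and I emphasize that no fresh minimax argument is needed, since the order of optimization was already exchanged in establishing Theorem~\ref{thm:Bures-distance-channels}.

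First I would simplify the inner infimum. Because every $\rho\in\mathbb{D}$ is Hermitian, we have $\Re[\Tr[M_{W}\rho]]=\Tr[\Re[M_{W}]\rho]$, and minimizing the linear functional $\rho\mapsto\Tr[\Re[M_{W}]\rho]$ over the density operators returns the smallest eigenvalue of the Hermitian operator $\Re[M_{W}]$. Hence $\inf_{\rho\in\mathbb{D}}\Re[\Tr[M_{W}\rho]]=\lambda_{\min}(\Re[M_{W}])$, so that $\sqrt{F}(\mathcal{N}_{1},\mathcal{N}_{2})=\sup_{W:\left\Vert W\right\Vert \leq1}\lambda_{\min}(\Re[M_{W}])$.

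Next I would render the two remaining ingredients as semi-definite constraints. The minimum eigenvalue of a Hermitian operator admits the representation $\lambda_{\min}(\Re[M_{W}])=\sup\{\lambda:\Re[M_{W}]\geq\lambda I\}$, and the contraction constraint $\left\Vert W\right\Vert \leq1$ is equivalent, by the Schur complement lemma \eqref{eq:contraction-as-SDP-constraint}, to $\left[\begin{smallmatrix}I & W\\ W^{\dag} & I\end{smallmatrix}\right]\geq0$. Substituting $M_{W}=V_{1}^{\dag}(W_{E}\otimes I_{B})V_{2}$ and collecting both constraints produces exactly the claimed program, up to the restriction $\lambda\geq0$. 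This restriction is without loss of generality: the optimal value equals $\sqrt{F}(\mathcal{N}_{1},\mathcal{N}_{2})\geq0$, so the feasible region with $\lambda\geq0$ already attains the supremum.

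The step I expect to require the most care is the first one, namely being precise that the $\sup$--$\inf$ form of the root fidelity is genuinely available from the earlier proof, with the supremum over $W$ sitting on the outside, so that it matches the SDP's joint supremum over $(\lambda,W)$ rather than a form in which the optimization order would have to be swapped again here. Everything downstream --- the reduction of the inner infimum to $\lambda_{\min}$, the eigenvalue-as-LMI rewriting, and the Schur-complement encoding of contractions --- is routine and follows directly from the tools assembled in Section~\ref{sec:Optimization-of-channel-disc-est}.
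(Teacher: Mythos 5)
Your proposal is correct and follows essentially the same route as the paper's own proof: both start from the $\sup_{W}\inf_{\rho}\Re\!\left[\Tr\!\left[M_{W}\rho\right]\right]$ form of the root fidelity obtained via the minimax step in the proof of Theorem~\ref{thm:Bures-distance-channels}, reduce the inner infimum to the smallest eigenvalue of $\Re\!\left[M_{W}\right]$, and encode that eigenvalue and the contraction constraint as linear matrix inequalities via \eqref{eq:contraction-as-SDP-constraint}. The only difference is presentational --- you derive the $\sup$--$\inf$ form explicitly from the Bures--fidelity relation, and you spell out the (correct) justification for restricting to $\lambda\geq0$, which the paper leaves implicit.
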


\begin{proof}
Beginning with the developments in~\eqref{eq:root-fid-chs-bures-distance}
and~\eqref{eq:reduction-fidelity-chs-1}--\eqref{eq:reduction-fidelity-chs-last},
consider that
\begin{align}
\sqrt{F}(\mathcal{N}_{1},\mathcal{N}_{2}) & =\sup_{W\in\mathbb{B}}\inf_{\rho}\Re\!\left[\Tr\!\left[V_{1}^{\dag}\left(W_{E}\otimes I_{B}\right)V_{2}\rho\right]\right]\\
 & =\sup_{W\in\mathbb{B}}\inf_{\rho}\Tr\!\left[\Re\!\left[V_{1}^{\dag}\left(W_{E}\otimes I_{B}\right)V_{2}\right]\rho\right]\\
 & =\sup_{W\in\mathbb{B}}\sup_{\lambda\in\mathbb{R}}\left\{ \lambda:\Re\!\left[V_{1}^{\dag}\left(W_{E}\otimes I_{B}\right)V_{2}\right]\geq\lambda I\right\} \\
 & =\sup_{W\in\mathbb{B}}\sup_{\lambda\geq0}\left\{ \lambda:\Re\!\left[V_{1}^{\dag}\left(W_{E}\otimes I_{B}\right)V_{2}\right]\geq\lambda I\right\} \\
 & =\sup_{\lambda\geq0,W\in\mathbb{L}}\left\{ \lambda:\Re\!\left[V_{1}^{\dag}\left(W_{E}\otimes I_{B}\right)V_{2}\right]\geq\lambda I,\ \begin{bmatrix}I & W\\
W^{\dag} & I
\end{bmatrix}\geq0\right\} ,
\end{align}
where we employed~\eqref{eq:contraction-as-SDP-constraint}.
\end{proof}
\begin{prop}
The squared Bures distance of channels $\mathcal{N}_{1}$ and $\mathcal{N}_{2}$
can be expressed as the following SDP:
\begin{equation}
d_{B}^{2}(\mathcal{N}_{1},\mathcal{N}_{2})=\inf_{\substack{\lambda\in\mathbb{R},
W\in\mathbb{L}
}
}\left\{ \begin{array}{c}
\lambda:\lambda I\geq2\left(I-\Re\!\left[M\right]\right)\geq-\lambda I,\ \begin{bmatrix}I & W\\
W^{\dag} & I
\end{bmatrix}\geq0,\\
M=V_{1}^{\dag}\left(W_{E}\otimes I_{B}\right)V_{2}
\end{array}\right\} .
\end{equation}
\end{prop}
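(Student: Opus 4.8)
The plan is to assemble the claimed SDP directly from the reduced form of the channel Bures distance established in Theorem~\ref{thm:Bures-distance-channels}. Equation~\eqref{eq:bures-distance-chs-reduced} already gives $d_{B}(\mathcal{N}_{1},\mathcal{N}_{2}) = \sqrt{2}\inf_{W:\|W\|\leq 1}\|I-\Re[M]\|^{1/2}$ with $M\equiv V_{1}^{\dag}(W_{E}\otimes I_{B})V_{2}$. First I would square both sides; since $x\mapsto x^{2}$ and $x\mapsto\sqrt{x}$ are monotone on $[0,\infty)$, the square commutes through the infimum, yielding the clean intermediate identity
\begin{equation}
d_{B}^{2}(\mathcal{N}_{1},\mathcal{N}_{2}) = \inf_{W\in\mathbb{L}:\|W\|\leq 1} 2\left\Vert I-\Re[M]\right\Vert .
\end{equation}

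Next I would observe that the operator $I-\Re[M]$ is Hermitian, since $\Re[M]=\tfrac{1}{2}(M+M^{\dag})$ is Hermitian by construction. This lets me invoke the SDP characterization of the spectral norm of a Hermitian operator recorded in \eqref{eq:spectral-norm-herm-op}, namely $\|B\|=\inf_{\lambda\geq 0}\{\lambda:-\lambda I\leq B\leq\lambda I\}$ for $B\in\mathbb{H}$. Applying this to $B=2(I-\Re[M])$ rewrites the objective $2\|I-\Re[M]\|$ as $\inf_{\lambda}\{\lambda:\lambda I\geq 2(I-\Re[M])\geq-\lambda I\}$. I would also note that the two-sided matrix inequality forces $\lambda\geq 0$ automatically (adding the two inequalities gives $2\lambda I\geq 0$), so writing the optimization over $\lambda\in\mathbb{R}$, as in the statement, is harmless.

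It then remains to encode the contraction constraint $\|W\|\leq 1$ as a linear matrix inequality, and for this I would invoke \eqref{eq:contraction-as-SDP-constraint}, a direct consequence of the Schur complement lemma, which gives the equivalence
\begin{equation}
\|W\|\leq 1\qquad\Longleftrightarrow\qquad\begin{bmatrix} I & W\\ W^{\dag} & I\end{bmatrix}\geq 0 .
\end{equation}
Merging the infimum over $W$ and the infimum over $\lambda$ into a single joint optimization, and carrying along the defining relation $M=V_{1}^{\dag}(W_{E}\otimes I_{B})V_{2}$ as a further linear constraint, then produces exactly the claimed SDP.

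Since every ingredient is already in hand, this argument is essentially a bookkeeping assembly rather than a new derivation, so I anticipate no genuine obstacle. The only point requiring a moment's care is the legitimacy of pushing the spectral-norm SDP inside the infimum over $W$; this is immediate because the joint feasible region is simply the intersection of the per-$W$ constraints with the norm-representing constraints, and the combined objective remains the single scalar $\lambda$, so the nested infima collapse to one joint infimum without any minimax or interchange subtlety.
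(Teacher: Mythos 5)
Your proposal is correct and follows essentially the same route as the paper's proof: square the reduced expression \eqref{eq:bures-distance-chs-reduced}, apply \eqref{eq:spectral-norm-herm-op} to the Hermitian operator $2\left(I-\Re\!\left[M\right]\right)$, and encode $\left\Vert W\right\Vert \leq1$ via \eqref{eq:contraction-as-SDP-constraint}. The extra remarks you include (monotonicity justifying the interchange of squaring with the infimum, and the two-sided inequality forcing $\lambda\geq0$) are correct but not needed beyond what the paper already states.
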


\begin{proof}
Starting from~\eqref{eq:bures-distance-chs-reduced} and applying
\eqref{eq:spectral-norm-herm-op} for $2\left\Vert I-\Re\!\left[M\right]\right\Vert $
and~\eqref{eq:contraction-as-SDP-constraint} for $\left\Vert W\right\Vert \leq1$,
consider that
\begin{align}
d_{B}^{2}(\mathcal{N}_{1},\mathcal{N}_{2}) & =\inf_{W\in\mathbb{B}}\left\Vert 2\left(I-\Re\!\left[V_{1}^{\dag}\left(W_{E}\otimes I_{B}\right)V_{2}\right]\right)\right\Vert \\
 & =\inf_{\substack{\lambda\in\mathbb{R},\\
W\in\mathbb{L}
}
}\left\{ \begin{array}{c}
\lambda:\lambda I\geq2\left(I-\Re\!\left[V_{1}^{\dag}\left(W_{E}\otimes I_{B}\right)V_{2}\right]\right)\geq-\lambda I,\\
\begin{bmatrix}I & W\\
W^{\dag} & I
\end{bmatrix}\geq0
\end{array}\right\} ,
\end{align}
thus concluding the proof.
\end{proof}
\begin{prop}
\label{prop:SDP-Bures-channel-parallel}The upper bound in Theorem
\ref{thm:parallel-bures-dist-channels} for parallel discrimination
of channels $\mathcal{N}_{1}$ and $\mathcal{N}_{2}$ can be expressed
as the following SDP:
\begin{multline}
\inf_{W,M\in\mathbb{L}:\left\Vert W\right\Vert \leq1}\left\{ 2\left\Vert I-\Re\!\left[M\right]\right\Vert +\left(n-1\right)\left\Vert I-M\right\Vert ^{2}:M=V_{1}^{\dag}\left(W_{E}\otimes I_{B}\right)V_{2}\right\} \\
=\inf_{\substack{\lambda,\mu\geq0,\\
W,M\in\mathbb{L}
}
}\left\{ \begin{array}{c}
\lambda+\left(n-1\right)\mu:\lambda I\geq2\left(I-\Re\!\left[M\right]\right)\geq-\lambda I,\ \begin{bmatrix}\mu I & I-M\\
I-M^{\dag} & I
\end{bmatrix}\geq0,\\
\begin{bmatrix}I & W\\
W^{\dag} & I
\end{bmatrix}\geq0,\ M=V_{1}^{\dag}\left(W_{E}\otimes I_{B}\right)V_{2}
\end{array}\right\} ,
\end{multline}
\end{prop}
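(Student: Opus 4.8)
The plan is to linearize each of the two norm terms in the objective of Theorem~\ref{thm:parallel-bures-dist-channels}, together with the contraction constraint $\left\Vert W\right\Vert\leq1$, into semi-definite constraints via the epigraph characterizations obtained from the Schur complement lemma, and then to merge the resulting nested scalar infima into one joint optimization. The three ingredients are exactly \eqref{eq:spectral-norm-herm-op}, \eqref{eq:SDP-inf-norm-squared-general}, and \eqref{eq:contraction-as-SDP-constraint} established earlier.

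First I would note that $2\left(I-\Re[M]\right)$ is Hermitian, since $\Re[M]=\tfrac12(M+M^{\dag})$ and $I$ are both Hermitian, and that $2\left\Vert I-\Re[M]\right\Vert=\left\Vert 2\left(I-\Re[M]\right)\right\Vert$. Hence \eqref{eq:spectral-norm-herm-op} yields
\[
2\left\Vert I-\Re[M]\right\Vert=\inf_{\lambda\geq0}\left\{\lambda:\ -\lambda I\leq 2\left(I-\Re[M]\right)\leq\lambda I\right\}.
\]
Next, applying \eqref{eq:SDP-inf-norm-squared-general} with $A=I-M$ gives
\[
\left\Vert I-M\right\Vert^{2}=\inf_{\mu\geq0}\left\{\mu:\ \begin{bmatrix}\mu I & I-M\\ I-M^{\dag} & I\end{bmatrix}\geq0\right\}.
\]
Finally, \eqref{eq:contraction-as-SDP-constraint} replaces the contraction constraint with the block positivity $\left[\begin{smallmatrix}I & W\\ W^{\dag} & I\end{smallmatrix}\right]\geq0$. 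Substituting these three expressions into the left-hand objective, and retaining the defining relation $M=V_{1}^{\dag}\left(W_{E}\otimes I_{B}\right)V_{2}$ as an affine constraint, produces the right-hand optimization.

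The remaining step, and the only point requiring any care, is to justify collapsing the nested infima into the single joint infimum over $\left(\lambda,\mu,W,M\right)$. For each fixed feasible pair $(W,M)$ the scalar variables $\lambda$ and $\mu$ enter disjoint constraints ($\lambda$ couples only to $\Re[M]$, $\mu$ only to $I-M$, and $W$ only to its own block), and the objective $\lambda+(n-1)\mu$ is separable and affine with coefficient $n-1\geq0$. Thus the feasible region factorizes as a product over these independent blocks, and minimizing a separable sum over a product domain equals the sum of the individual minimizations; taking the outer infimum over $(W,M)$ then gives equality with the claimed SDP. I expect this merging of infima to be the main (though elementary) obstacle, as it is the one place where one must verify the decoupling explicitly; no duality, compactness, or minimax argument is needed beyond this observation.
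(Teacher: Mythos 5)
Your proposal is correct and follows essentially the same route as the paper's proof: both apply \eqref{eq:spectral-norm-herm-op} to the term $2\left\Vert I-\Re\!\left[M\right]\right\Vert$, \eqref{eq:SDP-inf-norm-squared-general} to $\left\Vert I-M\right\Vert^{2}$, and \eqref{eq:contraction-as-SDP-constraint} to the contraction constraint, then combine everything into one joint SDP. Your extra remark justifying the collapse of the nested infima (the constraints on $\lambda$ and $\mu$ decouple for fixed $(W,M)$ and the objective is separable) is a valid elaboration of a step the paper leaves implicit.
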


\begin{proof}
Applying~\eqref{eq:spectral-norm-herm-op} for $2\left\Vert I-\Re\!\left[M\right]\right\Vert $,
\eqref{eq:SDP-inf-norm-squared-general} for $\left\Vert I-M\right\Vert ^{2}$,
and~\eqref{eq:contraction-as-SDP-constraint} for $\left\Vert W\right\Vert \leq1$,
consider that
\begin{multline}
\inf_{W\in\mathbb{B}}\left\{ \left\Vert 2\left(I-\Re\!\left[V_{1}^{\dag}\left(W_{E}\otimes I_{B}\right)V_{2}\right]\right)\right\Vert +\left(n-1\right)\left\Vert I-V_{1}^{\dag}\left(W_{E}\otimes I_{B}\right)V_{2}\right\Vert ^{2}\right\} \\
=\inf_{\substack{\lambda,\mu\geq0,\\
W\in\mathbb{L}
}
}\left\{ \begin{array}{c}
\lambda+\left(n-1\right)\mu:\\
\lambda I\geq2\left(I-\Re\!\left[V_{1}^{\dag}\left(W_{E}\otimes I_{B}\right)V_{2}\right]\right)\geq-\lambda I,\\
\begin{bmatrix}\mu I & I-V_{1}^{\dag}\left(W_{E}\otimes I_{B}\right)V_{2}\\
I-\left(V_{1}^{\dag}\left(W_{E}\otimes I_{B}\right)V_{2}\right)^{\dag} & I
\end{bmatrix}\geq0,\\
\begin{bmatrix}I & W\\
W^{\dag} & I
\end{bmatrix}\geq0
\end{array}\right\} ,
\end{multline}
thus concluding the proof.
\end{proof}
\begin{prop}
\label{prop:upper-bound-adaptive-SDP-1d-search}The upper bound in
Theorem~\ref{thm:seq-bures-distance-channels} for adaptive discrimination
of channels $\mathcal{N}_{1}$ and $\mathcal{N}_{2}$ can be expressed
in the following way:
\begin{multline}
\inf_{\substack{W,M\in\mathbb{L}:\\
\left\Vert W\right\Vert \leq1
}
}\left\{ 2\left\Vert I-\Re\!\left[M\right]\right\Vert +\left(n-1\right)\left(2\left\Vert I-\Re\!\left[M\right]\right\Vert \right)^{\frac{1}{2}}\left\Vert I-M\right\Vert :M=V_{1}^{\dag}\left(W_{E}\otimes I_{B}\right)V_{2}\right\} \\
=\inf_{\nu\in\left(0,1\right]}\inf_{\substack{\lambda,\mu\geq0,\\
W\in\mathbb{L}
}
}\left\{ \begin{array}{c}
\left(1+\frac{n-1}{2}\nu\right)\lambda+\left(\frac{n-1}{2\nu}\right)\mu:\\
\lambda I\geq2\left(I-\Re\!\left[M\right]\right)\geq-\lambda I,\ \begin{bmatrix}\mu I & I-M\\
I-M^{\dag} & I
\end{bmatrix}\geq0,\\
\begin{bmatrix}I & W\\
W^{\dag} & I
\end{bmatrix}\geq0,\ M=V_{1}^{\dag}\left(W_{E}\otimes I_{B}\right)V_{2},
\end{array}\right\} .\label{eq:SDP-with-outer-loop}
\end{multline}
\end{prop}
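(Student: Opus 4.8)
The plan is to linearize the product term in the objective of Theorem~\ref{thm:seq-bures-distance-channels}. Abbreviating $a_{W}\equiv2\left\Vert I-\Re\!\left[M_{W}\right]\right\Vert$ and $b_{W}\equiv\left\Vert I-M_{W}\right\Vert^{2}$, the objective reads $a_{W}+\left(n-1\right)\sqrt{a_{W}b_{W}}$, and the only piece not directly representable as an SDP is the geometric mean $\sqrt{a_{W}b_{W}}$. I would dispense with it using the weighted arithmetic--geometric mean identity
\begin{equation}
\sqrt{a_{W}b_{W}}=\inf_{\nu>0}\frac{1}{2}\left(\nu a_{W}+\frac{b_{W}}{\nu}\right),
\end{equation}
which is valid because $f(\nu)=\nu a_{W}+b_{W}/\nu$ is convex on $(0,\infty)$ with minimizer $\nu^{\star}=\sqrt{b_{W}/a_{W}}$ and minimum value $2\sqrt{a_{W}b_{W}}$. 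Substituting and collecting terms rewrites the objective for each fixed $W$ as $\inf_{\nu>0}\left[\left(1+\tfrac{n-1}{2}\nu\right)a_{W}+\tfrac{n-1}{2\nu}b_{W}\right]$, matching the coefficient structure on the right-hand side.

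Next I would justify restricting the outer infimum from $\nu>0$ to $\nu\in(0,1]$. The key observation is that the optimal $\nu^{\star}=\sqrt{b_{W}/a_{W}}$ always satisfies $\nu^{\star}\leq1$: indeed, Remark~\ref{rem:parallel-adaptive-disc-compare-up-bnds} establishes $\left\Vert I-M_{W}\right\Vert\leq\left(2\left\Vert I-\Re\!\left[M_{W}\right]\right\Vert\right)^{1/2}$ for every contraction $W$, which squares to $b_{W}\leq a_{W}$. Since $f$ is convex and non-increasing on $(0,\nu^{\star}]$, clipping the domain at $1\geq\nu^{\star}$ leaves the infimum unchanged for each $W$; the degenerate case $M_{W}=I$ (where $a_{W}=b_{W}=0$ and the objective is $0$ for every $\nu$) is handled by reading the identity in the limiting sense. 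I would then interchange the infimum over $W$ with the one over $\nu$, which is always permissible for nested infima.

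With $\nu$ held fixed, I would introduce the SDP representations from Section~\ref{sec:Optimization-of-channel-disc-est}: encode $a_{W}=\lambda$ via the Hermitian spectral-norm characterization \eqref{eq:spectral-norm-herm-op} applied to $2\left(I-\Re\!\left[M\right]\right)$, encode $b_{W}=\mu$ via \eqref{eq:SDP-inf-norm-squared-general} applied to $I-M$, and impose $\left\Vert W\right\Vert\leq1$ via \eqref{eq:contraction-as-SDP-constraint}, all with $M=V_{1}^{\dag}\left(W_{E}\otimes I_{B}\right)V_{2}$ as a linear constraint. Because both weights $1+\tfrac{n-1}{2}\nu$ and $\tfrac{n-1}{2\nu}$ are nonnegative for $n\geq1$ and $\nu>0$, minimizing the weighted sum drives $\lambda\downarrow a_{W}$ and $\mu\downarrow b_{W}$ simultaneously, so the three optimizations merge into the single joint semi-definite program appearing in \eqref{eq:SDP-with-outer-loop}. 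The main obstacle is the $\nu\in(0,1]$ restriction: getting it right is precisely where Remark~\ref{rem:parallel-adaptive-disc-compare-up-bnds} (equivalently, the fact that $M_{W}$ is a contraction) must be invoked, and it is the only non-mechanical step, the remainder being routine bookkeeping with the Schur-complement characterizations.
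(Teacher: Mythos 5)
Your proposal is correct and follows essentially the same route as the paper's proof: both linearize the geometric mean $\sqrt{a_W b_W}$ via the variational identity $\sqrt{ab}=\tfrac12\inf_{\nu}\{\nu a+b/\nu\}$, invoke Remark \ref{rem:parallel-adaptive-disc-compare-up-bnds} (i.e., $b_W\leq a_W$) to restrict $\nu$ to $(0,1]$, and then apply the Schur-complement characterizations \eqref{eq:spectral-norm-herm-op}, \eqref{eq:SDP-inf-norm-squared-general}, and \eqref{eq:contraction-as-SDP-constraint} to obtain the inner SDP. Your explicit treatment of the degenerate case $M_W=I$ and the justification for swapping the nested infima are minor elaborations of steps the paper leaves implicit.
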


\begin{proof}
Consider that, for $a,b\geq0$, such that $b\leq a$, the following
equality holds: 
\begin{equation}
\sqrt{ab}=\frac{1}{2}\inf_{\nu\in\left(0,1\right]}\left\{ \nu a+\frac{b}{\nu}\right\} ,\label{eq:geo-mean-opt}
\end{equation}
while noting that the optimal value of $\nu$ is $\sqrt{\frac{b}{a}}$.
Keeping the constraint $M=V_{1}^{\dag}\left(W_{E}\otimes I_{B}\right)V_{2}$
implicit, applying~\eqref{eq:geo-mean-opt}, while noting Remark~\ref{rem:parallel-adaptive-disc-compare-up-bnds},
it follows that
\begin{align}
 & \inf_{\substack{W,M\in\mathbb{L}:\\
\left\Vert W\right\Vert \leq1
}
}\left\{ 2\left\Vert I-\Re\!\left[M\right]\right\Vert +\left(n-1\right)\left(2\left\Vert I-\Re\!\left[M\right]\right\Vert \right)^{\frac{1}{2}}\left\Vert I-M\right\Vert \right\} \nonumber \\
 & =\inf_{\substack{W,M\in\mathbb{L}:\\
\left\Vert W\right\Vert \leq1
}
}\left\{ 2\left\Vert I-\Re\!\left[M\right]\right\Vert +\left(n-1\right)\sqrt{2\left\Vert I-\Re\!\left[M\right]\right\Vert \left\Vert I-M\right\Vert ^{2}}\right\} \\
 & =\inf_{\nu\in\left(0,1\right]}\inf_{\substack{W,M\in\mathbb{L}:\\
\left\Vert W\right\Vert \leq1
}
}\left\{ 2\left\Vert I-\Re\!\left[M\right]\right\Vert +\left(\frac{n-1}{2}\right)\left(\nu\left(2\left\Vert I-\Re\!\left[M\right]\right\Vert \right)+\frac{\left\Vert I-M\right\Vert ^{2}}{\nu}\right)\right\} \\
 & =\inf_{\nu\in\left(0,1\right]}\inf_{\substack{W,M\in\mathbb{L}:\\
\left\Vert W\right\Vert \leq1
}
}\left\{ 1+\frac{n-1}{2}\nu\left(2\left\Vert I-\Re\!\left[M\right]\right\Vert \right)+\left(\frac{n-1}{2}\right)\frac{\left\Vert I-M\right\Vert ^{2}}{\nu}\right\} .
\end{align}
Applying~\eqref{eq:spectral-norm-herm-op} for $2\left\Vert I-\Re\!\left[M\right]\right\Vert $
and~\eqref{eq:SDP-inf-norm-squared-general} for $\left\Vert I-M\right\Vert ^{2}$,
we conclude~\eqref{eq:SDP-with-outer-loop}.
\end{proof}
\begin{rem}
\label{rem:adaptive-ch-disc-opt}In practice, the optimization in
\eqref{eq:SDP-with-outer-loop} can be computed efficiently, even
though it is not an SDP. Indeed,~\eqref{eq:SDP-with-outer-loop} can
be rewritten in the following way:
\begin{equation}
\inf_{\nu\in\left(0,1\right]}J(\nu),
\end{equation}
where
\begin{equation}
J(\nu)\coloneqq\inf_{\substack{\lambda,\mu\geq0,\\
W\in\mathbb{L}
}
}\left\{ \begin{array}{c}
\left(1+\frac{n-1}{2}\nu\right)\lambda+\left(\frac{n-1}{2\nu}\right)\mu:\\
\lambda I\geq2\left(I-\Re\!\left[M\right]\right)\geq-\lambda I,\ \begin{bmatrix}\mu I & I-M\\
I-M^{\dag} & I
\end{bmatrix}\geq0,\\
\begin{bmatrix}I & W\\
W^{\dag} & I
\end{bmatrix}\geq0,\ M=V_{1}^{\dag}\left(W_{E}\otimes I_{B}\right)V_{2},
\end{array}\right\} .
\end{equation}
For each fixed $\nu\in\left(0,1\right]$, the function $J(\nu)$ can
be computed efficiently because it is an SDP. As such, one can perform
a global one-dimensional line search for the value of $\nu\in\left(0,1\right]$
that minimizes $J(\nu)$. The global search is inexpensive in practice
because $\nu$ is one-dimensional, and there are well known methods
for performing this step~\cite{Press2007}. In more detail, the outer
one-dimensional optimization over $\nu$ is nonconvex in general.
However, since the problem is scalar, a global optimum can be obtained
efficiently by a coarse-to-fine search strategy: we can first sample
$J(\nu)$ on a logarithmic grid, and then refine the best candidates
using Brent’s derivative-free minimization algorithm~\cite{Brent1973}.
\end{rem}

\subsection{Optimization of quantities based on SLD Fisher information}

In this section, we provide various methods for optimizing quantities
involving the SLD Fisher information. We note that, recently, a numerical
package is available for this purpose~\cite{Dulian2025}. In some
cases, the approaches we outline below differ from those given in
prior works~\cite{Kurdzialek2023,Dulian2025}.

Let us begin with the SLD Fisher information of channels, while noting
that a similar SDP for it was proposed in~\cite[Eqs.~(21)--(22)]{DemkowiczDobrzanski2012}.
\begin{prop}
The SLD Fisher information of the smooth channel family $\left(\mathcal{N}_{\theta}\right)_{\theta\in\Theta}$
can be expressed as the following SDP:
\begin{align}
\frac{1}{4}I_{F}\!\left(\theta;\left(\mathcal{N}_{\theta}\right)_{\theta\in\Theta}\right) & =\inf_{H_{\theta}\in\mathbb{H}}\left\Vert M_{H_{\theta}}\right\Vert ^{2}\\
 & =\inf_{\lambda\geq0,H_{\theta}\in\mathbb{H}}\left\{ \lambda:\begin{bmatrix}\lambda I & M_{H_{\theta}}\\
M_{H_{\theta}}^{\dag} & I
\end{bmatrix}\geq0\right\} ,
\end{align}
where
\begin{equation}
M_{H_{\theta}}\equiv\partial_{\theta}V_{\theta}-i\left(H_{\theta}\otimes I_{B}\right)V_{\theta}.
\end{equation}
\end{prop}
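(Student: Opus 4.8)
The plan is to obtain both equalities directly from results already established in the excerpt, so the proof is essentially a matter of assembling two prior statements. The first equality is nothing but equation \eqref{eq:SLD-Fisher-channels-opt-Herm} of Theorem \ref{thm:SLD-Fisher-channels}, which states that
\begin{equation}
\frac{1}{4}I_{F}\!\left(\theta;\left(\mathcal{N}_{\theta}\right)_{\theta\in\Theta}\right)=\inf_{\left(H_{\theta}\right)_{\theta\in\Theta}}\left\Vert \left(\partial_{\theta}V_{\theta}\right)-i\left(H_{\theta}\otimes I_{B}\right)V_{\theta}\right\Vert ^{2}=\inf_{\left(H_{\theta}\right)_{\theta\in\Theta}}\left\Vert M_{H_{\theta}}\right\Vert ^{2}.
\end{equation}
The one point worth spelling out is that the objective $\left\Vert M_{H_{\theta}}\right\Vert ^{2}$ depends on the smooth family $\left(H_{\theta}\right)_{\theta\in\Theta}$ only through its value $H_{\theta}$ at the single fixed parameter point. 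Hence the infimum over smooth families of Hermitian operators collapses to an infimum over a single Hermitian operator $H_{\theta}\in\mathbb{H}$: any Hermitian operator is the value at $\theta$ of some smooth family (take it constant), and conversely every smooth family contributes only through its value at $\theta$. This justifies writing the first line of the Proposition with $\inf_{H_{\theta}\in\mathbb{H}}$.

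For the second equality, I would invoke the Schur-complement SDP reformulation of the squared spectral norm recorded in \eqref{eq:SDP-inf-norm-squared-general}, namely $\left\Vert A\right\Vert ^{2}=\inf_{\lambda\geq0}\{\lambda:\left[\begin{smallmatrix}\lambda I & A\\ A^{\dag} & I\end{smallmatrix}\right]\geq0\}$, applied to $A=M_{H_{\theta}}$. This gives, for each fixed $H_{\theta}$,
\begin{equation}
\left\Vert M_{H_{\theta}}\right\Vert ^{2}=\inf_{\lambda\geq0}\left\{\lambda:\begin{bmatrix}\lambda I & M_{H_{\theta}}\\ M_{H_{\theta}}^{\dag} & I\end{bmatrix}\geq0\right\}.
\end{equation}
Taking the infimum over $H_{\theta}\in\mathbb{H}$ on both sides and merging the two nested infima over $\lambda$ and $H_{\theta}$ into the joint minimization $\inf_{\lambda\geq0,\,H_{\theta}\in\mathbb{H}}$ then yields the SDP form in the second line of the Proposition, completing the proof.

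There is no genuinely hard step here; the argument is a routine concatenation of Theorem \ref{thm:SLD-Fisher-channels} with the spectral-norm SDP. The only care required is the reduction from smooth families to a single pointwise Hermitian operator discussed above, together with noting that the constraint $M_{H_{\theta}}=\partial_{\theta}V_{\theta}-i\left(H_{\theta}\otimes I_{B}\right)V_{\theta}$ is affine in $H_{\theta}$, so that the resulting optimization is indeed a semi-definite program in the variables $\lambda$ and $H_{\theta}$.
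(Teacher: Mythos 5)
Your proposal is correct and follows essentially the same route as the paper, which likewise derives the first equality from Theorem \ref{thm:SLD-Fisher-channels} and the second from the Schur-complement characterization \eqref{eq:SDP-inf-norm-squared-general} of the squared spectral norm. Your extra remark that the infimum over smooth families $\left(H_{\theta}\right)_{\theta\in\Theta}$ collapses to a pointwise infimum over a single Hermitian operator is a welcome clarification that the paper leaves implicit.
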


\begin{proof}
This is a direct consequence of~\eqref{eq:SLD-channels-final-exp-thm}
and~\eqref{eq:SDP-inf-norm-squared-general}, while incorporating
the optimization over $H_{\theta}\in\mathbb{H}$.
\end{proof}
The following SDP is similar to that given in~\cite[Appendix~E]{Kolodynski2013}:
\begin{prop}
For all $n\in\mathbb{N}$, the upper bound in Theorem~\ref{thm:upper-bound-SLD-Fisher-parallel}
for parallel estimation of the smooth channel family $\left(\mathcal{N}_{\theta}\right)_{\theta\in\Theta}$
can be expressed as the following SDP:
\begin{multline}
\inf_{H_{\theta}\in\mathbb{H}}\left\{ \left\Vert M_{H_{\theta}}\right\Vert ^{2}+\left(n-1\right)\left\Vert V_{\theta}^{\dag}M_{H_{\theta}}\right\Vert ^{2}\right\} \\
=\inf_{\substack{\lambda,\mu\geq0,\\
H_{\theta}\in\mathbb{H}
}
}\left\{ \begin{array}{c}
\lambda+\left(n-1\right)\mu:\begin{bmatrix}\lambda I & M_{H_{\theta}}\\
M_{H_{\theta}}^{\dag} & I
\end{bmatrix}\geq0,\ \begin{bmatrix}\mu I & V_{\theta}^{\dag}M_{H_{\theta}}\\
M_{H_{\theta}}^{\dag}V_{\theta} & I
\end{bmatrix}\geq0,\end{array}\right\} ,
\end{multline}
where
\begin{equation}
M_{H_{\theta}}\equiv\partial_{\theta}V_{\theta}-i\left(H_{\theta}\otimes I_{B}\right)V_{\theta}.
\end{equation}
\end{prop}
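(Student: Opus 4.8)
The plan is to reduce everything to the single scalar SDP characterization of a squared spectral norm, namely \eqref{eq:SDP-inf-norm-squared-general}, applied twice. First I would fix an arbitrary Hermitian family member $H_{\theta}\in\mathbb{H}$ and treat the two terms of the objective $\left\Vert M_{H_{\theta}}\right\Vert^{2}+(n-1)\left\Vert V_{\theta}^{\dag}M_{H_{\theta}}\right\Vert^{2}$ separately. Applying \eqref{eq:SDP-inf-norm-squared-general} with $A=M_{H_{\theta}}$ gives
\begin{equation}
\left\Vert M_{H_{\theta}}\right\Vert^{2}=\inf_{\lambda\geq0}\left\{\lambda:\begin{bmatrix}\lambda I & M_{H_{\theta}}\\ M_{H_{\theta}}^{\dag} & I\end{bmatrix}\geq0\right\},
\end{equation}
and applying it again with $A=V_{\theta}^{\dag}M_{H_{\theta}}$ gives the analogous characterization of $\left\Vert V_{\theta}^{\dag}M_{H_{\theta}}\right\Vert^{2}$ in terms of a new variable $\mu\geq0$ and the matrix constraint involving $V_{\theta}^{\dag}M_{H_{\theta}}$.

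Next I would combine the two inner infima. The key observation is that the $\lambda$-block constraint depends only on $M_{H_{\theta}}$ while the $\mu$-block constraint depends only on $V_{\theta}^{\dag}M_{H_{\theta}}$, so the two minimizations are over disjoint variables subject to disjoint constraints. Since $n-1\geq0$, the weighted sum of the two infima equals the joint infimum of the summed objective $\lambda+(n-1)\mu$ over $(\lambda,\mu)$ subject to both semidefinite constraints simultaneously. This produces, for the fixed $H_{\theta}$, exactly the bracketed SDP on the right-hand side of the claimed identity.

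Finally I would take the infimum over every smooth family $\left(H_{\theta}\right)_{\theta\in\Theta}$ of Hermitian operators on both sides; since $H_{\theta}$ appears identically (through $M_{H_{\theta}}$) in both formulations, the outer infima agree, and the left-hand side is precisely the upper bound of Theorem \ref{thm:upper-bound-SLD-Fisher-parallel}. I do not expect any genuine obstacle here, as the argument is a direct double substitution; the only point warranting care is the decoupling step, where one must note that interchanging the sum and the two independent infima is justified precisely because the nonnegative weight $(n-1)$ keeps the combined objective monotone in each block variable and the feasible regions factorize.
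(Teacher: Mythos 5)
Your proposal is correct and follows essentially the same route as the paper, which likewise derives the result as a direct consequence of the squared-spectral-norm SDP characterization in \eqref{eq:SDP-inf-norm-squared-general} applied to each of the two terms, with the optimization over $H_{\theta}\in\mathbb{H}$ incorporated at the end. Your additional remarks on the decoupling of the $\lambda$- and $\mu$-constraints and the role of the nonnegative weight $n-1$ simply make explicit what the paper leaves implicit.
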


\begin{proof}
This is a direct consequence of~\eqref{eq:SDP-inf-norm-squared-general},
while incorporating the optimization over $H_{\theta}\in\mathbb{H}$.
\end{proof}
\begin{prop}
For all $n\in\mathbb{N}$, the upper bound in Theorem~\ref{thm:upper-bound-SLD-Fisher-adaptive}
for adaptive estimation of the smooth channel family $\left(\mathcal{N}_{\theta}\right)_{\theta\in\Theta}$
can be expressed in the following way:
\begin{multline}
\inf_{H_{\theta}\in\mathbb{H}}\left\{ \left\Vert M_{H_{\theta}}\right\Vert ^{2}+\left(n-1\right)\left\Vert M_{H_{\theta}}\right\Vert \left\Vert V_{\theta}^{\dag}M_{H_{\theta}}\right\Vert \right\} =\\
\inf_{\nu\in\left(0,1\right]}\inf_{\substack{\lambda,\mu\geq0,\\
H_{\theta}\in\mathbb{H}
}
}\left\{ \left(1+\frac{n-1}{2}\nu\right)\lambda+\left(\frac{n-1}{2\nu}\right)\mu:\begin{bmatrix}\lambda I & M_{H_{\theta}}\\
M_{H_{\theta}} & I
\end{bmatrix}\geq0,\begin{bmatrix}\mu I & V_{\theta}^{\dag}M_{H_{\theta}}\\
M_{H_{\theta}}^{\dag}V_{\theta} & I
\end{bmatrix}\geq0\right\} ,\label{eq:adaptive-ch-est-up-bnd-efficient}
\end{multline}
where
\begin{equation}
M_{H_{\theta}}\equiv\partial_{\theta}V_{\theta}-i\left(H_{\theta}\otimes I_{B}\right)V_{\theta}.
\end{equation}
\end{prop}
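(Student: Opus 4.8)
The plan is to mirror the proof of Proposition~\ref{prop:upper-bound-adaptive-SDP-1d-search}, replacing the Bures-distance quantities there with their SLD-Fisher analogues. The essential observation is that the cross term $\|M_{H_{\theta}}\|\,\|V_{\theta}^{\dag}M_{H_{\theta}}\|$ is the geometric mean $\sqrt{ab}$ of the two squared spectral norms $a\coloneqq\|M_{H_{\theta}}\|^{2}$ and $b\coloneqq\|V_{\theta}^{\dag}M_{H_{\theta}}\|^{2}$. Before invoking the variational formula for the geometric mean, I would first record that $b\le a$: since $V_{\theta}$ is an isometry we have $\|V_{\theta}^{\dag}\|=1$, and submultiplicativity gives $\|V_{\theta}^{\dag}M_{H_{\theta}}\|\le\|M_{H_{\theta}}\|$, which is exactly the inequality used in the remark following Theorem~\ref{thm:upper-bound-SLD-Fisher-adaptive}. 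This guarantees that the unconstrained minimizer $\nu^{\star}=\sqrt{b/a}$ of $\nu\mapsto\nu a+b/\nu$ lies in $(0,1]$, so that the constrained identity \eqref{eq:geo-mean-opt}, namely $\sqrt{ab}=\tfrac12\inf_{\nu\in(0,1]}\{\nu a+b/\nu\}$, applies verbatim.

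With this in hand, I would write, for each fixed smooth Hermitian family $(H_{\theta})_{\theta\in\Theta}$,
\begin{align}
\|M_{H_{\theta}}\|^{2}+(n-1)\|M_{H_{\theta}}\|\,\|V_{\theta}^{\dag}M_{H_{\theta}}\|
&=a+(n-1)\sqrt{ab}\\
&=\inf_{\nu\in(0,1]}\left\{\left(1+\tfrac{n-1}{2}\nu\right)a+\left(\tfrac{n-1}{2\nu}\right)b\right\},
\end{align}
then take the infimum over $(H_{\theta})_{\theta\in\Theta}$ and exchange it with the infimum over $\nu$, which is trivially valid as both are infima (indeed $\inf_{H_{\theta}}\inf_{\nu}=\inf_{\nu}\inf_{H_{\theta}}$). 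The final step is to substitute the SDP representations of $a$ and $b$ supplied by \eqref{eq:SDP-inf-norm-squared-general} with $A=M_{H_{\theta}}$ and $A=V_{\theta}^{\dag}M_{H_{\theta}}$, respectively; these are precisely the two block matrices appearing on the right-hand side of the claimed identity. Because the coefficients $1+\tfrac{n-1}{2}\nu$ and $\tfrac{n-1}{2\nu}$ are nonnegative for every $\nu\in(0,1]$, and because the two semidefinite constraints couple only through $H_{\theta}$ while involving disjoint scalar variables $\lambda$ and $\mu$, minimizing jointly over $\lambda$ and $\mu$ reproduces exactly the weighted sum $(1+\tfrac{n-1}{2}\nu)a+(\tfrac{n-1}{2\nu})b$, yielding the stated right-hand side.

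The main obstacle I anticipate is bookkeeping rather than conceptual: one must verify that the restriction $\nu\in(0,1]$ in \eqref{eq:geo-mean-opt} is legitimately in force (this is exactly where $b\le a$ is needed and cannot be omitted), and one must check that pushing the SDP substitution inside the $\nu$-infimum is valid, i.e.\ that for fixed $\nu$ and fixed $H_{\theta}$ the two scalar minimizations decouple and each attains its optimum. Since all quantities are genuine infima, the weight on $a$ is strictly positive, and the weight on $b$ is nonnegative, this decoupling is routine; the degenerate case $n=1$, in which the $b$-term vanishes and the second SDP block becomes inactive, should be noted separately. I would close by remarking, as in Remark~\ref{rem:adaptive-ch-disc-opt}, that the outer optimization over $\nu$ is not itself an SDP but is a one-dimensional problem solvable by line search, so the right-hand side remains efficiently computable even though it is not a single semidefinite program.
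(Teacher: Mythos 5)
Your proposal is correct and follows essentially the same route as the paper, which simply defers to the argument of Proposition~\ref{prop:upper-bound-adaptive-SDP-1d-search}: you correctly identify $a=\left\Vert M_{H_{\theta}}\right\Vert ^{2}$ and $b=\left\Vert V_{\theta}^{\dag}M_{H_{\theta}}\right\Vert ^{2}$, justify $b\leq a$ via the remark following Theorem~\ref{thm:upper-bound-SLD-Fisher-adaptive} so that \eqref{eq:geo-mean-opt} applies with $\nu\in\left(0,1\right]$, and then substitute the SDP representations from \eqref{eq:SDP-inf-norm-squared-general}. Your write-up is in fact more explicit than the paper's one-line proof, and the points you flag (the need for $b\leq a$ and the decoupling of the $\lambda$ and $\mu$ minimizations) are exactly the right ones to check.
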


\begin{proof}
The proof follows a similar line of reasoning as given for Proposition
\ref{prop:upper-bound-adaptive-SDP-1d-search}.
\end{proof}
By the same line of reasoning given in Remark~\ref{rem:adaptive-ch-disc-opt},
the optimization in~\eqref{eq:adaptive-ch-est-up-bnd-efficient} can
be computed efficiently by a combination of semi-definite optimization
and one-dimensional grid search. This approach differs from that outlined
in~\cite[Appendix~D]{Kurdzialek2023}.

\section{Conclusion}\label{sec:Conclusion}

In summary, we provided a unified treatment for quantum channel discrimination
and estimation, and we established lower bounds on the error probabilities
and query complexities for these tasks, in both the parallel and adaptive
access models. As highlighted throughout, an advantage of our approach
is that we formulate all statements and proofs in terms of isometric
extensions of quantum channels, so that many proofs rely on basic
properties of isometries and the spectral norm. Along the way, we
also revisited many foundational statements in the quantum estimation
theory literature, providing alternative proofs that may enhance general
understanding of them.

Going forward, there are a number of questions that can be addressed
using our framework. First, can one easily generalize the various
bounds presented here to the energy-constrained case? There has already
been some work on this topic~\cite{Chen2025}, and we suspect that
the operator $E$-norm of~\cite{Shirokov2020}, combined with our
approach here, could be useful for this purpose. Additionally, we
wonder about extending the results to obtain lower bounds on the query
complexity of multiple channel discrimination. Bounds for this task
are available in~\cite[Theorem~19]{Nuradha2025}, but the approach
of our paper, possibly combined with the notion of multivariate fidelities
\cite{Nuradha2025a}, would presumably lead to different bounds.

\medskip{}

\textit{Acknowledgements}---ZH is supported by an ARC DECRA Fellowship
(DE230100144) ``Quantum-enabled super-resolution imaging'' and an
RMIT Vice Chancellor's Senior Research Fellowship. She is also
grateful to the Cornell School of Electrical and Computer Engineering for hospitality during an October 2025 research visit, as well as the
Cornell Lab of Ornithology for the Merlin app, which has enabled many hours of fruitful birding. 
JJM acknowledges
support from QuantERA (HQCC).
TN acknowledges support from the
Department of Mathematics and the IQUIST Postdoctoral Fellowship from
the Illinois Quantum Information Science and Technology Center at
the University of Illinois Urbana-Champaign. MMW acknowledges support
from the National Science Foundation under grant no.~2329662 and
from the Cornell School of Electrical and Computer Engineering.

\bibliographystyle{alphaurl}
\phantomsection\addcontentsline{toc}{section}{\refname}\bibliography{Ref}

\end{document}